\newcommand{\RN}[1]{%
  \textup{\uppercase\expandafter{\romannumeral#1}}%
}
\def\BibTeX{{\rm B\kern-.05em{\sc i\kern-.025em b}\kern-.08em
    T\kern-.1667em\lower.7ex\hbox{E}\kern-.125emX}}
\definecolor{darkspringgreen}{rgb}{0.09, 0.45, 0.27}
\newcommand{\red}[1]{#1}
\definecolor{ao}{rgb}{0.12, 0.3, 0.17}
\newcommand{\boldparagraph}[1]{\vskip 0.05in\noindent\textbf{#1.}}
\newcommand{\italicparagraph}[1]{\vskip 0.05in\noindent\textit{(#1).}}
\newtheorem{theorem}{Theorem}
\newtheorem{lemma}{Lemma}
\newtheorem{sop}{Sketch of Proof}
\newtheorem{corollary}{Corollary}
\theoremstyle{definition}
\newtheorem{definition}{Definition}
\newtheorem{remark}{Remark}
\newtheorem{example}{Example}
\definecolor{turquoisegreen}{rgb}{0.63, 0.84, 0.71}
\definecolor{tuftsblue}{rgb}{0.28, 0.57, 0.81}
\definecolor{atomictangerine}{rgb}{1.0, 0.6, 0.4}
\definecolor{lightapricot}{rgb}{0.99, 0.84, 0.69}
\definecolor{babypink}{rgb}{0.96, 0.76, 0.76}
\definecolor{mistyrose}{rgb}{1.0, 0.89, 0.88}
\definecolor{moccasin}{rgb}{0.98, 0.92, 0.84}
\definecolor{teagreen}{rgb}{0.82, 0.94, 0.75}
\definecolor{lavender}{rgb}{0.9, 0.9, 0.98}
\definecolor{mygray}{rgb}{0.5,0.5,0.5}
\definecolor{mymauve}{rgb}{0.1,0.2,0.7}
\definecolor{olivegreen}{cmyk}{.6,.4,0.8,0}
\definecolor{eclipse}{RGB}{127,0,85}
\algrenewcommand\ALG@beginalgorithmic{\scriptsize}
\newcommand{\dsl}{\texttt{\small SLEEC DSL}\xspace}
\newcommand{\CRRI}{WFI\xspace}
\newcommand{\upCRRI}{Well-formedness\xspace}
\newcommand{\fol}{FOL$^*$\xspace}
\newcommand{\lego}{\texttt{LEGOS}\xspace}
\newcommand{\tool}{{\texttt{LEGOS-SLEEC}}\xspace}
\newcommand{\approach}{{N-Check}\xspace}
\newcommand{\quoted}[1]{``{#1}''}
\newcommand{\sig}{S}
\newcommand{\nNFR}{N-NFR}
\newcommand{\nNFRs}{N-NFRs}
\newcommand{\andS}{\textbf{and}}
\newcommand{\orS}{\textbf{or}}
\newcommand{\notS}{\textbf{not}\xspace}
\newcommand{\translate}[1]{T(#1)}
\newcommand{\sctranslate}[1]{\textsc{TSC}(#1)}
\newcommand{\timedtranslate}[1]{\textsc{T}_{\downarrow}(#1)}
\newcommand{\timedtranslatestar}[1]{\textsc{T}_{\downarrow}^{*}(#1)}
\newcommand{\within}[2]{#1\, \textbf{within} \,#2}
\newcommand{\rulesyntax}[2]{\textbf{when}\, #1 \, \textbf{then}  \,#2}
\newcommand{\factsyntax}[2]{\textbf{exists}\, #1 \, \textbf{while}  \,#2}
\newcommand{\lang}[1]{\mathcal{L}(#1)}
\newcommand{\noncompliant}[1]{\textsc{Noncomp}(#1)}
\newcommand{\triggerfunc}[1]{\textsc{Trig}(#1)}
\newcommand{\concernFact}{\textit{c}}
\newcommand{\purposeFact}{\textit{p}}
\newcommand{\normalize}{\textsc{Norm}\xspace}
\newcommand{\Class}[1]{C^{#1}}
\newcommand{\obj}[1]{o^{#1}}
\newcommand{\forced}[3]{\textsc{Forced}(#1, #2, #3)}
\newcommand{\blocked}[3]{\textsc{Blocked}(#1, #2, #3)}
\newcommand{\intblocked}[3]{\textsc{\_Blocked}(#1, #2, #3)}
\newcommand{\triggered}[2]{\textsc{Triggered}(#1, #2)}
\newcommand{\fullfilled}[3]{\textsc{Fulfilled}(#1, #2, #3)}
\newcommand{\violated}[3]{\textsc{Violated}(#1, #2, #3)}
\newcommand{\act}[3]{\textsc{Active}(#1, #2, #3)}
\newcommand{\obgbyhead}[1]{OBG(#1)}
\newcommand{\events}{E}
\newcommand{\measures}{M}
\newcommand{\ruleset}{\textit{Rules}}
\newcommand{\fos}{\sigma}
\newcommand{\term}{t}
\newcommand{\constant}{c}
\newcommand{\measure}{m}
\newcommand{\event}{e}
\newcommand{\obg}{ob}
\newcommand{\cobg}{cob}
\newcommand{\otherwise}{\textbf{ otherwise }}
\newcommand{\sand}{\textbf{ and }}
\newcommand{\obghead}{h}
\newcommand{\prop}{p}
\newcommand{\aprop}{ap}
\newcommand{\aelement}{ae}
\newcommand{\dobg}{\bigvee_{\cobg}}
\newcommand{\srule}{r}
\newcommand{\fact}{f}
\newcommand{\facts}{Facts}
\newcommand{\eventocc}[1]{\mathcal{E}_{#1}}
\newcommand{\measureassign}[1]{\mathbb{M}_{#1}}
\newcommand{\timestamp}[1]{\delta_{#1}}
\newcommand{\domain}{D}
\newcommand{\variables}{V}
\begin{document}

\copyrightyear{2024}
\acmYear{2024}
\setcopyright{acmlicensed}\acmConference[ICSE '24]{2024 IEEE/ACM 46th International Conference on Software Engineering}{April 14--20, 2024}{Lisbon, Portugal}
\acmBooktitle{2024 IEEE/ACM 46th International Conference on Software Engineering (ICSE '24), April 14--20, 2024, Lisbon, Portugal}
\acmDOI{10.1145/3597503.3639093}
\acmISBN{979-8-4007-0217-4/24/04}
% space above and below EQ
\setlength{\abovedisplayskip}{1pt}
\setlength{\belowdisplayskip}{1pt}

% Affiliations
\newcommand{\affilUT}{\affiliation{%
  \institution{University of Toronto}
  \city{Toronto}
  \country{Canada}
}}
\newcommand{\affilUY}{\affiliation{%
  \institution{University of York}
  \city{York}
  \country{UK}
}}
\newcommand{\affilSC}{\affiliation{%
  \institution{Smith College}
  \city{Northampton}
  \country{USA}
}}
\newcommand{\affilUB}{\affiliation{%
  \institution{University of Brasilia}
  \city{Brasilia}
  \country{Brazil}
}}

\author[N. Feng]{Nick Feng}
\email{fengnick@cs.toronto.edu}
\affilUT

\author[L. Marsso et al.]{Lina Marsso}
\email{lina.marsso@utoronto.ca}
\affilUT

\author[ ]{Sinem Getir Yaman}
\email{sinem.getir.yaman@york.ac.uk}
\affilUY

\author[ ]{Beverley Townsend}
\email{bev.townsend@york.ac.uk}
\affilUY

\author[ ]{Yesugen Baatartogtokh}
\email{ybaatartogtokh@smith.edu}
\affilSC

\author[ ]{Reem Ayad}
\email{reem.ayad@mail.utoronto.ca}
\affilUT

\author[ ]{Victoria Oldemburgo de Mello}
\email{victoria.mello@mail.utoronto.ca}
\affilUT

\author[ ]{Isobel Standen}
\email{isobel.standen@york.ac.uk}
\affilUY

\author[ ]{Ioannis Stefanakos}
\email{ioannis.stefanakos@york.ac.uk}
\affilUY

\author[ ]{Calum Imrie}
\email{calum.imrie@york.ac.uk}
\affilUY

\author[ ]{Genaina Rodrigues}
\email{genaina@unb.br}
\affilUB

\author[ ]{Ana Cavalcanti}
\email{ana.cavalcanti@york.ac.uk}
\affilUY

\author[ ]{Radu Calinescu}
\email{radu.calinescu@york.ac.uk}
\affilUY

\author[ ]{Marsha Chechik}
\email{chechik@cs.toronto.edu}
\affilUT

%\input{rebuttals}
%\newpage

%\maketitle
%\thispagestyle{plain} % TODO: remove
%\pagestyle{plain} % TODO: remove

\title[Analyzing and Debugging Normative Requirements via Satisfiability Checking]{
Analyzing and Debugging Normative Requirements \\ via Satisfiability Checking
}
%\title[Supplementary Material for {ICSE} Submission: Analyzing and Debugging Normative Requirements via Satisfiability Checking]{Supplementary Material for {ICSE} Submission: Analyzing and Debugging Normative Requirements via Satisfiability Checking}
%\vspace{-0.4in}
\author{}

\begin{abstract} 
As software systems increasingly interact with humans in application domains such as transportation and healthcare, they raise concerns related to the social, legal, ethical, empathetic, and cultural~(SLEEC) norms and values of their stakeholders. \emph{Normative non-functional requirements} (\nNFRs) are used to capture these concerns by setting SLEEC-relevant boundaries for system behavior. Since \nNFRs{} need to be specified by multiple stakeholders with widely different, non-technical expertise (ethicists, lawyers, regulators, end users, etc.), \nNFR{} elicitation is very challenging. To address this difficult task, we introduce \approach, a novel tool-supported formal approach to \nNFR{} analysis and debugging. \approach employs satisfiability checking to identify a broad spectrum of \nNFR{} well-formedness  issues, such as conflicts, redundancy, restrictiveness, and insufficiency, yielding diagnostics that pinpoint their causes in a user-friendly way that enables non-technical stakeholders to understand and fix them. We show the effectiveness and usability of our approach through nine case studies in which teams of ethicists, lawyers, philosophers, psychologists, safety analysts, and engineers used \approach{} to analyse and debug 233 \nNFRs{}, comprising 62 issues for the software underpinning the operation of systems, such as, assistive-care robots and tree-disease detection drones to manufacturing collaborative robots.
\end{abstract}
%\keywords{RE, SE for ML,  Software Analysis.}

\maketitle 

\vspace{-0.1in}
\section{Introduction}
\label{sec:introduction}
% Context: normative requirements
Software systems interacting with humans are becoming prevalent in domains such as transportation~\cite{Pham-et-al-16,Nagatani-et-al-21}, %and 
healthcare~\cite{Salceda-05,Laursen-et-al-22}, and assistive care%; an example is helper robots that aim to assist patients with dressing
~\cite{jevtic2018personalized}. The deployment of such systems raises significant concerns related to the  
\emph{social, legal, ethical, empathetic, and cultural} (SLEEC) impact of their operation, %(referred to as SLEEC), 
e.g., the control software of an assistive-dressing robot~\cite{jevtic2018personalized} can accidentally expose a patient's privacy when dressing them in a room with the curtains open. 
%that must be addressed during development~\cite{townsend-et-al-2022, Bremner2019DFW, Burton2020HMP}. 
% example
% For instance, to avoid that a dressing robot reveals a patient's personal information. 
%For instance, a helper robot might accidentally reveal patients' personal information without their consent.
To capture SLEEC concerns effectively, 
researchers~\cite{townsend-et-al-2022, Bremner2019DFW, Burton2020HMP,Anderson_Anderson_2007} have proposed to use \emph{normative non-functional requirements} (\nNFRs) %, i.e., non-functional requirements 
that constrain the acceptable range of system behaviors.
%to exclude these concerns effectively.
%To tackle these concerns effectively, \emph{normative requirements} (\nNFRs) are defined, which are non-functional system requirements (NFRs) that restrict the acceptable range of system behavior to exclude the concerns. 
% example
For instance, \nNFRs{} for the assistive-dressing robot may specify that the user's well-being and privacy are prioritized by promptly completing the dressing task, especially when the patient is underdressed. 
\nNFRs{} are used to inform all subsequent software engineering (SE) activities, such as design and verification. 

% SLEEC DSL
Prior research recommended using \dsl, a state-of-the-art~(SoTA) domain-specific language~\cite{KOSAR201677}, for specifying \nNFRs{} as \dsl{} rules~\cite{Getir-Yaman-et-al-23,Getir-Yaman-et-al-23b}. \dsl{} is very close to natural language and has proven accessible to stakeholders without a technical background (e.g., lawyers, ethicists, and regulators). A \dsl{} rule defines the required system \emph{response} to specific \emph{events}, and can be accompanied by one or more \emph{defeaters}, which  %, introduced using the `\textbf{unless}' construct. Defeaters 
 specify circumstances that preempt the original response and can optionally provide an alternative response. For instance, a normative rule $r1$ for the assistive-care robot from our earlier example %a helper robot that supports dressing tasks for elderly individuals 
can be encoded as ``\textbf{when} \texttt{DressingRequest} \textbf{then} \texttt{DressingStarted} \textbf{within} 10 \textbf{minutes} \textbf{unless} \texttt{curtainsOpen}%\textbf{or} \texttt{doorsOpened} \textbf{then} \texttt{Close} \textbf{within} 2 \textbf{minutes} \textbf{unless} \textbf{not} \texttt{personAssent}
''.  Here, \texttt{DressingRequest} is an event, \texttt{DressingStarted} \textbf{within} 10 \textbf{minutes} is a response, and \texttt{curtainsOpen} is a defeater with a condition but no response. % {\bf MC:  what is Assent?}

%It is therefore important to carefully elicit them, and to validate them at an early stage, to avoid the propagation of mistakes.

% NFRs challenges
%\hlTodo{(Todo: Focus on complexity of N-NFRs, and tell that they need to be evaluated, we start with no-domain no-application specific validation, namely ...)}

%\emph{Normative requirements}, as a particular class of NFR, are also 
\nNFRs{} are typically defined by system stakeholders with widely different expertise (e.g., ethicists, lawyers, regulators, and system users). As such, they are susceptible to complex and often subtle \emph{well-formedness issues} (WFI) (i.e., issues that are application- and domain-independent) such as conflicts, redundancies, insufficiencies, overrestrictions, and many others~\cite{Shah-et-al-21, Lamsweerde-Darimont-Letier-98, Matsumoto-et-al-17, Lamsweerde-2009}.  %Well-formedness issues do not require to be are application- and domain-independent and thus are easy for stakeholders to 
%Multiple stakeholders have 
In particular, the different objectives, concerns, responsibilities, and priorities of these stakeholders %. Moreover, objectives 
are sometimes incompatible, leading to logical inconsistency in all circumstances, or are subject to boundary conditions, rendering the requirements infeasible  given their inconsistent prescription of event occurrences~\cite{Lamsweerde-2009}. Therefore, resolving such \CRRI is essential for ensuring the validity of \nNFRs{}. For instance, identifying and removing unnecessary redundant \nNFRs{} simplifies the validation and review process. Additionally, one must ensure that the \nNFRs{} are neither overly permissive in
% sufficiently strong to 
addressing the relevant SLEEC concerns, nor overly restrictive and preventing the system from achieving its functional requirements. %In the past, formal techniques have paved the way for NFR inconsistency checking~\cite{Manna-Pnueli-92, Owre-Rushby-Shankar-95, Lamsweerde-Darimont-Letier-98}.

%Thus, \emph{inconsistency} issues such as \emph{conflicts}~\cite{Shah-et-al-21}, \emph{redundancies}, or \emph{insufficiencies}~\cite{Paech-et-al-04,Matsumoto-et-al-17,feng-et-al-23-b,Getir--et-al-23} can arise during the elaboration of  NFRs, and support for addressing this has been proposed~\hlTodo{cite}.
%{\bf GR: I could not find reference \cite{Paech-et-al-04}}
% {\bf MC:  and now something about getting rid of conflicts}

%Therefore, resolving conflicts is essential for ensuring the validity of NFRs. Identifying or removing  unnecessary redundant \red{NFRs} simplifies the validation and reviewing process and decreases the risk of inconsistency if one \red{NFR} is modified but the redundant one is not~\cite{Matsumoto-et-al-17}. Additionally, one must ensure that the NFRs are \textit{sufficiently} strong to address relevant concerns while not overly \emph{restrictive} to prevent the system from achieving its initial objectives. Formal techniques have paved the way for NFR inconsistency checking~\cite{MannaPnueli, Owre1995, Lamsweerde-Darimont-Letier-98}. To do so, inconsistent statements must be formalized in a logic-based specification language. 

% N-NFRs complexity 
 \nNFRs{} present two additional layers of complexity compared to ``regular'' non-functional requirements~(NFRs).
 Firstly, their operationalisation includes challenges due to the involvement of stakeholders who often lack technical expertise.  While this problem is shared with NFRs in general, \nNFR{} stakeholders include experts with very different backgrounds:  lawyers, social workers, ethics experts, etc., and conflicting or redundant requirements can arise from different norms and factors~\cite{townsend-et-al-2022} even if elicited by a single stakeholder!
 Secondly, the task becomes even more complex when capturing intricate non-monotonic conditions expressed via \emph{if, then, unless} and time constraints (e.g., \textit{act within 1 week}), which are often abundant in \nNFRs~\cite{knoks2020,brunero2022,townsend-et-al-2022,feng-et-al-23-b}. The presence of non-monotonic conditions leads to branching choices in reasoning, and the presence of time constraints forces the reasoning to consider events at different time frames.
For example, consider a helper robot with the \nNFR{} of ``protecting patients' privacy'', which might \emph{conflict} with a ``patient claustrophobia'' \nNFR{} if the patient insists on keeping the curtains open while being dressed. 
Additionally, a social \nNFR{} may require respecting patient privacy, while a legal \nNFR{} may specify adherence to data protection laws, potentially creating a redundancy. Resolving these issues may lead to \nNFRs{} that no longer align with the stakeholders goals, either being overly restrictive or insufficient to address SLEEC concerns comprehensively.
%, such as overlooking important empathetic concerns, e.g.,  not providing empathetic communication to comfort the patient.}
% {\bf MC:  I do not follow.  I also feel that this paragraph can be much tigher} \green{\textbf{\underline{Simplified}}}
% Existing work
%\boldparagraph{Existing work on \nNFR{} \CRRI}
% intro SLEEC DSL
Existing work has analyzed certain well-formedness issues, such as redundancy and conflicts, but is limited to interactions between two \dsl rules~\cite{Getir-Yaman-et-al-23}, ignoring specific contexts~\cite{feng-et-al-23-b},
%as they do not consider larger rule sets or broader contexts, leading to missed errors – 
%as illustrated in the example below.
%For instance, consider three \nNFRs{} expressed as \dsl rules: ($r2$) ``\textbf{when} \texttt{UserFallen} \textbf{then} \texttt{CallSupport}'', ($r3$) ``\textbf{when} \texttt{UserFallen} \textbf{then} \texttt{ProvideCompanionship} \textbf{unless} \texttt{patientNotDeaf}'', and ($r4$) ``\textbf{when} \texttt{CallSupport} \textbf{then} \textbf{not} \texttt{ProvideCompanionship} \textbf{within} 30 \textbf{minutes}''. In this case,  $r4$ (`do not provide companionship when you call support within 30 minutes') conflicts with $r3$ (`providing companionship') in the situation where a deaf patient has fallen ($r2$). 
%This conflict could not be detected using the existing approaches~\cite{feng-et-al-23-b,Getir-Yaman-et-al-23}, although it represents a genuine threat to the validity of \nNFRs.
%Existing approaches also 
or offering limited debugging assistance that requires some technical expertise, e.g., familiarity with the output format of a  model checker.

%do not offer debugging assistance.
%In this context, by considering the full set of \nNFRs{} with broader context, we would identify be able to detect the conflict in the previous example. 

As a starting point, in this paper, we propose an approach to the analysis and debugging of \nNFRs{} %\CRRI 
by checking four well-formedness constraints: redundancy, conflicts (accounting for different contexts), restrictiveness, and insufficiency. 
% checking a broad range of well-formedness constraints, including redundancy, conflicts (accounting for different contexts), restrictiveness, and insufficiency. 
To the best of our knowledge, our solution is the first to provide such an extensive analysis and debugging approach for \nNFRs. Our approach is built on top of satisfiability checking for first-order logic with quantifiers over relational objects (\fol)~\cite{feng-et-al-23}. \fol has been shown effective for modeling requirements that define constraints over unbounded time and data domains~\cite{feng-et-al-23}, and so is well suited for capturing 
metric temporal constraints over actions and (unbounded)
environmental variables, referred to as \emph{measures}, in \dsl.  Moreover, the proof of \fol unsatisfiability enables us to provide diagnosis for the causes of \nNFRs{} WFIs. Capturing  \dsl with \fol  also enables other logic-based analysis such as deductive verification. 
\pdfoutput=1
\tikzset{
    stepop/.style={draw,dotted,rectangle,rounded corners,minimum width=1.5cm,minimum height=0.9cm,align=center,text width=3.3cm,fill=tuftsblue!30,font=\small\sffamily},
    stepopl/.style={draw,dotted,rectangle,rounded corners,minimum width=1.5cm,minimum height=0.9cm,align=center,text width=4cm,fill=tuftsblue!30,font=\small\sffamily},
    rndoutputitt/.style={draw,rectangle,rounded corners,minimum width=2cm,minimum height=1.5cm,align=center,text width=2.4cm,fill=tuftsblue!20,font=\small\sffamily},
    rndoutputi/.style={draw=blue!20,rectangle,rounded corners,minimum width=2cm,minimum height=1.2cm,align=center,text width=2.8cm,fill=blue!20,font=\sffamily},
    rndoutputisl/.style={draw=blue!20,rectangle,rounded corners,minimum width=2cm,minimum height=1.2cm,align=center,text width=3.4cm,fill=blue!20,font=\sffamily},
    rndoutputis/.style={draw,rectangle,rounded corners,minimum width=2cm,minimum height=0.9cm,align=center,text width=3.2cm,fill=tuftsblue!30,font=\small\sffamily},
    rndoutputib/.style={draw,rectangle,rounded corners,minimum width=2cm,minimum height=0.9cm,align=center,text width=3.5cm,fill=tuftsblue!30,font=\small\sffamily},
    rndoutputibse/.style={draw,rectangle,rounded corners,minimum width=2cm,minimum height=0.9cm,align=center,text width=2.2cm,fill=tuftsblue!30,font=\small\sffamily},
    rndoutputibnl/.style={draw,rectangle,rounded corners,minimum width=2cm,minimum height=0.9cm,align=center,text width=4.2cm,fill=tuftsblue!30,font=\small\sffamily},
    rndoutputibp/.style={draw,rectangle,rounded corners,minimum width=2cm,minimum height=0.9cm,align=center,text width=2.45cm,fill=tuftsblue!30,font=\small\sffamily},
    rndoutputibc/.style={draw,line width=1pt,rectangle,rounded corners,minimum width=2cm,minimum height=0.9cm,align=center,text width=4.3cm,dotted,fill=gray!10,font=\small\sffamily},
    rndoutputibcs/.style={draw,line width=1pt,rectangle,rounded corners,minimum width=2cm,minimum height=0.9cm,align=center,text width=2.8cm,dotted,fill=gray!10,font=\small\sffamily},
    rndoutputl/.style={draw,rectangle,rounded corners,minimum width=0.45cm,minimum height=0.4cm,align=center,text width=0.75cm,fill=tuftsblue!30,font=\sffamily},
    rndoutputlg/.style={draw,rectangle,rounded corners,minimum width=0.45cm,minimum height=0.4cm,align=center,text width=0.75cm,line width=1pt,fill=gray!10,font=\sffamily},
    rndoutput/.style={draw,rectangle,rounded corners,minimum width=2cm,minimum height=1.5cm,align=center,text width=2cm,fill=tuftsblue!30,font=\small\sffamily},
    extact/.style={draw,dashed,rectangle,rounded corners,minimum width=2cm,minimum height=1.5cm,align=center,text width=3.4cm,dotted,fill=gray!50,font=\small\sffamily},
    recnorm/.style={draw,minimum width=0.6cm,minimum height=0.65cm,align=center,text width=0.65cm,dotted,fill=lightgray!50,font=\small\sffamily},
    recnormw/.style={draw,minimum width=0.6cm,dashed,minimum height=0.65cm,align=center,text width=0.65cm,fill=white,font=\sffamily},
    reclong/.style= {rounded corners=0.5cm,draw,minimum width=2cm,minimum height=1.5cm,align=center,text width=4cm,dotted,fill=gray!35,font=\sffamily},
    prog/.style={draw,dashed,tape,tape bend top=none,align=center,text width=2cm,fill=turquoisegreen!50,font=\small\sffamily},
    progl/.style={draw,tape,tape bend top=none,align=center,text width=0.7cm,fill=turquoisegreen!50,font=\small\sffamily},
    artifact/.style={draw,trapezium,trapezium left angle=75,trapezium right angle=105,text width=2.2cm,fill=yellow!35,align=center,font=\sffamily},
    artifactm/.style={draw,trapezium,trapezium left angle=83,trapezium right angle=98,text width=2.7cm,fill=cookie!35,align=center,font=\small\sffamily},
    artifacti/.style={draw,trapezium,trapezium left angle=83,trapezium right angle=98,text width=2.3cm,fill=cookie!35,align=center,font=\small\sffamily},
    artifactitts/.style={draw,trapezium,trapezium left angle=83,trapezium right angle=98,text width=1.6cm,fill=cookie!50,align=center,font=\small\sffamily},
    artifactitt/.style={draw,trapezium,trapezium left angle=83,trapezium right angle=98,text width=2cm,fill=cookie!50,align=center,font=\small\sffamily},
    artifactittsss/.style={draw,trapezium,trapezium left angle=83,trapezium right angle=98,text width=1.7cm,fill=cookie!50,align=center,font=\small\sffamily},
    artifactittsssp/.style={draw,trapezium,trapezium left angle=83,trapezium right angle=98,text width=1.35cm,fill=cookie!50,align=center,font=\small\sffamily},
    artifactittlsi/.style={draw,trapezium,trapezium left angle=83,trapezium right angle=98,text width=2.55cm,fill=lightapricot!60,align=center,font=\small\sffamily},
    artifactittls/.style={draw,trapezium,trapezium left angle=83,trapezium right angle=98,text width=2.55cm,fill=cookie!50,align=center,font=\small\sffamily},
    artifactittlms/.style={draw,trapezium,trapezium left angle=83,trapezium right angle=98,text width=2.7cm,fill=cookie!50,align=center,font=\small\sffamily},
    artifactittl/.style={draw,trapezium,trapezium left angle=83,trapezium right angle=98,text width=3cm,fill=cookie!50,align=center,font=\small\sffamily},
    artifactittlsl/.style={draw,trapezium,trapezium left angle=83,trapezium right angle=98,text width=3.2cm,fill=cookie!50,align=center,font=\small\sffamily},
    artifactittll/.style={draw,trapezium,trapezium left angle=83,trapezium right angle=98,text width=3.6cm,fill=cookie!50,align=center,font=\small\sffamily},
    artifactittlll/.style={draw,trapezium,trapezium left angle=83,trapezium right angle=98,text width=3.8cm,fill=cookie!50,align=center,font=\small\sffamily},
    artifactittlls/.style={draw,trapezium,trapezium left angle=83,trapezium right angle=98,text width=3.15cm,fill=cookie!50,align=center,font=\small\sffamily},
    artifacto/.style={draw,trapezium,trapezium left angle=79,trapezium right angle=100,text width=1.5cm,fill=cookie!35,align=center,font=\small\sffamily},
    artifactoi/.style={draw,trapezium,trapezium left angle=79,trapezium right angle=100,text width=1.5cm,fill=lightgray!50,align=center,font=\small\sffamily},
    artifacts/.style={draw,dashed,trapezium,trapezium left angle=79,trapezium right angle=100,text width=3cm,fill=gray!50,align=center,font=\small\sffamily},
    artifactl/.style={draw,trapezium,minimum height=0.35cm,trapezium left angle=75,trapezium right angle=105,text width=0.3cm,fill=white,align=center,font=\sffamily},
    artifactvt/.style={draw,trapezium,trapezium left angle=83,trapezium right angle=98,text width=1cm,fill=cookie!35,align=center,font=\small\sffamily},
    database/.style={draw,cylinder,cylinder uses custom fill,shape border rotate=90,minimum height=1.5cm, minimum width=2.5cm,aspect=0.3},
}

\begin{figure}[t]
  \centering
  \scalebox{0.6}{
  \begin{tikzpicture}
  
  % border
  \filldraw[draw=lightgray!60,rounded corners,fill=lightgray!10] (4.6,-3.2) rectangle (-8.8,5.68);
  
  % ---- images + legend
  % -stakeholders
  \node[inner sep=0pt] (stakeholders) at (-7,3) {\includegraphics[scale=.15]{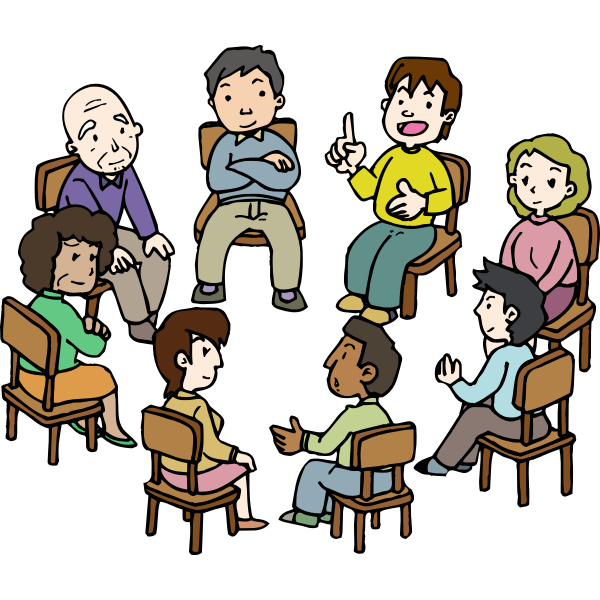}};
  \node[rotate=0,font=\sffamily] at (-7, 1.32) {\textbf{Non-technical and}}; % legend
  \node[rotate=0,font=\sffamily] at (-7, 1.05) {\textbf{ technical stakeholders}};
  \node[rotate=0,font=\small\sffamily] (stakeholdersLeft) at (-5.3,2) {~}; % node-left
  \node[rotate=0,font=\small\sffamily] (stakeholdersRight) at (-4.9,2.6) {~}; % node-right
  \node[rotate=0,font=\small\sffamily] (stakeholdersBelow) at (-6.8, 0.5) {~}; % node-below

  % CRRI property
  % box
  \node[draw=babypink,fill=babypink!30,rectangle,minimum width=4cm,minimum height=2cm,font=\small\sffamily,line width=0.15mm] (property) at (-2,4.5) {}; 
   % legend
  \node[rotate=0,font=\sffamily] at (-2,3.2)  {\textbf{\upCRRI properties}};
  % vacous
  \node[rotate=0,font=\sffamily] at (-3.2,5.2)  {$\square$ vacuous};
  \node[rotate=0,font=\sffamily] at (-3.15,4.9)  {~conflict};
  % situational
  \node[rotate=0,font=\sffamily] at (-1.2,5.2)  {$\square$ situational};
  \node[rotate=0,font=\sffamily] at (-1.32,4.9)  {~conflict};
  % redundant
  \node[rotate=0,font=\sffamily] at (-3.05,4.4)  {$\square$ redundancy};
  % insufficiency
  \node[rotate=0,font=\sffamily] at (-1.08,4.4)  {$\square$ insufficiency};
  % restrictiveness
  \node[rotate=0,font=\sffamily] at (-2.8,3.9)  {$\square$ restrictiveness};

  % -NNFRs
  % file
  \node[inner sep=0pt] (russell) at (-2.7,1.8) {\includegraphics[scale=.1]{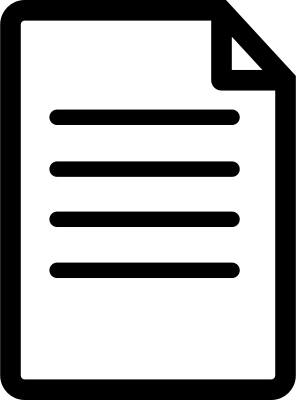}};
  % legend
  \node[rotate=0,font=\sffamily] at (-2.5,0.65) {\textbf{Normative requirements}};
  \node[rotate=0,font=\sffamily] at (-2.7,0.3) {(SLEEC \textit{Rules})};
  % nodes
  \node[rotate=0,font=\small\sffamily] (rulesLeft) at (-3.5,1.6) {~}; % node
  \node[rotate=0,font=\small\sffamily] (rulesRight) at (-1.9,1.5) {~}; % node
  
  %diagnosis
  %boxes
    \node[draw,fill=white,rectangle,minimum width=6.1cm,minimum height=2.3cm,font=\small\sffamily,line width=0.15mm] (diagn) at (-3.3,-1.2) {};
    \node[draw,fill=white,rectangle,minimum width=6.1cm,minimum height=2.3cm,font=\small\sffamily,line width=0.15mm] (diag2) at (-3.2,-1.3) {};
    \node[draw,fill=white,rectangle,minimum width=6.1cm,minimum height=2.3cm,font=\small\sffamily,line width=0.15mm] (diag) at (-3.1,-1.4) {};
   % screenshot
   \node[inner sep=0pt] (diag1) at (-3.1,-1.5) {\includegraphics[scale=.23]{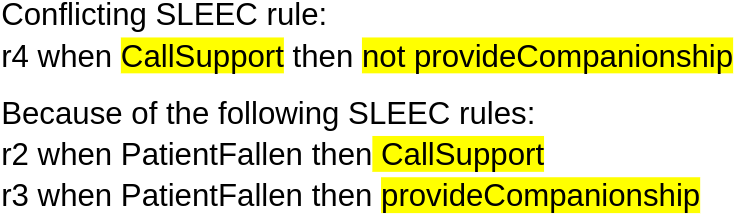}};
   % legend
  \node[rotate=0,font=\sffamily] at (-3.1,-2.9) {\textbf{Diagnosis for \CRRI}};

  % step 1: Approach
  \node[database] (satisfiability) at (2.3,4.8) {~};
  %\node[rndoutputisl] (satisfiability) at (2.3,4.8) {1. Turn \CRRI into a SAT problem};
  % legend
  \node[rotate=0,font=\sffamily] at (2.3,5.2) {\upCRRI};
  \node[rotate=0,font=\sffamily] at (2.3,4.8) {to SAT problem};
  \node[rotate=0,font=\sffamily] at (2.3,4.4) {mapping rules};

  % LEGOS-SLEEC
  % big box
  \node[draw=gray!60,rectangle,fill=lavender!70,rounded corners,minimum width=4.1cm,minimum height=5.6cm,font=\small\sffamily,line width=0.15mm] (approach) at (2.3,0.5) {};
  % name of the tool
  \node[rotate=0,font=\sffamily] at (2.95,3) {\textbf{\tool}};
  % step 2
  \node[rndoutputisl] (translate) at (2.3,2) {1. Translate SLEEC \textit{Rules} $to$ \fol};
  % step 3
  \node[rndoutputisl] (check) at (2.3,0.3) {2. Check \fol \CRRI satisfiability};
  % step 4
  \node[rndoutputisl] (buildDiag) at (2.3,-1.5) {3. Compute diagnosis via causal (un)satisfiability proof analysis};
  
  % cross
  %\draw[line width=0.4mm] (8.25,0.87) -- (10.2,-0.9);
  %\draw[line width=0.4mm] (8.25,-0.9) -- (10.2,0.87);
  
  % edges + legend
  \draw[->,line width=0.45mm] (stakeholdersLeft) to  [bend left=12](rulesLeft);
  \node[rotate=0,font=\sffamily] at (-4.5,2.25) {specify};
  \draw[->,line width=0.45mm] (diagn) to  [bend left=12] (stakeholdersBelow);
  %\node[rotate=0,font=\sffamily] at (-5.6,-2.3) {analyze the issue};

  % simple edges
  \draw[-,line width=0.45mm,gray!60]  (property) to (satisfiability);
  \draw[->,line width=0.45mm]  (satisfiability) to (approach);
  \draw[->,line width=0.45mm] (translate) to (check);
  \draw[->,line width=0.45mm] (check) to (buildDiag);
  \draw[->,line width=0.45mm] (rulesRight) to [bend left=12](translate);
  \draw[->,line width=0.45mm] (buildDiag) to (diag);
  \end{tikzpicture}
  }
  \vspace{-0.1in}
\caption{{\small Overview of the \approach{} approach.}}
\label{fig:overview}	
\vspace{-0.10in}
\end{figure}

% Contributions statement
\vspace{-0.05in}
\boldparagraph{Contributions}
We contribute
 (a) a method for identifying well-formedness issues (\CRRI)  in \nNFRs{} expressed in \dsl by converting them into satisfiability checking problems;
(b) a method for generating a non-technical diagnosis that pinpoints the causes of \CRRI  by analyzing the causal proof of (un)satisfiability;
 (c) a validation of the effectiveness of the approach
 over nine cases studies and 233 rules written by eight non-technical and technical stakeholders.  
 %Technical and non-technical stakeholders collaborated in writing the normative requirements for each case study. 
% The stakeholders found \red{A}\% of the identified \CRRI issues to be relevant, and our diagnosis allowed them to resolve \red{B}\% of these issues. The artifacts are available at~\cite{ICSE-Artifact}.}
%\hlTodo{MC:  do not forget to mention, later, where the artifacts are available, and the tool.}
 Our approach is implemented in a tool \tool -- see Fig.~\ref{fig:overview}.   \tool 
 %introduces \approach, a formal and tool-supported approach (depicted in Fig.~\ref{fig:overview})  to identify and diagnose  \CRRI issues in \nNFRs{} expressed in \dsl.\\
%\indent (B) By treating \CRRI issues as satisfiability checking problemsIt provides a tool for effective by treating them as satisfiability checking problems.}\\ 
%Specifically, \approach tackles \CRRI issue identification by formulating them as satisfiability checking problems.} \\
%\indent \red{(B) \approach includes \tool, a tool that performs the following tasks:
%(B.1) Translating 
translates \nNFRs{} expressed as rules in the SoTA \dsl~\cite{Getir-Yaman-et-al-23,Getir-Yaman-et-al-23b} 
% into first-order logic with quantifiers over relational objects (\fol)
into \fol;
checks their satisfiability using the SoTA \fol{} checker; and generates a non-technical diagnosis.  
 
%\tool also includes a web interface~\cite{icse-tool}.
%
%(B.3) Generating a non-technical diagnosis that pinpoints the causes of \CRRI issues by analyzing the causal proof of (un)satisfiability.
%(B.4) Providing a web interface~\cite{icse-tool} for ease of use.} \\
%\indent \red{(C) Collected nine real-world case studies, encompassing forest protection plans to deployed collaborative robots in the industry. In total, there are \red{$155$} \nNFRs{} evaluated to assess the effectiveness of \approach{} for identifying \CRRI issues and the usability of the diagnosis for resolving them. 

\vspace{-0.05in}
% Significance
\boldparagraph{Significance}
%Ensuring \CRRI of \nNFRs{} is essential for maintaining consistency, compliance,  efficiency, and risk mitigation  in the development and maintenance of software systems. The diagnosis computed by \approach can be used by software engineers to (semi-)automatically generate and suggest patches to resolve redundancy, conflicts, and address concerns. 
A comprehensive approach for analysis and debugging of \nNFRs{} is crucial in developing modern systems that increasingly interact with humans.  To the best of our knowledge, we are  the first to provide such an approach, supporting stakeholders with varying levels of technical expertise. The diagnosis computed by \approach can be used by software engineers to \mbox{(semi-)automatically} generate and suggest patches for resolving \CRRI, thereby further assisting the technical and non-technical stakeholders.
%with the \nNFR{} engineering  process.

% structure of the paper
The rest of this paper is organized as follows.
Sec.~\ref{sec:background} gives the background material for our work.
Sec.~\ref{sec:problem} defines the \nNFR{} \CRRI  problem that we tackle in this paper.
Sec.~\ref{sec:sanitization} describes our approach, \approach, to identify \nNFRs{} \CRRI  using \fol satisfiability checking.
Sec.~\ref{sec:diagnose} presents our approach to utilizing the causal proof of (un)satisfiability to highlight the reasons behind  \CRRI  (e.g., conflict).
Sec.~\ref{sec:evaluation} presents the evaluation of the approach's effectiveness.
Sec.~\ref{sec:relatedwork} compares \approach to related work.
%existing NFR  checking approaches.
Sec.~\ref{sec:conclusion} discusses future research.
%---------------------------------------------------------------------------------------------------------------
%-- Archive
%---------------------------------------------------------------------------------------------------------------

%\begin{itemize}
%    \item Start with non-functional requirements (RE, TSE, Empirical engineering, Workshops)
%    \item Defines qualities for these requirements
%    \item Check if expressible as \dsl rules (obligations which constraints only the future with continuous, or disjoints time windows.)
%    \item If not fully applicable, define subsets.
%    \item Goal: ``check if this is garbage''
%\end{itemize}

%\hl{Define normative requirements}

%\hl{Expressing the requirements as rules }

%\hl{Introduce the need of rules sanitization during and/or after the requirement elicitation process.} ~\cite{Lamsweerde-Darimont-Letier-98}

%\hl{Existing sanitization tool}

%\hl{Problem we solve in this paper}

%Sec.~\ref{sec:specification} presents our extended domain-specific language, \dsle, and describes how it is translated into first-order logic with quantifiers over relational objects.
%Sec.~\ref{sec:redundancies-conflicts} provides definitions of redundant and conflicting rules.
%Sec.~\ref{sec:algo} describes the sanitization algorithm.
%\vspace{-0.1in}
\section{Preliminaries}
\label{sec:background}

%------------  Synthax
\begin{table}[t]
    \centering
    \caption{\small Syntax of normalized $\dsl$ with signature ($\events$, $\measures$).}
    \vspace{-0.1in}
    \scalebox{0.8}{
        \begin{tabular}{ll}
        \toprule
        Name & Definition\\
        \hline
        Term & $\term$ := $\constant :\mathbb{N} \mid \measure\in \measures \mid -\term \mid \term + \term \mid \constant \times \term$  \\
        Proposition & $\prop$ := $\top \mid \bot \mid \term = \term \mid \term \ge \term \mid \notS \; \prop \mid \prop \; \andS \; \prop \mid \prop \; \orS \; \prop$ \\
        Obligation & $\obg^+$ := $ \within{\event}{\term}$ $ \mid$ $\obg^-$ := $\within{\notS \; \event}{\term}$ \\
        Cond Obligation & $\cobg^+$ := $\prop \Rightarrow \obg^+$ $\mid$ $\cobg^-$ := $\prop \Rightarrow \obg^-$ \\
        Obligation Chain & $\dobg $ := $\cobg \mid \cobg^+ \otherwise \dobg$ \\
        Rule & $\srule $ := $\rulesyntax{\event \sand \prop }{\dobg}$ \\
        \hline
        Fact & $\fact$ := $\factsyntax{\event \sand \prop}{(\dobg \mid \notS \dobg)}$\\
        \bottomrule
        \end{tabular}
    }
 \label{tab:syntax}
 \vspace{-0.10in}
\end{table}

% header
In this section, we overview the SoTA \dsl for specification of normative requirements and the \fol satisfiability checker. 

\subsection{Normalized Representation of \dsl}\label{ssec:ndsl}
% \hl{TODO:sleec dsl background}~\cite{Getir-Yaman-et-al-23}
 %The 
 \dsl is an event-based language~\cite{Getir-Yaman-et-al-23} for specifying normative requirements with the pattern ``\textbf{when} trigger \textbf{then} response''. The \dsl is based on propositional logic enriched with temporal constraints (e.g., \textbf{within} x \textbf{minutes}) and the constructs \textbf{unless}, specifying \emph{defeaters}, and \textbf{otherwise}, specifying \emph{fallbacks}.
 %, allowing complex nested logical structures for specifying responses.

% In the \dsl, 
% \emph{triggers} are based on the  occurrence of events and \red{condition of values of measures (combined with propositional logics)}. {\bf MC:  not sure what prev sentence is saying} \green{\textbf{\underline{Better?}}} {\bf MC:  NOPE}
% \emph{Responses} restrict occurrences of events, specifying defeaters, deadlines, and fallback responses in case of timeouts.
% This allows responses to be specified via complex nested logical structures.
%\red{\textbf{NF: I commented out some sentences because I don't see why they need to be there}}
\dsl is an expressive language, enabling specification of complex nested defeaters and responses.
%\red{Due to \dsl's expressiveness for specifying complex nested defeaters}, 
A \dsl rule can have many logically equivalent expressions, posing a challenge for analysis and logical reasoning.  To handle it, the \textit{normalized} \dsl introduced here can be used. In its rules, nested response structures are flattened; normalisation eliminates defeaters and captures their semantics using propositional logic.
%
%An algorithm for normalization and flattening is given  in Sec.~\ref{sec:norm}.     
Here, we present the syntax and semantics of the \textit{normalized} \dsl\footnote{A translation from \dsl to the normalized form is provided in Appendix~\ref{ap:norm} and implemented in \tool.} and then define the accepted behaviour specified by normalized rules.

% Note that the language construct for \textit{defeaters} (i.e., $\textit{unless}$)  in the original \dsl~\cite{Getir-Yaman-et-al-23} is flattened and expressed equivalently as the triggering condition in the \textit{normalized} \dsl. Flattening is discussed in Sec.~\ref{sec:norm}.  \hl{\textbf{MC}:  are defeaters defined anywhere?}

% In this section, we present the syntax and semantics of the \textit{normalized} \dsl.

% \hl{TODO:add why we use it.}. 
\vspace{-0.05in}
\noindent
\boldparagraph{Normalized \dsl syntax}
A normalised \dsl \emph{signature} is a tuple $\sig = (\events, \measures)$, where $\events = \{\event_1, \ldots, \event_n\}$ and $\measures = \{\measure_1 \ldots \measure_n\}$ are finite sets of symbols  for \textit{events}  and \textit{measures}, respectively. As a convention, in the rest of the paper, we assume that all event symbols (e.g., {\small \texttt{OpenCurtain}}) are capitalized while measure symbols (e.g., {\small \texttt{underDressed}}) are not. 
% , events start with a capital letter and measures -- with a lower case letter.
Without loss of generality, we assume that every measure is numerical. The syntax of the normalized \dsl{} can be found in Tbl.~\ref{tab:syntax}, where, e.g., an \textit{obligation} $\obg = \within{\obghead}{\term}$ specifies that the event $\event$ must (if $\obghead = \event$) or must not  (if $\obghead = \notS \; \event$) occur within the time limit $\term$. We omit the time limit $t$ if $t = 0$.

\vspace{-0.05in}
\begin{example}\label{example:syntax}
    Consider the normative rule ($r5$) ``\textbf{when} {\small\texttt{OpenCurtain\-Request}} \textbf{then} {\small\texttt{OpenCurtain}} \textbf{within} 30 {\small\textbf{minutes}} \textbf{unless} {\small \texttt{underDressed}}'', where {\small \texttt{OpenCurtainRequest}} and {\small \texttt{OpenCurtain}} are events and {\small \texttt{under\-Dressed}} is a measure. $r5$ is normalized as ``\textbf{when} {\small \texttt{OpenCurtainRequest}} \textbf{and not} {\small \texttt{underDressed}} \textbf{then} {\small \texttt{OpenCurtain}} \textbf{within} 30 \textbf{minutes}''.
\end{example}
%{\bf MC:  why was the above the right normalization?  what does it take to normalize a SLEEC DSL?  Was it explored somewhere that we are reviewing here?} \green{\textbf{\underline{Better?} We added a footnote earlier.}}

\vspace{-0.1in}
\noindent
\boldparagraph{Semantics}
The semantics of \dsl is described over  \textit{traces}, which are \textit{finite} sequences of \textit{states} $\fos = (\eventocc{1}, \measureassign{1}, \timestamp{1}), (\eventocc{2}, \measureassign{2}, \timestamp{2}),$ $\ldots  (\eventocc{n}, \measureassign{n}, \timestamp{n})$. For every time point $i \in [1, n]$, (1) $\eventocc{i}$ is a set of events that occur at time point $i$; (2) $\measureassign{i}: \measures \rightarrow \mathbb{N}$ assigns every measure in $\measures$ to a concrete value at time point $i$, and (3) $\timestamp{i}: \mathbb{N}$ captures the time value of time point $i$ (e.g., the second time point can have the value 30, for 30 sec). We assume that the time values in the trace are strictly increasing (i.e., $\timestamp{i} < \timestamp{i+1}$ for every $i \in [1, n-1]$). Given a measure assignment $\measureassign{i}$ and a term $\term$, let $\measureassign{i}(\term)$ denote the result of substituting every measure symbol $\measure$ in $\term$ with $\measureassign{i}(\measure)$. Since $\measureassign{i}(\term)$ does not contain free variables, the substitution results in a natural number. 
 Similarly, given a proposition $\prop$, we say that $\measureassign{i} \models \prop$ if $\prop$ is evaluated to $\top$ after substituting every term $\term$ with $\measureassign{i}(\term)$.  
%The complete semantics of \dsl can be found in Appendix.~\ref{sec:norm}. 
Let a trace $\fos = (\eventocc{1}, \measureassign{1}, \timestamp{1}) \ldots (\eventocc{n}, \measureassign{n}, \timestamp{n})$ be given.

\italicparagraph{Positive obligation} A positive obligation $\within{\event}{\term}$ is \textit{fulfilled subject to time point $i$}, denoted $\fos \models_{i} \within{\event}{\term}$, if there is a time point $j \ge i$ such that $e \in \eventocc{j}$ and $\timestamp{j} \in [\timestamp{i}, \timestamp{i} + \measureassign{i}(\term)]$. That obligation is \emph{violated at time point $j$}, denoted as $\fos \not\models_{i}^{j} \within{\event}{\term}$, if $\timestamp{j} = \timestamp{i} + \measureassign{i}(\term)$ and for every $j'$ such that $j \ge j' \ge i$, $\event$ does not occur ($\event \not\in \eventocc{j'}$).  

\italicparagraph{Negative obligation} A negative obligation $\within{\notS \; \event}{\term}$ is fulfilled subject to time point $i$, denoted as  $\fos \models_{i} \within{\notS \; \event}{\term}$, if for every time point $j$ such that $\timestamp{j} \in [\timestamp{i}, \timestamp{i} + \measureassign{i}(\term)]$, $e \not\in \eventocc{j}$. 
The negative obligation is violated at time point $j$, denoted as $\fos \not\models_{i}^{j} \within{\notS \; \event}{\term}$, if (1) $\timestamp{j} \in [\timestamp{i}, \;\timestamp{i} + \measureassign{i}(\term)]$, (2) $\event$ occurs ($\event \in \eventocc{j}$), and (3) for every $j \ge j' \ge i$, $\event$ does not occur ($\event \not\in \eventocc{j'}$). 

\italicparagraph{Conditional obligation} A conditional obligation $\prop \Rightarrow \obg$ is fulfilled subject to time point $i$, denoted as $\fos \models_{i} (\prop \Rightarrow \obg)$, if $\prop$ does not hold at time point $i$ ($\measureassign{i}(\prop) = \bot$) or the obligation is fulfilled ($\fos \models_{i} \obg$).  Moreover, $\prop \Rightarrow \obg$  is violated at time point $j$, denoted $\fos \not\models_{i}^{j} (\prop \Rightarrow \obg)$, if  $\prop$ holds at $i$ ($\measureassign{i}(\prop) = \top$) and $\obg$ is violated at $j$ ($\fos \not\models_{i}^{j} \obg$).

\vspace{-0.02in}
\italicparagraph{Obligation chain} A chain $\cobg^+_1 \otherwise \cobg^+_2 \ldots \cobg_m$ is fulfilled subject to time point $i$, denoted as $\fos \models_{i} \cobg^+_1 \otherwise$ $\cobg^+_2 \ldots \cobg_m$, if (1) the first obligation is fulfilled ($\fos \models_{i} \cobg^+_1$) or (2) 
there exists a time point $j \ge i$ such that $\cobg^+_1$ is \textit{violated} (i.e., $\fos \not\models_{i}^{j} \cobg_1$) and the rest of the obligation chain (if not empty) is fulfilled at time point $j$ ($\fos \models_{j} \cobg^+_2 \otherwise \ldots \cobg_m$).

\vspace{-0.02in}
\italicparagraph{Rule} A rule $\rulesyntax{\event \; \andS \; \prop}{\dobg}$ is fulfilled in $\fos$, denoted as $\fos \models \rulesyntax{\event \; \andS \; \prop}{\dobg}$, if for every time point $i$, 
where event $\event$ occurs ($\event \in \eventocc{i}$) and $\prop$ holds ($\measureassign{i}(\prop) = \top$), the obligation chain is fulfilled subject to time point $i$ ($\fos \models_{i} \dobg$). A \emph{trace fulfills a rule set} $\ruleset$, denoted as $\fos \models \ruleset$, if it fulfills every rule in the set (i.e., for every $\srule \in \ruleset$, we have $\fos \models \srule$). 

\vspace{-0.05in}
\begin{example}
    Consider the normalized \dsl rule $r5$ in Ex.~\ref{example:syntax} and the trace $\sigma_1$ shown in Fig.~\ref{fig:traces}, corresponding to $(\eventocc{1}, \measureassign{1}, \timestamp{1})$, where $\eventocc{1} =$ {\small \texttt{OpenCurtainRequest}}, $\measureassign{1}({\small \texttt{underDressed}}) = \top$, and $\timestamp{1} = 1$. The trace $\sigma_1$ fulfills the rule $r5$  (i.e., $\sigma_1 \models r5$), because the rule's triggering condition (i.e., $\notS \; {\small \texttt{underDressed}}$) is not satisfied  (i.e., the user is under-dressed at the time of the request). 
\end{example}

\vspace{-0.11in}
\begin{definition}[Behaviour defined by $\ruleset$]\label{def:behav}
    Let a rule set  $\ruleset$ be given. The accepted \emph{behaviour} defined by $\ruleset$, denoted as $\lang{\ruleset}$, is the largest set of traces such that every trace $\fos$ in $\lang{\ruleset}$ respects every rule $r$ in $\ruleset$, i.e., $\fos \in \lang{\ruleset}$, $\fos \models r$.
\end{definition}
\begin{figure*}
    \centering
    \begin{tabular}{c c}
        \begin{tabular}{c c c }
            %--------trace 1
            \scalebox{.75}{
                \begin{tikzpicture}[scale=2]
                  %\draw [very thin, lightgray](0,0.6) grid[step=0.2] (4.1,1.6);
                  %\node at (2.6,1.1) {\small\color{mymauve}};
                  \node at (1.9,0.9) {\small\color{mygray}time};
                  
                  %nbOcc
                  \node at (0.5,1.24) {\small\color{mymauve}OpenCurtainRequest};
                  \node at (0.5,1.44) {\small\color{mymauve}underDressed};
                  \node at (1.5,1.24) {\small\color{mymauve}$\emptyset$};
                  
                  %instants
                  \node at (0.4,0.92) {\small\color{mygray}$1$};
                  \node at (1.5,0.92) {\small\color{mygray}$30$};
                  
                  \node at (-0.5,1.0) {$\sigma_1$};
            
                  %Axes du temps
                  \draw [thick,black,->] (-0.3,1) -- (2.2,1);
                  \draw [thick,black,-] (-0.3,1.05) -- (-0.3,0.95);
                  \draw [thick,black,->] (0.4,1) -- (0.4,1.15);
                  \draw [thick,black,->] (1.5,1) -- (1.5,1.15);
                  
                  %couleur execution
                  %~ \draw [gray, line width=10] (1.4,1.05) -- (2,1.05);
                  %~ \draw [gray, line width=10] (3,1.05) -- (3.4,1.05);
                \end{tikzpicture}
            }
        %--------------------------------------------------------------- 
        &  
        %--------trace 2
        \scalebox{.75}{
            \begin{tikzpicture}[scale=2]
              %\draw [very thin, lightgray](0,0.6) grid[step=0.2] (4.1,1.6);
              %\node at (3.4,1.1) {\small\color{mymauve}};
              \node at (2.37,0.9) {\small\color{mygray}time};
              
              %nbOcc
              \node at (0.5,1.24) {\small\color{mymauve}OpenCurtainRequest};
              \node at (1.8,1.24) {\small\color{mymauve}OpenCurtain};
              \node at (1.8,1.44) {\small\color{mymauve}underDressed};
              
              %instants
              \node at (0.4,0.92) {\small\color{mygray}$1$};
              \node at (1.8,0.92) {\small\color{mygray}$3$};
              
              \node at (-0.5,1.0) {$\sigma_2$};
        
              %Axes du temps
              \draw [thick,black,->] (-0.3,1) -- (2.6,1);
              \draw [thick,black,-] (-0.3,1.05) -- (-0.3,0.95);
              \draw [thick,black,->] (0.4,1) -- (0.4,1.15);
              \draw [thick,black,->] (1.8,1) -- (1.8,1.15);
              
              %couleur execution
              %~ \draw [gray, line width=10] (1.4,1.05) -- (2,1.05);
              %~ \draw [gray, line width=10] (3,1.05) -- (3.4,1.05);
            \end{tikzpicture}
        }
        &
                 %--------trace 3
        \scalebox{.75}{
            \begin{tikzpicture}[scale=2]
              %\draw [very thin, lightgray](0,0.6) grid[step=0.2] (4.1,1.6);
              %\node at (3.4,1.1) {\small\color{mymauve}};
              \node at (2.1,0.9) {\small\color{mygray}time};
              
              %nbOcc
              \node at (0.5,1.24) {\small\color{mymauve}OpenCurtainRequest};
              \node at (1.6,1.24) {\small $\{ \emptyset \}$};
              %\node at (2,1.44) {\small\color{mymauve}\sleec{underDressed}};
              
              %instants
              \node at (0.4,0.92) {\small\color{mygray}$1$};
              \node at (1.6,0.92) {\small\color{mygray}$30$};
              
              \node at (-0.5,1.0) {$\sigma_3$};
        
              %Axes du temps
              \draw [thick,black,->] (-0.3,1) -- (2.4,1);
              \draw [thick,black,-] (-0.3,1.05) -- (-0.3,0.95);
              \draw [thick,black,->] (0.4,1) -- (0.4,1.15);
              \draw [thick,black,->] (1.6,1) -- (1.6,1.15);
              
              %couleur execution
              %~ \draw [gray, line width=10] (1.4,1.05) -- (2,1.05);
              %~ \draw [gray, line width=10] (3,1.05) -- (3.4,1.05);
            \end{tikzpicture}
        }
    \end{tabular}
    %---------------------------------------------------------------
             
    \\
    
    %--------trace 4
    \scalebox{.8}{
        \begin{tikzpicture}[scale=2]
          %\draw [very thin, lightgray](0,0.6) grid[step=0.2] (4.1,1.6);
          %\node at (3.4,1.1) {\small\color{mymauve}events and measures};
          \node at (2.4,0.9) {\small\color{mygray}time};
          
          %nbOcc
          \node at (0.3,1.24) {\small\color{mymauve}SupportCalled};
          %\node at (2,1.44) {\small\color{mymauve}\sleec{underDressed}};
          \node at (1.6,1.24) {\small\color{mymauve}OpenCurtainRequest};
          
          %instants
          \node at (0.25,0.92) {\small\color{mygray}$1$};
          \node at (1.6,0.92) {\small\color{mygray}$11$};
          
          \node at (-0.5,1.0) {$\fos_0^2$};
    
          %Axes du temps
          \draw [thick,black,->] (-0.3,1) -- (3,1);
          \draw [thick,black,-] (-0.3,1.05) -- (-0.3,0.95);
          \draw [thick,black,->] (0.26,1) -- (0.26,1.15);
          \draw [thick,black,->] (1.6,1) -- (1.6,1.15);
          
          %couleur execution
          %~ \draw [gray, line width=10] (1.4,1.05) -- (2,1.05);
          %~ \draw [gray, line width=10] (3,1.05) -- (3.4,1.05);
        \end{tikzpicture}
    }
       %--------trace 4
       \scalebox{.8}{
            \begin{tikzpicture}[scale=2]
              %\draw [very thin, lightgray](0,0.6) grid[step=0.2] (4.1,1.6);
              \node at (3.4,1.1) {\small\color{mymauve}};
              \node at (2.7,0.9) {\small\color{mygray}time};
              
              %nbOcc
              \node at (0.5,1.24) {\small\color{mymauve}OpenCurtainRequest};
              \node at (2,1.44) {\small\color{mymauve}underDressed};
              \node at (2,1.24) {\small\color{mymauve}OpenCurtain};
              
              %instants
              \node at (0.4,0.92) {\small\color{mygray}$1$};
              \node at (2,0.92) {\small\color{mygray}$3$};
              
              \node at (-0.5,1.0) {$\sigma_5$};
        
              %Axes du temps
              \draw [thick,black,->] (-0.3,1) -- (3,1);
              \draw [thick,black,-] (-0.3,1.05) -- (-0.3,0.95);
              \draw [thick,black,->] (0.4,1) -- (0.4,1.15);
              \draw [thick,black,->] (2,1) -- (2,1.15);
              
              %couleur execution
              %~ \draw [gray, line width=10] (1.4,1.05) -- (2,1.05);
              %~ \draw [gray, line width=10] (3,1.05) -- (3.4,1.05);
            \end{tikzpicture}
        }
    %---------------------------------------------------------------
    \end{tabular}
    \vspace{-0.15in}
    \caption{{\small Five traces from the dressing robot example. 
    $\sigma_0^2$ is a partial trace, from 0 to 2, used to illustrate  situational-conflicts.}}
    \label{fig:traces}
    \vspace{-0.1in}
\end{figure*}
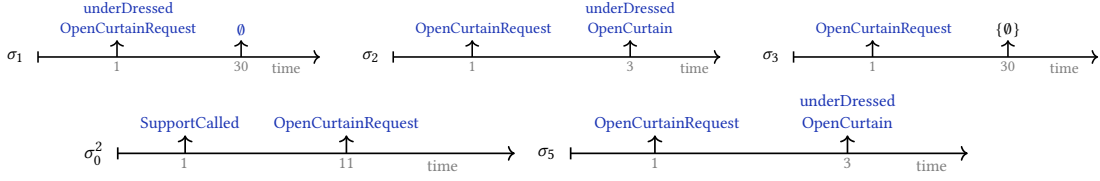

\vspace{-0.1in}
\subsection{\fol Satisfiability Checking}
Here, we present the background on first-order logic with relational objects (\fol)~\cite{feng-et-al-23}, which we use for satisfiability checking.

We start by introducing the syntax of \fol. 
A \emph{signature} $S$ is a tuple $(C, R, \iota)$, where $C$ is a set of constants, $R$ is a set of class symbols, and $\iota : R \rightarrow \mathbb{N} $ is a function that maps a relation to its arity.   We assume that $C$'s domain is $\mathbb{Z}$, where the theory of linear integer arithmetic (LIA) holds. Let $\variables$ be a set of variables in the domain $\mathbb{Z}$. A \emph{relational object} $o$ of class $r \in R$ (denoted as $o:r$) is an object with $\iota(r)$ regular attributes and one special attribute, where every attribute is a variable. We assume that all regular attributes are ordered and denote $o[i]$ to be the $i$th attribute of $o$. 
Attributes are also named, and $o.x$ refers to $o$'s attribute with the name `$x$'.  
Each relational object $o$ has a special attribute, $o.ext$. This attribute is a boolean variable indicating whether $o$ exists in a solution. 
%For convenience, we define a function $\class$($o$) to return the relational object's class.  
An \fol \emph{term} $t$ is defined inductively as $t: \; c \; | \;  v \; | \; o[k] \; | \; o.x \; | \; t + t \; | \;  c \times t$ for any constant $c \in C$, any variable $v \in \variables$, any relational object $o:r$, any index $k \in [1, \iota(r)]$ and any valid attribute name $x$. Given a signature $S$, \fol formulas take the following forms:
    \emph{(1)} $\top$ and $\bot$, representing ``true'' and ``false''; 
    \emph{(2)} $t = t'$ and $t > t'$, for terms $t$ and $t'$;  
    \emph{(3)} $\phi_f \wedge \psi_f$, $\neg \phi_f$ for \fol formulas $\phi_f$ and ${\psi_f}$; 
    \emph{(4)} $\exists o:r \cdot \: (\phi_f)$ for a \fol formula $\phi_f$  and a class $r$;
    \emph{(5)} $\forall o:r \cdot \: (\phi_f)$ for a \fol formula $\phi_f$  and a class $r$. The quantifiers for \fol formulas are over relational objects of a class, as defined in (4) and (5).  
    Operators $\vee$ and $\forall$ are defined as $\phi_f \vee \psi_f = \neg (\neg \phi_f \wedge \neg \psi_f)$ and $\forall o:r \cdot \phi_f = \exists o:r \cdot \neg \phi_f$. We say that a \fol formula is in \textit{negation normal form} (NNF) if negations ($\neg$) do not appear in front of $\neg$, $\wedge$, $\vee$, $\exists$ and $\forall$. For the rest of the paper, we assume that every formula is in NNF.     
    
Given a signature $S$, \emph{a domain} $\domain$ is a finite set of relational objects. A \fol formula \emph{grounded} in the domain $\domain$ (denoted by $\phi_{\domain}$) is a quantifier-free FOL formula that eliminates quantifiers on relational objects using the following expansion rules: (1) $\exists o: r \cdot \: (\phi_f)$  to $ \bigvee_{{o':r} \in \domain} (o'.ext \wedge \phi_f [o \leftarrow o'])$ and (2) $\forall o:r \cdot \: (\phi_f)$ to $\bigwedge_{{o':r} \in \domain} (o'.ext \Rightarrow \phi_f [o \leftarrow o'])$. A \fol formula $\phi_f$ is \emph{satisfiable in $\domain$} if there exists a variable assignment $v$  that evaluates $\phi_{\domain}$ to $\top$ according to the standard semantics of FOL. A \fol formula $\phi_f$ is \emph{satisfiable} if there exists a finite domain $\domain$ such that $\phi_f$ is satisfiable in $\domain$. 
We call $\sigma = (\domain, v)$ \emph{a satisfying solution to $\phi_f$}, denoted as $\sigma \models \phi_f$. Given a solution $\sigma = (D, v)$, we say a relational object $o$ is in $\sigma$, denoted as $o \in \sigma$, if $o \in D$ and $v(o.ext)$ is true. Note that the solution $\sigma$ contains a subset of relational objects in the domain $\domain$. 
 The \emph{volume of the solution}, denoted as $vol(\sigma)$, is $|\{o \mid o \in \sigma\}|$.
% \hlTodo{Todo: add a''} 

\vspace{-0.06in}
\begin{example} Let $A$ be a class of relational objects with attribute $val$. The formula $\forall a: A.\: (\exists a':A \cdot \: (a.val < a'.val)) \wedge \: \exists a'':A \cdot (a''.val = 0)) $  has no satisfying solutions in any finite domain. On the other hand, the formula $\forall a:A \cdot \: (\exists a', a'':A  \cdot \: (a.val = a'.val + a''.val) \wedge \: (a.val = 5)) $  has a solution $\sigma = (\domain, v)$ of volume 2, with the domain $\domain = (a_1, a_2)$ and the value function $v(a_1.val) = 5$, $v(a_2.val) = 0$ because, if $a \gets a_1$, then the formula is satisfied by assigning $a' \gets a_1, \;a'' \gets a_2$; and, if $a \gets a_2$, then the formula is satisfied by assigning $a' \gets a_2, \;a'' \gets a_2$. 
\end{example}

% \begin{definition}[Semantic Equivalence]\label{def:se}
% Let two relational object $o$ and $o'$ be given.
%     The objects are \emph{semantically equivalent}, denoted as $o \equiv o'$, if
% $\class(o) = \class(o')$ and $\bigwedge_{i=0}^{\iota(\class(o))}(o[i] = o'[i]) \wedge o.\textit{ext} = o'.\textit{ext} \wedge o.\textit{time} = o'.\textit{time}$. 
% \end{definition}
% \hl{\textbf{MC:}  illustrate?}
%\vspace{-0.15in}

 \begin{table}
    \caption{{\small Examples of \dsl rules and facts.}}
    \vspace{-0.1in}
    \scalebox{0.78}{
        \begin{tabular}{ c l }
        \toprule
        \makecell{ID} & \makecell{\dsl Rule or Fact}\\
        \toprule
        %\multirow{2}{*}{$r1$} & \textbf{when} \texttt{OpenCurtainRequest} \textbf{then} \texttt{DressingStarted} \textbf{within} 10 minutes  \\
        %& ~~~~\textbf{unless} \texttt{curtainsOpened} \\
        %$r2$ & \textbf{when} \texttt{UserFallen} \textbf{then} \texttt{CallSupport}\\
        %$r3$ & \textbf{when} \texttt{UserFallen} \textbf{then} \texttt{ProvideCompanionship} \textbf{unless} \texttt{patientNotDeaf} \\
        %$r4$ & \textbf{when} \texttt{CallSupport} \textbf{then} not \texttt{ProvideCompanionship  \textbf{within} 30 \textbf{minutes}} \\
        \multirow{ 2}{*}{$r5$} &
         \textbf{when} \texttt{OpenCurtainRequest} \textbf{and not} \texttt{underDressed}  \\
         & ~~~~\textbf{then} \texttt{OpenCurtain} \textbf{within} 30 \textbf{min.} \\
            %\hline
         \multirow{ 2}{*}{$r5'$} &
         \textbf{when} \texttt{OpenCurtainRequest} \textbf{and not} \texttt{underDressed}  \\
         & ~~~~\textbf{then} \textbf{not} \texttt{OpenCurtain} \textbf{within} 30 \textbf{min.} \\
            %\hline
        $r6$ & \textbf{when} \texttt{OpenCurtainRequest} \textbf{then} \texttt{SignalOpenCurtain} \textbf{unless} \texttt{underDressed} \\
            %\hline
        $r7$ & \textbf{when} \texttt{SignalOpenCurtain} \textbf{then} \texttt{OpenCurtain} \textbf{within} 20 \textbf{min.} \\
            %\hline
        $r8$ & \textbf{when} \texttt{OpenCurtainRequest} \textbf{then} \textbf{not} \texttt{OpenCurtain} \textbf{within} 40 \textbf{min.} \\
            %\hline 
            %\hline
            %$\noncompliant{r5}$ & \makecell{$\textbf{exists} \; \textit{OpenCurtainRequest} \; \andS \; \notS \; \textit{underDressed}$ \\
            %$\textbf{while} \within{\neg \textit{OpenCurtain}}{\textit{30 (sec)}}$} \\
            %\hline
            %$r8$ & \makecell{$\textbf{when} \; \text{OpenCurtainRequest} \; \andS \; \notS \; \text{underDressed}$ \\
            %$\textbf{then} \within{\; \notS \; \text{OpenCurtain}}{\textit{40 } \textbf{min.}}$} \\
            %\hline
            $r9$ & \textbf{when} \texttt{UserFallen} \textbf{then} \texttt{SupportCalled} \textbf{within} 10 \textbf{min.} \\
            %\hline
            $r10$ & \textbf{when} \texttt{SupportCalled} \textbf{then} \texttt{LeaveUser} \textbf{within} 15 \textbf{min.} \\
            %\hline
            $r11$ & \textbf{when} \texttt{SupportCalled} \textbf{then} \texttt{OpenCurtain} \textbf{within} 40 \textbf{min.} \\
            %\hline
            \multirow{ 2}{*}{$r12$} &
         \textbf{when} \texttt{OpenCurtainRequest} \textbf{and not} \texttt{underDressed} \\
         & \textbf{then} \texttt{OpenCurtain} \textbf{within} 20 \textbf{min.}\\
         $r13$ & \textbf{when} \texttt{A} \textbf{then} \texttt{B} \textbf{within} 10 \textbf{min.} \\
         $r14$ & \textbf{when} \texttt{A} $\andS{}$ $P(A)$ \textbf{then} \texttt{B} \textbf{within} 20 \textbf{min.}
          \\
            $\purposeFact_1$ & \textbf{exists} \texttt{UserFallen} \textbf{then} \textbf{not} \texttt{LeaveUser} \textbf{within} 30 \textbf{min.} \\
            %\hline
            $\fact_{1}$ & $\textbf{exists} \; \texttt{OpenCurtainRequest} \; \andS \; \notS \; \texttt{underDressed}$ \\
            \bottomrule
        \end{tabular}
    }
    \label{tab:examplefig}
    \vspace{-0.1in}
    \end{table}
    %-------------------

\section{Well-Formedness Properties}\label{sec:problem}
% {\bf MC:  didn't we decide to call this well-formedness?}
% header
In this section, we present a set of \nNFR{} well-formedness properties:
conflicts, redundancies, restrictiveness, and insufficiency.
%, for \nNFRs{} expressed as \dsl{} rules (see Sec.~\ref{ssec:ndsl})}. 
% {\bf MC:  rewrite this bit to use WFF terminology.}

\vspace{-0.03in}
%-------------- Redundant
\boldparagraph{Redundancy}
A rule 
in a rule set 
is \emph{redundant} if 
removing it does increase the behaviours allowed by the rule set.  
\begin{definition}[Redundancy]\label{def:redudent}
Given a rule set $\ruleset$,  a rule $\srule \in \ruleset$, 
% the rule $\srule$ 
is \emph{redundant}, 
denoted as $(\ruleset \setminus \srule) \Rightarrow \srule$, if $\lang{\ruleset \setminus \srule} \subseteq \lang{\ruleset}$. 
\end{definition}
\begin{example}\label{exmaple:redudent}
    Consider a rule set $\ruleset = \{r5, r6, r7\}$, where $r5$, $r6$ and $r7$ are rules in Tbl.~\ref{tab:examplefig}.
    % the \dsl rule $r5$ in Example~\ref{example:syntax} and a rule set $\ruleset = \{r5, r6, r7\}$ where $r6:$ \textbf{when} OpenCurtainRequest \textbf{then} SignalOpenCurtain \textbf{unless} underDressed and $r7:$ \textbf{when} SignalOpenCurtain \textbf{then} OpenCurtain \textbf{within} 20 \textbf{minutes}.  
    The rule $r5$ is redundant in $\ruleset$ since the behaviors allowed by $\lang{r6} \bigcap \lang{r7}$ are \textit{subsumed} by the behaviors allowed by $\lang{r5}$ (i.e., $\lang{r6} \bigcap \lang{7} \subset\lang{r5}$), and thus removing $r5$ does not affect the behaviors allowed by $\lang{\ruleset}$. If the time limit of $r7$ is changed to $40$, then $r5$ is no longer redundant. 
\end{example}
% \green{If a less obvious example showed up in the case studies, it would be good. }
%-------------- Vacuous

\vspace{-0.08in}
\noindent
\boldparagraph{Vacuous Conflicts}
A rule is \emph{vacuously conflicting} if the trigger of the rule is not in the accepted behaviours defined by the rule set, i.e., 
triggering the rule in any situation will lead to a violation of some rules and thus cause a conflict. 

\vspace{-0.07in}%Formally:
\begin{definition}[Vacuous Conflict]\label{def:dconflict}
Let a rule set $\ruleset$ be given. A rule $\srule  = \rulesyntax{\event \; \andS \; \prop}{\dobg}$ in $\ruleset$  is \textit{vacuously conflicting} if for every  trace $(\eventocc{1}, \measureassign{1}, \timestamp{1}) \ldots (\eventocc{n}, \measureassign{n}, \timestamp{n}) \in \lang{\ruleset}$, $\event \not\in \eventocc{i}$ or $\measureassign{\prop} = \bot$ for every $i \in [1, n]$.
\end{definition}
\begin{example}\label{example:dconf}
    The rule $r5$ (Tbl.~\ref{tab:examplefig}) is conflicting in the rule set $\ruleset = \{r5, r8\}$ since triggering $r5$ would also trigger $r8$.  Thus {\small\texttt{OpenCurtain}} must occur within 30 minutes ($r5$) but must not occur within 40 minutes ($r8$), which is a conflict.
\end{example}
%Intuitively, a rule in a rule set is redundant if removing the rule does not increase the language of the rule set. A rule is conflicting if the \textit{trigger} of the rule is not in the language defined by the rule set. In other words, the rule cannot be triggered without eventually violating some other rules. 
%-------------- Restrictive & Insufficient

\vspace{-0.08in}
\boldparagraph{Restrictive or Insufficient Rules}
Recall from Def.~\ref{def:behav}, given a set of rules $\ruleset$, $\lang{\ruleset}$ is the set of behaviors permitted by $\ruleset$. If $\ruleset$ is overly restrictive, then $\lang{\ruleset}$ might not contain some desirable behaviors essential to achieve the initial system purpose. 
On the other hand, if $\ruleset$ is too relaxed, then $\lang{\ruleset}$ might contain some undesirable behaviors, such as the ones causing SLEEC concerns.  
To capture the desirable and undesirable behaviors, we introduce a new \dsl language construct \textit{fact}.
%in \dsl, called \textit{fact} }. 
% , a \hl{\textbf{MC:}  is it?} of language constructs in \dsl.

\begin{definition}[Fact]\label{def:fact}
    A \dsl \emph{fact} $\fact$ has the syntax
    % \hl{\textbf{MC:}  correct?} 
   ``$\factsyntax{\event \; \andS \;$\\$ \prop} {\dobg}$'' or ``$\factsyntax{\event \; \andS \; \prop}{\notS \; \dobg}$'', where $\event$ is an event, $\prop$ is a proposition and $\dobg$ is an obligation chain (see Tbl.~\ref{tab:syntax}). 
    The fact  $\factsyntax{\event \; \andS \; \prop}{\dobg}$ is \emph{satisfied by a trace} $\fos = (\eventocc{1}, \measureassign{1}, \timestamp{1}),$ $ (\eventocc{2}, \measureassign{2}, \timestamp{2}),$ $\ldots  (\eventocc{n}, \measureassign{n}, \timestamp{n})$,  denoted as $\fos \models \fact$, if there exists a time point $i$ such that (1) $\event \in \eventocc{i}$, (2) $\measureassign{i}(\prop) = \top$, and the response $\dobg$ is fulfilled at time point $i$ (i.e., $\fos \models_{i} \dobg$). Similarly, the fact  $\factsyntax{\event \; \andS \; \prop}{\notS \; \dobg}$ is \emph{satisfied by a trace $\fos$ if there exists a time point $i$ such that the conditions (1)-(2) hold, and $\dobg$ is \emph{not} fulfilled at time point $i$}.
     A set of facts $\facts$ \emph{is satisfied by a trace} $\fos$, denoted as $\fos \models \facts$, if  $\fos \models \fact$ for every $\fact \in \facts$. 
    The behavior of $\facts$, denoted as $\lang{\facts}$, is the largest set of traces such that for every  $\fos \in \lang{\facts}$, $\fos \models \facts$.
\end{definition}
Similarly to rules, facts assert invariants over traces. However, unlike rules, which define responses to their triggering events that may or may not happen, facts require occurrences of events.
\begin{definition}[Restrictiveness]\label{def:restrictivness}
    Given a rule set $\ruleset$ and a desirable behavior captured by a set of facts $\purposeFact$, $\ruleset$ is \emph{overly-restrictive for $\purposeFact$} if and only if $\lang{\ruleset} \cap \lang{\purposeFact} = \emptyset$.
\end{definition}
\noindent
Intuitively, a rule set is overly restrictive if it is impossible to execute the desirable behavior while respecting every rule.% in the rule set.

% example
\begin{example}\label{example:restrictive}
% {\bf MC:  make sure you use right syntax for events and measures!}
Consider the  rule set $\ruleset = \{r9, r10\}$  where $r9$ and $r10$ are in Tbl.~\ref{tab:examplefig}. Suppose a desired behavior, formalized as $\purposeFact_1$ in Tbl.~\ref{tab:examplefig}, is that after a patient falls, the robot should stay with them for at least 30 minutes.
%we a user intends a robot not to leave robot is needs designed with the purpose to not leave patients for 30 minutes after the user falls. The purpose is formalized into $\purposeFact_1$ (shown in Tbl.~\ref{tab:examplefig}). 
$\ruleset$ is overly restrictive for $\purposeFact_1$ as $\lang{\purposeFact_1} \bigcap \lang{\ruleset}$ is empty. Therefore, the system cannot execute the desired behavior in $\purposeFact_1$ while respecting the rules in $\ruleset$.
\end{example}
\begin{definition}[Insufficiency]\label{def:sufficiency}
    Given a rule set $\ruleset$ and an undesirable behavior expressed as a set of facts $\concernFact$, $\ruleset$ are \emph{insufficient for preventing $\concernFact$} if and only if $\lang{\ruleset} \cap \lang{\concernFact} \neq \emptyset$.
\end{definition}
\noindent
Intuitively, a set of rules are insufficient if  adhering to those rules can still allow executing an  undesirable behavior.
\begin{example}\label{example:inssuficent}
Consider the \dsl rule $r5$ and the privacy concern $\fact_{1}$ shown in Tbl.~\ref{tab:examplefig}.
% :  $\textbf{exists}$ \textit{OpenCurtainRequest} \andS{} \notS{} \textit{underDressed}. 
Let $\sigma_2$ be the trace shown in Fig.~\ref{fig:traces}, with states $(\{{\small\texttt{OpenCurtainRequest}}\},$ $\measureassign{1}, 1),$ $(\{{\small\texttt{OpenCurtain}}\}, \measureassign{2}, 3)$ where the measures of $\measureassign{1}$ evaluate $\prop$ to $\bot$.
%{\bf MC:  Do we denote states in italics?}
 % where the measure at the first state $\measureassign{1}$ evaluates $\prop$ to $\bot$.
Since $\sigma_2$ is in $\lang{r5} \cap \lang{\fact_{1}}$, then $\lang{r5} \cap \lang{\fact_{1}}$ is not empty.
Therefore, \{$r5$\} is insufficient to prevent $\fact_{1}$, i.e., \{$r_5$\} allows a violation of the privacy concern.
%of the patient being undressed when the curtain is open.
%In other words, the privacy concern (having a patient undressed while the curtain is open) is feasible given the set of rules \{$r5$\}.
\end{example}
%-------------- Situational

\vspace{-0.08in}
\boldparagraph{Situational Conflicts}
A \dsl rule may interact with other rules in a contradictory manner under specific situations, which can lead to conflicts between different rules. 
However, the definition of \textit{vacuous conflict} (see Def.~\ref{def:dconflict}) does not capture such \textit{situational conflicts} since these conflicts arise only in \emph{particular} situations.  
% These situations can be unrealistic, i.e., cannot occur in the real world (which can be handled by refining the domain), or
If the conflicting situations are feasible, then the stakeholders need to resolve the conflict.
We define the notion of \textit{situational conflict} as a generalization of a vacuous conflict.  
%We start by defining the notion of \emph{situation}:
%
%We first formally define situation:
%
\begin{definition}[Situation]\label{def:situation}
    For a rule $\srule = \rulesyntax{\event \; \andS \; \prop}{\dobg}$, an \emph{$\srule$-triggering situation} is a tuple $(\fos^k_0, \Vec{\measures}_{k})$ where $\fos^k_0 = (\eventocc{0}, \measureassign{0}, \timestamp{0})$ $\ldots$ $(\eventocc{k}, \measureassign{k}, \timestamp{k})$ is the prefix of a trace up to and including state $k$,  $\Vec{\measures}_{k} = \measureassign{k} \ldots \measureassign{n}$ are  \textit{partial} measures (i.e., the functions $\measureassign{n} \in \Vec{{\measures}_k}$ might be undefined for some measures $\measure \in \measures$) for states from $k+1$ onwards, and $\fos^k_0$ triggers $\srule$ in its $k$th state (i.e., $\event \in \eventocc{k} \wedge \measureassign{k}(\prop) = \top$). 
    % {\bf MC:  is triggered rule part of the situation here?} \green{\textbf{Answer}: yes,  a situation  consists of  the complete state information of the past until the rule is triggered.}
\end{definition}
\noindent
Intuitively, an $\srule$-triggering situation $(\fos_0^k, \Vec{\measures_{k}})$ consists of (1) the complete state information of the past (until $\srule$ is triggered) where all events, measures and their occurrence time are fully observed and (2) the partial measures of the future. Note that the situation cannot observe or control the occurrences of events beyond state $k$ since we want to show that triggering $\srule$ at state $k$ is sufficient to cause a conflict regardless how the system responds after state $k$.  

A situation is \emph{feasible} with respect to a rule set $\ruleset$ if the past up to state $k$ (i.e., $\fos_0^k$) does not \emph{violate} any rule in $\ruleset$. To provide a formal definition of non-violation up to some state ($k$), we extend the semantics of \dsl to evaluate a trace up to a specified state $k$ to check whether a \textit{violation} of rule $\srule$ has already occurred, denoted as $\fos \vdash^k \srule$. We denote the extended semantics as the \emph{bounded semantics}, which is presented in Appendix~\ref{ap:extendedsem}. 
\begin{example}
    Consider the rule $r5$ in Tbl.~\ref{tab:examplefig} and the trace $\sigma_3$ = $({{\small\texttt{OpenCurtainRequest}}}, \measureassign{1}, 1)$ 
    shown in Fig.~\ref{fig:traces} where $\measureassign{1}({\small\texttt{underDressed}})$ is $\bot$. The trace $\sigma_3$ is feasible  with respect to $r5$ because it does not violate $r5$ before state where $r5$ is triggered ($\fos \models^{1} r5$), but it violates $r5$ after 30 minutes ($\fos \not\models r5$) since the response (\texttt{OpenCurtain}) does not occur within 30 minutes after state 1. 
\end{example}
\noindent
A rule $\srule$ is situationally conflicting if there exists a feasible $\srule$-triggering situation that eventually causes a conflict.  
\begin{definition}[Situational Conflict]\label{def:conditionconflictfull}
Let a rule set $\ruleset$ be given. A rule $\srule$ in $\ruleset$ is \emph{situationally conflicting} if there exists an $\srule$-triggering \textit{situation} $(\fos^k_0, \Vec{\measures_{k}})$ such that:
(1) $\fos^k_0 \vdash^k \ruleset$, and
(2) there \emph{does not} exist a situation $\fos = (\eventocc{1}, \measureassign{1}, \timestamp{1}), \ldots (\eventocc{n}, \measureassign{n}, \timestamp{n})$ as an extension of $\fos^k_0$,
such that $\fos$ preserves the measures in $\Vec{\measures_{k}}$, i.e., for every $i > k$, and every measure $\measure \in \measures$, either $\measureassign{i}(\measure) = \Vec{\measures_k}[i-k](\measure)$ or  $\Vec{\measures_k}[i-k](\measure)$ is undefined,
% $\measureassign{i} \Rightarrow \Vec{\measures_k}[i-k]$) 
and $\fos$ satisfy $\ruleset$ ($\fos \in \lang{\ruleset}$).
\end{definition}

%Intuitively,  a rule $\srule \in \ruleset$ is situational conflicting if there exists a situation ($\fos^k_0$, $\Vec{\measures_{k}}$) where $\srule$ is feasibly triggered (in $\fos^k_0$) without violating any rule in $\ruleset$ but will eventually lead to a conflict in some future states that preserve the measures in $\Vec{\measures_k}$.  {\bf MC:  you already mentioned intuition in the paragraph before definition 8}

\begin{example}\label{example:condconf}
    Consider the rule set $\{r5', r11\}$ with  $r5'$ and $r11$  in Tbl.~\ref{tab:examplefig}. 
    % rule $r5$ in Example~\ref{example:syntax} and  rule $r11:$ \textbf{when} SupportCalled \textbf{then} OpenCurtain \notS{} \textbf{within} 40 \textbf{minutes}.} 
    Let $\fos_0^2$ be the trace shown in Fig.~\ref{fig:traces}, where the measures of the second state evaluate ${\small\texttt{underDressed}}$ to $\bot$. For any partial measure ($*$), $(\fos_0^2, *)$ is an $\srule5'$-triggering situation because 
    % The rule $r5$ is conflicting in the $\srule$-triggering situation ($\fos_0^2$, $*$), where $\fos_0^2 =$ is the trace 
    % $(\{\text{SupportCalled}\}, \measureassign{1}, 1),$  $(\{\text{OpenCurtainRequest}\},$ $\measureassign{2}, 5)$ and the assignment $\measureassign{2}(\textit{underDressed}) = \bot$, as shown in the trace $\sigma_4$ below. 
     $r5'$ is triggered at time point 2, and not violated when triggered (i.e., $\fos_0^2 \vdash^2 r5'$). However, at time point 2, the response of $r5'$ (\notS{} {\small\texttt{OpenCurtain}} \textbf{within} $30$ \textbf{minutes}) conflicts with the remaining response of $r11$ ({\small\texttt{OpenCurtain}} \textbf{within} $40-(11-1)$ \textbf{minutes}).  Thus, for any future partial measure $(*)$, $r5$ is situationally conflicting w.r.t.  ($\fos_0^2$, $*$).
\end{example}
\vspace{-0.1in}
\begin{remark}
    A rule $\srule$ is vacuously conflicting, if and only if it is situationally conflicting for every $r$-triggering situation ($\fos_0^k$, $\Vec{\measures_k}$).
\end{remark}

\vspace{-0.1in}
\section{\upCRRI and Satisfiability} \label{sec:sanitization}

% header
In this section, we present  our approach, \approach (see Fig.~\ref{fig:overview}), for checking  \nNFRs~\CRRI via \fol satisfiability checking. 
We first show that most of well-formedness checks (except situational conflicts) can be reduced to behavior emptiness checking (Sec.~\ref{ssec:emptycheck}). 
Then, we translate the emptiness checking problem into \fol satisfiability (Sec.~\ref{ssec:translate}) to leverage the SoTA \fol satisfiability checker. 
Finally, we show that the identification of situational conflicts can also be reduced to satisfiability checking (Sec.~\ref{ap:translateSituation}).

%\vspace{-0.1in{}}
%-------------- Emptiness checking
\subsection{\upCRRI via Emptiness Checking}\label{ssec:emptycheck}
In this section, we show that the task of identifying vacuous conflict (Def.~\ref{def:dconflict}), redundancy (Def.~\ref{def:redudent}), restrictiveness (Def.~\ref{def:restrictivness}), and insufficiency (Def.~\ref{def:sufficiency}) for a given rule set can be reduced to behavior emptiness checking on the behavior of the rule set. Formally:
%, emptiness checking is defined as follows:
% \red{Given a rule set $\ruleset$ and a set of facts, we can determine whether the behaviour permitted by the rules $\ruleset$ contains these facts via emptiness check:}
\begin{definition}[Rule Emptiness Check]\label{def:emptiness}
Let a rule set $\ruleset$ and a set of facts $\facts$ be given. 
The \emph{emptiness check} of   $\ruleset$ with respect to $\facts$ is $\lang{\ruleset} \cap \lang{\facts} = \emptyset$.
\end{definition}
The input to emptiness checking is a set of rules $\ruleset$ and a set of facts $\facts$. In the following, we describe the appropriate inputs to enable emptiness checking for identifying \nNFRs{} \CRRI. 

%\vspace{-0.03in}
%-------------- Vacuous
\boldparagraph{Detecting Vacuous Conflict}
%\label{sec:conflictingchecking}
%Recall from Def.~\ref{def:dconflict} that 
A rule $\srule \in \ruleset$ is vacuously conflicting in $\ruleset$ if the behavior defined by $\ruleset$ ($\lang{\ruleset}$) does not include a trace where $\srule$ is triggered. 
%In other words, triggering $\srule$ in any situation will lead to a violation of some rules (and thus cause a conflict). 
Therefore, we can determine if a rule is vacuously conflicting by describing the set of traces where $\srule$ is triggered as facts in \dsl, and then checking whether it intersects with $\lang{\ruleset}$ via emptiness checking (Def.~\ref{def:emptiness}). 
\begin{definition}[Triggering]\label{def:ruletrigger}
    Let a rule $\srule = \rulesyntax{\event \; \andS \; \prop}{\dobg}$ be given. The \emph{triggering} of $\srule$, denoted as $\triggerfunc{\srule}$, is the fact $\fact = \factsyntax{\event \; \andS \; \prop}{\top}$.
\end{definition}

%\vspace{-0.08in}
\begin{lemma}\label{lemma:vcdef}
    Let a set of rules $\ruleset$ be given. A rule $\srule \in \ruleset$ is vacuously conflicting if and only if $\lang{\ruleset} \cap \lang{\triggerfunc{\srule}}$ is empty.
\end{lemma}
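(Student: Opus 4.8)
The plan is to prove the biconditional by unfolding the relevant definitions and collapsing both sides to the same statement about \emph{triggering time points}. First I would establish an auxiliary observation: a trace $\fos = (\eventocc{1}, \measureassign{1}, \timestamp{1}) \ldots (\eventocc{n}, \measureassign{n}, \timestamp{n})$ satisfies $\triggerfunc{\srule}$ if and only if $\srule$ is triggered somewhere in $\fos$, i.e., there is a time point $i \in [1,n]$ with $\event \in \eventocc{i}$ and $\measureassign{i}(\prop) = \top$. This follows by applying Def.~\ref{def:fact} to the fact $\triggerfunc{\srule} = \factsyntax{\event \; \andS \; \prop}{\top}$: the fact is satisfied iff there exists $i$ with $\event \in \eventocc{i}$, $\measureassign{i}(\prop) = \top$, and $\fos \models_{i} \top$. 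Since the degenerate response $\top$ is fulfilled at every time point, the third conjunct is vacuous, leaving exactly the triggering condition.

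With this observation in hand, I would rewrite the emptiness statement as a chain of equivalences rather than two separate directions. By Def.~\ref{def:behav} a trace lies in $\lang{\ruleset}$ iff it fulfills every rule in $\ruleset$, and by the observation it lies in $\lang{\triggerfunc{\srule}}$ iff it triggers $\srule$. Hence $\lang{\ruleset} \cap \lang{\triggerfunc{\srule}} = \emptyset$ says precisely that no trace fulfilling $\ruleset$ ever triggers $\srule$; negating the existential over triggering points, this is equivalent to: for every $\fos \in \lang{\ruleset}$ and every $i \in [1,n]$, $\event \notin \eventocc{i}$ or $\measureassign{i}(\prop) = \bot$. This is precisely the condition in Def.~\ref{def:dconflict} for $\srule$ being vacuously conflicting, closing the equivalence.

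The main obstacle—more a point requiring care than a genuine difficulty—is justifying that $\fos \models_{i} \top$ holds for the response $\top$, since $\top$ appears in Tbl.~\ref{tab:syntax} as a proposition rather than as an obligation chain $\dobg$. I would note that $\top$ is used here only as a trivially-fulfilled placeholder response, so that $\triggerfunc{\srule}$ asserts nothing beyond the occurrence of the trigger; once this convention is pinned down, the remainder is pure propositional reasoning over the quantifier structure (turning ``there exists a triggering point'' into ``at all points the trigger fails''). Notably, no properties of the metric or temporal semantics of obligations are invoked, because the construction of $\triggerfunc{\srule}$ deliberately discards the response $\dobg$ of $\srule$.
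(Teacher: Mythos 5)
Your proposal is correct and takes essentially the same route as the paper, which disposes of this lemma in one line by declaring it a direct consequence of Def.~\ref{def:dconflict} and Def.~\ref{def:ruletrigger}; your chain of equivalences is precisely the unfolding of those definitions (together with Def.~\ref{def:fact} and Def.~\ref{def:behav}) that the paper leaves implicit. Your side remark that the placeholder response $\top$ in $\triggerfunc{\srule}$ must be read as a trivially fulfilled obligation chain correctly pins down a genuine looseness in the paper's syntax and does not alter the argument.
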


%\vspace{-0.08in}
Lemma~\ref{lemma:vcdef} is a direct consequence of Def.~\ref{def:dconflict} and Def.~\ref{def:ruletrigger}.
%\red{and we present our approach for solving the emptiness problem with \fol satisfibility in Sec.~\ref{ssec:translate}.  }

%-------------- Redundant
\boldparagraph{Detecting Redundancies}
Recall from Def.~\ref{def:redudent}, that a rule is redundant in a given rule set $\ruleset$ if and only if removing the rule from $\ruleset$ does not affect the behavior of $\ruleset$.

\vspace{-0.03in}
\begin{lemma}\label{lemma:nonredundant}
    Let a set of rules $\ruleset$ be given. A rule $\srule$ is not redundant if and only if there exists a trace $\fos$ such that $\fos \models \ruleset \setminus \srule$ and $\fos \not\models \srule$. 
\end{lemma}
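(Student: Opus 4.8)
The plan is to prove Lemma~\ref{lemma:nonredundant} as a straightforward unfolding of the definition of redundancy (Def.~\ref{def:redudent}) combined with the set-theoretic characterization of behaviour (Def.~\ref{def:behav}). Recall that redundancy of $\srule$ is defined as $\lang{\ruleset \setminus \srule} \subseteq \lang{\ruleset}$. Since $\ruleset = (\ruleset \setminus \srule) \cup \{\srule\}$, a trace $\fos$ fulfills $\ruleset$ exactly when $\fos \models \ruleset \setminus \srule$ \emph{and} $\fos \models \srule$, so $\lang{\ruleset} = \lang{\ruleset \setminus \srule} \cap \lang{\srule}$. The inclusion $\lang{\ruleset} \subseteq \lang{\ruleset \setminus \srule}$ always holds trivially (fewer constraints permit more traces), so the nontrivial content of redundancy is the reverse inclusion $\lang{\ruleset \setminus \srule} \subseteq \lang{\ruleset}$.

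I would prove the contrapositive biconditional directly. First I would observe that $\srule$ is \emph{not} redundant iff $\lang{\ruleset \setminus \srule} \not\subseteq \lang{\ruleset}$, i.e., iff there exists a trace $\fos \in \lang{\ruleset \setminus \srule}$ with $\fos \notin \lang{\ruleset}$. By Def.~\ref{def:behav}, $\fos \in \lang{\ruleset \setminus \srule}$ means precisely $\fos \models \ruleset \setminus \srule$. It then remains to show that, \emph{under the assumption} $\fos \models \ruleset \setminus \srule$, the condition $\fos \notin \lang{\ruleset}$ is equivalent to $\fos \not\models \srule$. This is where the decomposition $\lang{\ruleset} = \lang{\ruleset \setminus \srule} \cap \lang{\srule}$ does the work: given $\fos \models \ruleset \setminus \srule$, we have $\fos \notin \lang{\ruleset}$ iff $\fos \notin \lang{\srule}$ iff $\fos \not\models \srule$. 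Chaining these equivalences yields that $\srule$ is not redundant iff there exists $\fos$ with $\fos \models \ruleset \setminus \srule$ and $\fos \not\models \srule$, which is exactly the claim.

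The main (and only real) obstacle is the careful handling of the set decomposition $\lang{\ruleset} = \lang{\ruleset \setminus \srule} \cap \lang{\srule}$, which must be justified from the definition that a trace fulfills a rule set iff it fulfills every rule in it (stated in the Semantics paragraph, ``$\fos \models \ruleset$ \dots for every $\srule \in \ruleset$''). This is genuinely routine, but I would state it as an explicit intermediate fact so that the equivalence ``$\fos \notin \lang{\ruleset} \Leftrightarrow \fos \not\models \srule$'' under the hypothesis $\fos \models \ruleset \setminus \srule$ is transparent. No induction on trace structure or appeal to the temporal semantics of obligations is needed: the lemma lives entirely at the level of trace-set membership, so everything reduces to propositional set reasoning once the behaviour-as-intersection identity is in hand.
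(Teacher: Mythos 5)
Your proof is correct and takes essentially the same route as the paper, which offers no written argument beyond stating that the lemma ``is a direct consequence of Def.~\ref{def:redudent}''; your explicit unfolding via the identity $\lang{\ruleset} = \lang{\ruleset \setminus \srule} \cap \lang{\srule}$ (justified from the clause that a trace fulfills a rule set iff it fulfills every rule in it) is precisely that intended direct consequence, spelled out. Nothing is missing.
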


\vspace{-0.05in}
Lemma~\ref{lemma:nonredundant} is a direct consequence of Def.~\ref{def:redudent}. 
To take advantage of this lemma, we define a notion of \emph{noncompliance} to capture $\fos \not\models \srule$. 
\begin{definition}[Noncompliant]\label{def:violation}
    Let a rule $\srule = \rulesyntax{\event \; \andS \; \prop}{\dobg}$ be given. The \emph{noncompliant fact of $\srule$}, denoted  as $\noncompliant{\srule}$, is the fact $\fact = \factsyntax{\event \; \andS \; \prop}{\notS \; (\dobg})$. 
\end{definition}
\begin{example}
     The noncompliant fact derived from $r5$ (see Tbl.~\ref{tab:examplefig}), is $\noncompliant{r5}$ $=$
    \textbf{exists} {\small\texttt{OpenCurtainRequest}} \textbf{and} \textbf{not} {\small\texttt{underDressed}} \textbf{while} \textbf{not} {\small\texttt{OpenCurtain}} \textbf{within} 30 \textbf{min}.
\end{example}

\vspace{-0.05in}
\begin{lemma}\label{lemma:concern}
    Let a rule $\srule = \rulesyntax{\event \wedge \prop}{\dobg}$ be given. For a trace $\fos$,  $\fos \not\models \srule$ if and only if $\fos \models \noncompliant{\srule}$.  
\end{lemma}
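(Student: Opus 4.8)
The plan is to prove the biconditional by unfolding both semantic definitions and showing that each side reduces to the \emph{same} existential assertion, so the equivalence then follows by pure propositional and quantifier reasoning. First I would expand the left-hand side using the Rule semantics: $\fos \models \srule$ holds exactly when, for every time point $i$ with $\event \in \eventocc{i}$ and $\measureassign{i}(\prop) = \top$, the chain is fulfilled, i.e. $\fos \models_{i} \dobg$. Writing this as a guarded universal, $\fos \models \srule$ is logically equivalent to $\forall i.\,\bigl((\event \in \eventocc{i} \wedge \measureassign{i}(\prop)=\top) \Rightarrow \fos \models_{i} \dobg\bigr)$.

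Next I would negate this formula using the standard duality of quantifiers and the negation of an implication, turning it into the existential form $\exists i.\,\bigl(\event \in \eventocc{i} \wedge \measureassign{i}(\prop)=\top \wedge \fos \not\models_{i} \dobg\bigr)$, which is precisely the meaning of $\fos \not\models \srule$. On the right-hand side, I would substitute $\noncompliant{\srule} = \factsyntax{\event \; \andS \; \prop}{\notS \; \dobg}$ (Def.~\ref{def:violation}) and unfold fact satisfaction for the negated-chain form from Def.~\ref{def:fact}: $\fos \models \noncompliant{\srule}$ holds exactly when there is a time point $i$ with $\event \in \eventocc{i}$, $\measureassign{i}(\prop) = \top$, and $\dobg$ not fulfilled at $i$, i.e. $\fos \not\models_{i} \dobg$. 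This existential statement is syntactically identical to the one obtained by negating the rule, which closes the argument in both directions simultaneously.

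The step I expect to require the most care is justifying that the clause ``$\dobg$ is \emph{not} fulfilled at time point $i$'' appearing in the fact semantics is genuinely the Boolean complement of ``$\dobg$ is fulfilled subject to time point $i$'' used in the Rule semantics---rather than, say, a separate ``violated'' predicate that could leave a truth-value gap on finite traces. The key observation I would emphasize is that both Def.~\ref{def:fact} and the Rule clause are phrased in terms of the identical obligation-chain fulfillment relation $\fos \models_{i} \dobg$, so ``not fulfilled'' is literally $\fos \not\models_{i} \dobg$; no three-valued ``pending'' case intrudes, and the two existential characterizations coincide verbatim. Everything else is routine quantifier manipulation, and the trigger guards $\event \in \eventocc{i} \wedge \measureassign{i}(\prop)=\top$ match exactly between the rule's triggering condition and the fact's conditions (1)--(2).
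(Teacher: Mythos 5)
Your proof is correct and is essentially the paper's own argument: the paper simply states that Lemma~\ref{lemma:concern} ``follows directly from the semantics of normalized \dsl,'' and your unfolding---rule satisfaction as a guarded universal over triggering points, its negation as an existential, and satisfaction of $\noncompliant{\srule}$ under Def.~\ref{def:fact} as the identical existential---is precisely that direct argument made explicit. Your caution about ``not fulfilled'' being the literal Boolean complement $\fos \not\models_{i} \dobg$ (rather than a separate violation predicate that could leave a gap on finite traces) is well placed and correctly resolved by the phrasing of Def.~\ref{def:fact}.
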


\vspace{-0.05in}
The proof of Lemma~\ref{lemma:concern} follows directly from the semantics of normalized \dsl described in Sec.~\ref{sec:background}.
\begin{lemma}\label{lemma:redundantdef}
    Let a set of rules $\ruleset$ be given. A rule $\srule \in \ruleset$ is redundant if and only if $\lang{\ruleset \setminus \srule} \cap \lang{\noncompliant{\srule}}$ is empty.
\end{lemma}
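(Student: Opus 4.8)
The plan is to prove the contrapositive equivalence, namely that $\srule$ is \emph{not} redundant if and only if $\lang{\ruleset \setminus \srule} \cap \lang{\noncompliant{\srule}} \neq \emptyset$, and then negate both sides to obtain the stated ``redundant $\iff$ empty'' form. This lets me reuse Lemma~\ref{lemma:nonredundant}, which already characterizes non-redundancy as the existence of a single witness trace $\fos$ satisfying $\fos \models \ruleset \setminus \srule$ and $\fos \not\models \srule$.

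First I would translate each of the two conjuncts of Lemma~\ref{lemma:nonredundant} into a membership statement about behaviours. For the first conjunct I appeal to Def.~\ref{def:behav}: since $\lang{\ruleset \setminus \srule}$ is the \emph{largest} set of traces each respecting every rule in $\ruleset \setminus \srule$, maximality gives $\fos \models \ruleset \setminus \srule$ exactly when $\fos \in \lang{\ruleset \setminus \srule}$. For the second conjunct I apply Lemma~\ref{lemma:concern}, which yields $\fos \not\models \srule \iff \fos \models \noncompliant{\srule}$; combined with the fact-behaviour characterization of Def.~\ref{def:fact} (where $\lang{\noncompliant{\srule}}$ is the largest set of traces satisfying the fact $\noncompliant{\srule}$), this gives $\fos \not\models \srule \iff \fos \in \lang{\noncompliant{\srule}}$.

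Next I would chain these equivalences. The witness trace required by Lemma~\ref{lemma:nonredundant} is precisely a trace lying in both $\lang{\ruleset \setminus \srule}$ and $\lang{\noncompliant{\srule}}$, so such a witness exists if and only if $\lang{\ruleset \setminus \srule} \cap \lang{\noncompliant{\srule}} \neq \emptyset$. Combining this with Lemma~\ref{lemma:nonredundant} establishes ``$\srule$ not redundant $\iff$ the intersection is nonempty,'' and negating both sides delivers the claim ``$\srule$ redundant $\iff$ the intersection is empty,'' which is exactly the emptiness check of Def.~\ref{def:emptiness} instantiated with $\facts = \{\noncompliant{\srule}\}$.

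I do not expect a genuine obstacle: the statement is essentially a repackaging of Lemma~\ref{lemma:nonredundant} and Lemma~\ref{lemma:concern} in the vocabulary of emptiness checking. The one point needing care is the implicit use of the ``largest set'' clauses in Def.~\ref{def:behav} and Def.~\ref{def:fact}, which is what converts the satisfaction relations $\models$ into set-membership statements $\in$; I would make that step explicit so that $\lang{\ruleset \setminus \srule} \cap \lang{\noncompliant{\srule}}$ is genuinely the set of common witnesses rather than merely containing them.
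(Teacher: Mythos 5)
Your proposal is correct and follows essentially the same route as the paper, which derives the lemma directly from Def.~\ref{def:redudent}, Lemma~\ref{lemma:nonredundant}, and Lemma~\ref{lemma:concern}; your contrapositive chaining of these results, plus the explicit use of the ``largest set'' clauses of Def.~\ref{def:behav} and Def.~\ref{def:fact} to turn satisfaction into membership, is exactly the argument the paper leaves implicit in its one-line justification.
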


\vspace{-0.03in}
Lemma~\ref{lemma:redundantdef} is a direct consequence of Def.~\ref{def:redudent}, Lemma~\ref{lemma:nonredundant}, and Lemma~\ref{lemma:concern}.
%\red{and we will present our approach for solving the emptiness problem with \fol satisfibility in Sec.~\ref{ssec:translate}.  }

%-------------- Restrictive & Insufficient
\boldparagraph{Detecting Restrictive or Insufficient \nNFRs}
Definitions of restrictiveness (Def.~\ref{def:restrictivness}) and insufficiency (Def.~\ref{def:sufficiency}) are already expressed as  emptiness checking problems.
%\red{and we will present our approach for solving them with \fol satisfiability in Sec.~\ref{ssec:translate}.  }
%, given a desirable behavior (purposes) and undesirable behavior (concerns) as a set of facts, respectively.

%Note that situational conflict has not been mentioned, it is handled using satisfiability in Section~\ref{ssec:situationConflicts}.

%\vspace{-0.1in}
%-------------- Translate to FOL*
\subsection{Encoding SLEEC DSL Rules \& Facts in \fol}\label{ssec:translate}

%  header
%In this section, we propose a solution to the emptiness checking problem.  
To check emptiness (Def.~\ref{def:emptiness}) of $\ruleset$ with respect to $\facts$, we can check the satisfiability of the \fol constraint $\translate{\ruleset} \wedge \translate{\facts}$.  
%We now present the encoding function $T$.

%To translate \dsl rules and facts into \fol constraints, 
The signature of \dsl is mapped to the signature of~\fol as follows: (1) every event $\event \in \events$ is mapped to a class of  relational objects $\Class{\event}$; and (2) every measure $\measure \in \measures$ is mapped to a named attribute in relational objects of class $\Class{\measures}$, where $\Class{\measures}$ is a special class of relational objects for measures. 
The function $T$ then translates every \dsl rule $\srule $ to a \fol formula $\translate{\srule}$ such that $\lang{\srule}$ is not empty if and only if $\translate{\srule}$ is satisfiable. $T$ follows the semantics of normalized $\dsl$ detailed in Sec.~\ref{sec:background} to translate  the exact condition for \textit{fulfillment} and \textit{violation}.  

To translate the fulfillment relationship of \dsl responses with respect to a time point $i$ (i.e., $\models_i$), $T$ uses a helper function $T^*$ (see Appendix~\ref{ap:translate}) to recursively visit the elements of the \dsl rule (i.e., obligations, propositions and terms) and encode the condition of fulfillment with respect to the time point $i$ (where the time value and measures of state $i$ are represented by the input relational object $\obj{\measures}$).  
$T$ also uses a helper function \textbf{Violation} (see Appendix~\ref{ap:translate}) to translate the condition for obligations and conditional obligations to be violated at some time point $j$.  
Specifically, $T$ translates a rule
$\rulesyntax{\event \wedge \prop}{\dobg}$ into the constraint ``whenever the triggering event $\event$ occurs ($\forall \obj{\event}:\Class{event})$, the measures are available 
(i.e., $\exists \obj{\measures}:\Class{\measures}\cdot \obj{\event}.\textit{time} = \obj{\measures}.\textit{time}$), and if the measures 
 satisfy the triggering condition ($T^{*}(\prop, \obj{\measures})$), then the obligation chain should be fulfilled ($T^{*}(\dobg, \obj{\measures})$)''. 
Similarly, every fact is translated into the constraint ``there exists an occurrence of an event $\event$ ($\exists \obj{\event}:\Class{event})$, while the measures are available 
(i.e., $\exists \obj{\measures}:\Class{\measures}\cdot \obj{\event}.\textit{time} = \obj{\measures}.\textit{time}$) where the condition $\prop$ is satisfied ($T^{*}(\prop, \obj{\measures})$), and the obligation chain is fulfilled ($T^{*}(\dobg, \obj{\measures})$)''. 

To ensure that no two different measures for the same time point exist, we introduce the \fol \textit{measure consistency axiom}: $\textit{axiom}_{mc} \gets \forall \obj{\measures}_1, \obj{\measures}_2: \Class{\measures} (\obj{\measures}_1.\textit{time} = \obj{\measures}_2.\textit{time} \Rightarrow \obj{\measures}_1 \equiv \obj{\measures}_2)$, where $\equiv$ is the semantically equivalent relational constraint that asserts equivalence of every attribute between $\obj{\measures}_1$ and $\obj{\measures}_2$.

We prove the correctness of the \fol translation and provide the detailed translation function $T$ in Appendix~\ref{app:sanitizationSoundness}.

%\begin{theorem}[Correctness of the \fol translation]\label{thm:transcorrectness}
%Let a set of rules $\ruleset$ and facts $\facts$ in \dsl be given. There exists a trace $\fos = (\eventocc{1}, \measureassign{1}, \timestamp{1}),$ $ (\eventocc{2}, \measureassign{2}, \timestamp{2}),$ $\ldots  (\eventocc{n}, \measureassign{n}, \timestamp{n})$ such that $\fos \in \lang{\ruleset} \cap \lang{\facts}$ if and only if $\translate{\ruleset} \wedge \translate{\facts} \wedge \textit{axiom}_{mc}$ has a satisfying solution $(\domain, v)$.
%\end{theorem}

%The proof of Thm.~\ref{thm:transcorrectness} can be found in Sec.~\ref{ap:translate}. Thm.~\ref{thm:transcorrectness} is used to establish the correctness of our \fol satisfiability based approach for detecting vacuous-conflict, redundancy, restrictveness and insufficiency. We discuss the details of the correctness claims in Sec.~\ref{app:sanitizationSoundness}.

    %-------------------
    %-  Table with translation examples
   
\vspace{-0.03in}
\begin{example}\label{example:translate}
    The \fol translation of  rule $r5$ is $\forall \obj{cr}:$ $\Class{{\small\texttt{OpenCurtainRqst}}} . $ $(\exists\obj{\measures}:\Class{\measures}  . 
        (\obj{cr}.\textit{time}=\obj{\measures}.\textit{time} \wedge \neg \obj{\measures}.{\small\texttt{underDressed}})) \Rightarrow$  $\exists \obj{co}:\Class{{\small\texttt{OpenCurtain}}} \; . (\obj{cr}.\textit{time} \le \obj{co}.\textit{time} \le \obj{cr}.\textit{time} + 30)$.
\end{example}
% \hlTodo{------------------ In Progress ----------}

\begin{example}
 Consider the rule set $\ruleset = \{r5 , r8\}$. 
 $\triggerfunc{r5}$ is \textbf{exist} {\small\texttt{OpenCurtainRequest}} \andS{} \notS{} {\small\texttt{underDressed}} \textbf{while} $\top$. To check whether $\srule_5$ is vacuously conflicting, we reduce the emptiness problem for $\lang{\ruleset} \bigcap \lang{\triggerfunc{r5}}$ to the satisfiability of 
  \fol formula $\translate{r5} \wedge \translate{r8} \wedge \translate{\triggerfunc{r5}} \wedge \textit{axiom}_{mc}$. The \fol formula is UNSAT which means that the set of all traces where $r5$ is triggered does not intersect with the behaviour allowed by $r8$, and thus $r5$ is vacuously conflicting in $\ruleset$ due to $r8$.  
\end{example}

\begin{example}\label{example:redudentSAT}
    Consider the rule set $\ruleset = \{r5, r6, r7\}$. To check whether $r5$ is redundant, we reduce the emptiness checking problem for $\lang{\ruleset \setminus r5} \bigcap \lang{\noncompliant{r5}}$ to  satisfiability of the \fol formula $\translate{r8} \wedge \translate{\noncompliant{r5}} \wedge \textit{axiom}_{mc}$. The \fol formula is UNSAT thus the set of traces not complying with $r5$ ($\noncompliant{\ruleset}$) does not intersect with the behaviour of $r6$ and $r7$. $r5$ is logical consequence of $r6$ and $r7$, and so $\ruleset$ is redundant.
    %it is redundant in $\ruleset$.
\end{example}
\begin{example}
    Consider the \dsl rule $r5$ and  the fact  $\fact_{1}$ in Tbl.~\ref{tab:examplefig} for describing the desirable behavior ``a user requests to open the curtain while she is not under-dressed''. The \fol formula $\translate{r5} \wedge \translate{\fact_{1}}$ has a satisfying solution $(\domain, v)$ where (1) $\domain = \{\obj{\measures}_1, \obj{\measures}_2,$ $\obj{{\small\texttt{OpenCurtainRqst}}}, \obj{\texttt{OpenCurtain}}\}$; (2) $v(o.\textit{ext} = \top)$ for every $o \in \domain$; and (3) $v(\obj{\measures}_1.{\small\texttt{underDressed}}) = \bot \wedge v(\obj{\measures}_1.{\small\texttt{time}}) = 1 \wedge$
    $v(\obj{\measures}_2.{\small\texttt{time})} = 2$. The solution is mapped to a trace $\sigma_5 = (\{{\small\texttt{OpenCurtainRequest}}\}, \measureassign{1}, 1),$ $(\{{\small\texttt{OpenCurtain}}\}, \measureassign{2}, 3)$ where $\measureassign{1}({\small\texttt{underDressed}}) = \bot$, shown in Fig.~\ref{fig:traces}. The trace $\fos$ is contained in $\lang{r5} \cap \lang{\fact_{1}}$.
    % , and thus $\lang{r5} \cap \lang{\fact_{1}}$ is not empty. 
    Therefore,  $r5$ does not restrict the behavior defined in $\fact_{1}$.
    %{\bf MC:  again, please change fonts.  Events and measures used in formulas are in italics and in rules -- regular font. It is inconsistent!}
\end{example}

%---------------------Archive 
%Given a rule set, we examine each rule  to determine whether it is vacuously conflicting by checking \fol satisfiability using Thm.~\ref{thm:vcdef}. In addition to providing a binary ``yes'' or ``no'' answer, we also generate a diagnosis (in case the rule is redundant) obtained from a causal proof of UNSAT. This diagnosis helps highlight the causes of the conflict. The details of the diagnosis are in Sec.~\ref{sec:diagnose}.

%\vspace{-0.2in}
\subsection{Situational Conflicts via Satisfiability}\label{sec:situationConflicts}
Recall from Def.~\ref{def:conditionconflictfull} that the detection of situational conflict involves searching for a situation (i.e., a partial trace) such that all of its extensions are rejected by a given rule set. This corresponds to an \textit{exists} (situation)-\textit{forall} (extension) query, which makes a direct reduction to emptiness checking, defined in Sec.~\ref{ssec:emptycheck}, difficult. 
% \green{alcc: Do you need to claim that this is impossible? It is always tricky to establish negative results, as someone may devise a check.} 
Fortunately, in the \dsl language, the time limit of obligations is always defined as \textit{continuous} time windows. This characteristic allows us to identify conflicts when a positive window (an event $\event$ must occur) is completely covered by the continuous union of negative windows (the event $\event$ must not occur), and
detect a subset of situational conflicts by checking the reachability of a state where a rule is triggered while its obligations are blocked. 

In this section, we propose a sound approach, through a direct reduction to \fol satisfiability, to detect situational conflicts. Note that we do not tackle
classes of situational conflicts without blocked responses, called \emph{transitive situational conflicts} (See Remark ~\ref{remark:tanssc} in Appendix~\ref{ap:translate}).
% (see Remark.~\ref{remark:tanssc}). 
% {\bf MC:  cannot find remark 2!}
We start by  an informal discussion of the different states of obligations: triggered, fulfilled, violated, active, forced, and blocked.  The relationship between them is graphically displayed in  Fig.~\ref{fig:obgdep}. The formal definitions are in Appendix~\ref{ap:translateSituation}.
%can be found in the Appendix \hlTodo{add pointer to appendix}.}

\begin{figure}
    \centering
\includegraphics[width=0.4\textwidth]{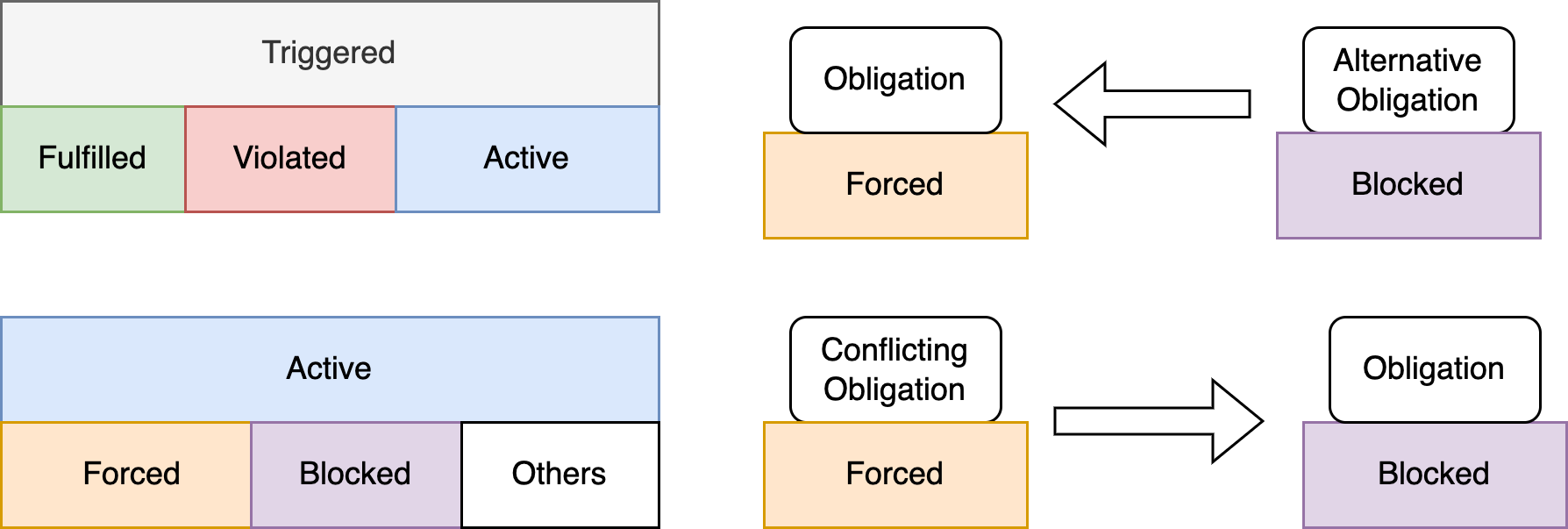}
    \vspace{-0.1in}
    \caption{\small{The relationship between different states of obligations.}}
%    The two diagrams on the left show the subset relationship between different states of obligations. The two diagrams on the right show the dependencies between blocked and forced obligations. }}
    \vspace{-0.2in}
    \label{fig:obgdep}
\end{figure}

\vspace{-0.03in}
\noindent
\boldparagraph{States of Obligations}
Given a situation $(\fos^k_0, \measures_k)$, an obligation $\obg$ is \textit{triggered} if some states in $\fos^k_0$ trigger a rule $\srule$ where $\obg$ is the response of $\srule$. Once $\obg$ is triggered, it might be \textit{fulfilled} or \textit{violated} in $\fos^k_0$. However, if $\obg$ is neither fulfilled nor violated, the obligation is then \textit{active} which can be fulfilled or violated in the future (in the extension of $\fos_0^k$) in order to satisfy all the rules. The relationship between triggered, fulfilled, violated and active obligations is visualized on the left hand side of Fig.~\ref{fig:obgdep}.

\vspace{-0.05in}
\begin{example}\label{example:sc1}
    Consider the rule set $\{r5', r11\}$ where $r5'$ and $r11$ are  in Tbl.~\ref{tab:examplefig}. 
    Let $\fos_0^2$ be a trace prefix, with 2 states, shown in Fig.~\ref{fig:traces}, where the measures of the second state evaluate \texttt{underDressed} 
    to $\bot$. In the situation ($\fos_0^2, *$), $r5'$ is triggered at time point 2, and its response, $\obg =$ \notS{} \texttt{OpenCurtain} \textbf{within} $30$ \textbf{min}, is triggered as well. Since event \texttt{OpenCurtain} does not occur at time point 2, $\obg$ is not violated but is also not fulfilled.  
     Therefore, $\obg$ is an active obligation at time point 2.  At time point 1, $r11$ is triggered, and its response, $\obg'=$ \texttt{OpenCurtain} \textbf{within} $40$ \textbf{min}, is triggered as well. 
     Since \texttt{OpenCurtain} does not occur at either time point 1 or 2, $\obg'$ is neither fulfilled nor violated. Thus, $\obg'$ is active at time point 2.
\end{example}

\vspace{-0.05in}
An active obligation can be in two special states:
% There are {\bf MC:  in SLEEC DSL?} two special types of active obligations, 
\textit{forced} and \textit{blocked}. An active obligation $\obg$ is \textit{forced} if it is the only remaining option for an active response, i.e., there is no alternative to $\obg$ that is not blocked.  Therefore, violating a forced obligation would inevitably cause a violation of some rules. An obligation $\obg$ is blocked if there exists a ``conflicting'' obligation $\obg'$, and $\obg'$ is forced.  The dependencies between forced and blocked obligations are visualized in the right hand side of Fig.~\ref{fig:obgdep}. When a rule is triggered while its response is blocked in ($\fos^k_0$, $\Vec{\measures}_{k}$), a conflict is inevitable. 
\begin{example}\label{example:sc2}
    Continuing from Ex.~\ref{example:sc1}, let $\obg$ and $\obg'$ be the two active obligations triggered by $\fos_0^2$ at time points 2 and 1, respectively.
    Since time point 2 occurs 10 min. after time point 1 in $\fos^k_0$, the remaining time limit of the obligation $\obg'$ is $30$ min. Both $\obg$ and $\obg'$ are forced since they do not have alternatives. Both $\obg$ and $\obg'$ are also blocked at time point 2 since they are conflicting.  %({\small\texttt{OpenCurtain}} \textbf{within} $30$ \textbf{min.}conflicts with {\small\texttt{OpenCurtain}}  \notS{} \textbf{within} $30$ \textbf{min}).
\end{example}
There are circular dependencies between forced and blocked obligation chains! Fortunately, by encoding the status definitions into \fol, we can leverage the \fol solver's ability to incrementally and lazily unroll the necessary definitions to resolve such dependencies (see Appendix~\ref{ap:translateSituation} for details) 

\vspace{-0.03in}
\begin{lemma}\label{lemma:sc}
    % Let a  $\fos = (\eventocc{1}, \measureassign{1}, \timestamp{1}) \ldots (\eventocc{n}, \measureassign{n}, \timestamp{n})$ and 
     For every rule $\srule = \rulesyntax{\event \wedge \prop}{\dobg}$ in a rule set $\ruleset$, if there exists a $\srule$-triggering situation ($\fos^k_0$, $\Vec{\measures}_{k}$) (see Def.~\ref{def:situation}) 
    where the obligation chain $\dobg$ is blocked at time point $k$, 
    then $\srule$ is situationally conflicting with respect to the situation ($\fos^k_0$, $\Vec{\measures}_{k}$).
\end{lemma}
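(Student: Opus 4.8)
The plan is to verify the two clauses of the situational-conflict definition (Def.~\ref{def:conditionconflictfull}) for the given situation $(\fos^k_0, \Vec{\measures}_{k})$. Clause~(1), feasibility $\fos^k_0 \vdash^k \ruleset$, I would obtain from the hypothesis itself: the state \emph{blocked} is only defined for \emph{active} obligations, and the obligation-state analysis of Appendix~\ref{ap:translateSituation} is conducted over a prefix that has not yet violated any rule (equivalently, one may read feasibility as a standing assumption of these definitions). The substantive work is clause~(2): I must show that \emph{no} extension $\fos$ of $\fos^k_0$ that preserves the partial measures $\Vec{\measures}_{k}$ belongs to $\lang{\ruleset}$.

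The heart of the argument is a propagation lemma about obligation states, proved for an arbitrary $\fos \in \lang{\ruleset}$ that extends $\fos^k_0$ and preserves $\Vec{\measures}_{k}$: (A)~every \emph{forced} obligation is fulfilled in $\fos$, and (B)~every \emph{blocked} obligation is \emph{not} fulfilled in $\fos$. Since \emph{forced} and \emph{blocked} are mutually recursive (Fig.~\ref{fig:obgdep}), I would read them as the least fixpoint of their defining operator and stratify each derived obligation by the fixpoint stage $n$ at which it first appears; (A) and (B) are then proved by simultaneous induction on $n$. For~(A): a forced $\obg$ is a response option of some rule $\srule''$ triggered inside $\fos^k_0$; as $\fos \models \srule''$, the chain of $\srule''$ must be fulfilled subject to its trigger point, and by the chain semantics of Sec.~\ref{sec:background} this means exactly one still-active option is fulfilled (earlier ones violated); forcedness says every alternative option is blocked, so the induction hypothesis~(B) at strictly smaller stage rules each of them out, leaving $\obg$ as the only fulfillable option, hence fulfilled. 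For~(B): a blocked $\obg$ has a conflicting \emph{forced} witness $\obg'$; the induction hypothesis~(A) at strictly smaller stage gives that $\obg'$ is fulfilled in $\fos$, and since two \emph{conflicting} obligations can never both be fulfilled on a single trace (their continuous time windows are incompatible, e.g.\ a positive window covered by negative windows, as discussed in Sec.~\ref{sec:situationConflicts}), $\obg$ cannot be fulfilled. The finitely many obligations triggered within $\fos^k_0$ ensure the fixpoint stabilizes, so the induction is well-founded; the two base cases (a forced obligation with no alternatives, and the empty blocked set at stage~$1$) are immediate.

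To conclude, I would instantiate the propagation lemma at the response $\dobg$ of $\srule$. Because the situation is $\srule$-triggering (Def.~\ref{def:situation}), $\srule$ fires at state $k$, and since $\fos$ extends $\fos^k_0$, satisfying $\srule$ forces $\dobg$ to be fulfilled subject to $k$. But $\dobg$ is blocked at $k$, i.e.\ each of its active options is blocked; invariant~(B) then says none of them is fulfilled in $\fos$, so by the chain semantics $\dobg$ is not fulfilled and $\fos \not\models \srule$, whence $\fos \notin \lang{\ruleset}$. As $\fos$ was an arbitrary measure-preserving extension, clause~(2) holds, and together with clause~(1) this makes $\srule$ situationally conflicting with respect to $(\fos^k_0, \Vec{\measures}_{k})$.

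The main obstacle is the circular dependency between \emph{forced} and \emph{blocked}: making the simultaneous induction rigorous requires committing to the least-fixpoint reading of both predicates and exhibiting a stage rank that strictly decreases along both the forced-uses-blocked and blocked-uses-forced steps, and then checking that this least fixpoint is exactly what the \fol encoding of Appendix~\ref{ap:translateSituation} computes (this is the soundness, i.e.\ the ``if'', direction, for which the least fixpoint is the safe choice). A secondary difficulty is to make the \emph{conflicting} relation precise enough that ``never co-fulfillable on one trace'' is a genuine consequence of the continuous-window obligation semantics, and to confirm that chain-level blockedness is indeed equivalent to all active options being individually blocked.
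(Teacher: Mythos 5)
Your proposal is correct and, at its core, follows the same strategy as the paper's own proof sketch: assume an arbitrary extension $\fos$ of $\fos^k_0$ that preserves $\Vec{\measures}_{k}$ and lies in $\lang{\ruleset}$, reduce the claim to showing that a blocked obligation can never be fulfilled in $\fos$, and derive the contradiction from the window-covering conditions in the definition of \emph{blocked} (a positive obligation $\within{\event}{\term}$ blocked by a forced $\within{\notS\;\event}{\term'}$ whose window covers its remaining deadline, and symmetrically for the negative case). Where you genuinely go beyond the paper is in the treatment of the forced/blocked circularity. The paper's sketch decomposes the blocked chain into obligations blocked at their (unique) successive violation times and then, in the positive case, simply asserts that an occurrence of $\event$ ``violates $\srule'$'', the rule whose obligation $\obg'$ is forced --- silently using exactly your invariant (A), that a forced obligation must be fulfilled in any compliant trace. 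By Def.~\ref{def:triggered} this is itself a recursive matter, since forcedness of $\obg'$ only says that the remaining options of $\srule'$'s chain are blocked; the paper acknowledges the circularity but resolves it only at the level of the \fol encoding (lazy unrolling by the solver), leaving the paper-level argument informal at this point. Your simultaneous least-fixpoint induction proving (A) forced $\Rightarrow$ fulfilled and (B) blocked $\Rightarrow$ not fulfilled, with a stage rank that strictly decreases across both halves of the mutual recursion, is precisely the missing justification. The price is extra bookkeeping: the stratification itself, well-foundedness (from finitely many obligations triggered in $\fos^k_0$), and the chain-level-to-option-level blockedness equivalence you flag as a caveat --- which the paper does handle explicitly, via uniqueness of violation times for the positive prefix of the chain, so you should fold that step in rather than leave it open. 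What your version buys is an argument that stands on its own as a proof rather than a sketch; what the paper's version buys is brevity and a direct hand-off of the recursion to the solver semantics it ultimately relies on.
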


\vspace{-0.05in}
Lemma~\ref{lemma:sc} enables the detection of situational conflicts for a rule $\srule$ in  $\ruleset$ by encoding the sufficient condition for situational conflict into \fol (denoted by $\sctranslate{\ruleset, \srule}$).
%and checks \red{its satisfiability using an off-the-shelf \fol satisfiability checker~\cite{feng-et-al-23}}. 
Every satisfying solution to $\sctranslate{\ruleset, \srule}$  represents a situation where $\srule$ is situational conflicting.  
The proof of Lemma~\ref{lemma:sc}, the situational encoding $\sctranslate{\ruleset, \srule}$ and its correctness proof sketch are in Appendix~\ref{ap:translateSituation}.

%\vspace{-0.1in}
\subsection{The \approach Approach} 
\approach turns well-formedness problems into \fol satisfiability checking using Lemma~\ref{lemma:vcdef} for vacuous conflicts, Lemma~\ref{lemma:sc} for situational conflicts, Lemma~\ref{lemma:redundantdef} for redundancies, and Def.~\ref{def:restrictivness} and \ref{def:sufficiency} for restrictiveness and  insufficiency, respectively. Following this, \approach encodes the input \dsl rules  into \fol (Step 1 in Fig.~\ref{fig:overview}), as explained in Sec.~\ref{ssec:translate}. %(the detailed encoding function is provided in the supplementary material). 
Then, it identifies \nNFR{} \CRRI using a SoTA \fol satisfiability checker (Step 2). The remaining step, computing the diagnosis, is described in Sec.~\ref{sec:diagnose}.

%{\bf MC:  integrate me!}
%Since most \CRRI (except situational conflicts) can be reduced to emptiness checking (see Lemmas~\ref{lemma:vcdef} and \ref{lemma:redundantdef} and Def.~\ref{def:restrictivness} and \ref{def:sufficiency}), and we can further reduce emptiness checking to \fol satisfiability with the translation function $T$, the reduced \CRRI problems can be solved with an off-the-shelf \fol satisfiability checker.

%\vspace{-0.1in}
\section{Diagnosis for \upCRRI}\label{sec:diagnose}
% header
\approach not only detects \CRRI  but also diagnoses their causes, 
thereby aiding stakeholders in comprehending and debugging problematic rules. 
Below, we present the diagnosis process for different well-formedness problems, leveraging the result of satisfiability checking. 
We first show how to utilize the causal proof  of (un)satisfiability (Sec.~\ref{sec:proof}) to pinpoint the reasons behind redundancy, vacuous-conflict, and restrictiveness (Sec.~\ref{ssec:diagUnsat}). 
Then we present a satisfying \fol solution-based diagnosis for insufficiency and  
combine the solution-based and proof-based diagnosis to highlight the cause of situational conflict (Sec.~\ref{ssec:diagHybrid}).

%\vspace{-0.1in}
\subsection{Causal \fol Proof}
\label{sec:proof}
We first introduce causal \fol proofs and  \fol derivation rules. Then we focus on the \textbf{Impl} rule and discuss its role as the \textit{core} of UNSAT derivations for computing our diagnosis.
\emph{A causal \fol proof} is a sequence of derivation steps $L_1, L_2 \ldots, L_n $. Each  step $L_i$ is a tuple ($i$, $\psi$, $o$, $\textit{Deps}$ ) where (1) $i$ is the ID of the derivation step; 
% \green{alcc: which rule? I don't think you mean a SLEEC rule. I suggest rephrasing.}, 
(2) $\psi$ is the derived \fol lemma; (3) $o$ is the name of the \textit{derivation rule} used to derive $\psi$; and (4) $\textit{Deps}$ are  IDs of dependent lemmas for deriving  $\psi$. A \emph{derivation step is sound} if the lemma $\psi$ can be obtained via the derivation rule $o$ using lemmas from $\textit{Deps}$. A \emph{proof is sound} if every derivation step is sound. The proof is \textit{refutational} if the final derivation contains the lemma $\bot$ (UNSAT).  

\begin{table}
    \centering
    \caption{\small \fol causal proof of UNSAT for  $\phi_1$ and $\phi_2$ in Ex.~\ref{example:proof1}. }
    \vspace{-0.1in}
    \scalebox{0.78}{
        \begin{tabular}{c c c c}
        \toprule
             ID & Lemma & Derivation Rule & Deps \\
             \toprule
             1 & \makecell{$\forall a:A\exists b:B\cdot (a.\textit{time} \le b.\textit{time} \le a.\textit{time} + 10)$} & \textbf{INPUT}: $\phi_1$ & \{\}\\
             2 & \makecell{$\exists a:A\forall b:B \cdot(b.\textit{time} \ge a.\textit{time} + 20 \wedge p(a,b))$} &  \textbf{INPUT}: $\phi_2$ & \{\}\\
             3 & \makecell{$\forall b:B \cdot(b.\textit{time} \ge a_1.\textit{time} + 20  \wedge p(a_1,b))$} & \textbf{EI}: $[a \gets a_1]$ & \{2 \}\\
             4 & \makecell{$\exists b:B\cdot (a_1.\textit{time} \le b.\textit{time} \le a_1.\textit{time} + 10)$} &  \textbf{UI}: $[a \gets a_1]$ & \{1, 3\}\\
             5 & \makecell{$a_1.\textit{time} \le b_1.\textit{time} \le a_1.\textit{time} + 10$} & \textbf{EI}: $[b \gets b_1]$ & \{4\}\\
             6 & \makecell{$b_1.\textit{time} \ge a_1.\textit{time} + 20  \wedge p(a_1,b_1)$} & \textbf{UI}:[$b \gets b_1$]  & \{3, 5\}\\
             7 & \makecell{$b_1.\textit{time} \ge a_1.\textit{time} + 20$} & \textbf{And}& \{6\}\\
             8 & \makecell{$\bot$} & \textbf{Impl}& \{5, 7\}\\
        \bottomrule
        \end{tabular}
    }
    \label{fig:simpleproofexample}
    %\vspace{-0.2in}
\end{table}

%\vspace{-0.05in}
\begin{example}\label{example:proof1}  
    Let \fol formulas, $\phi_1: \forall a:A\exists b:B\cdot (a.\textit{time} \le b.\textit{time} \le a.\textit{time} + 10)$ and $\phi_2: \exists a:A\forall b:B \cdot(b.\textit{time} \ge a.\textit{time} + 20 \wedge p(a,b))$ be given, where $A$ and $B$ are classes of relational objects, \textit{time} is an attribute of type $\mathbb{N}$ and $p$ is a complex predicate. $\phi_1 \wedge \phi_2$ is UNSAT, and the proof of UNSAT is shown in Tbl.~\ref{fig:simpleproofexample}. The proof starts by introducing the \textbf{Input} $\phi_1$ and $\phi_2$ (steps 1, 2), and then uses existential instantiation (\textbf{EI}) in steps 3, 5 and universal instantiation (\textbf{UI})  in steps 4, 6 to eliminate quantifiers in $\phi_1$ and $\phi_2$. In step 7, the \textbf{And} rule decomposes the conjunction and derives part of it as a new lemma. 
    Finally, in step 8, $\bot$ is derived with the \textbf{Impl} rule from the quantifier-free (QF) lemmas derived in steps 5 and 7. The proof is refutational because step 8 derives $\bot$.
 \end{example}

 \vspace{-0.05in}
 Given a refutational proof, we can check and trim the proof following a  methodology similar to \cite{DBLP:conf/fmcad/HeuleHW13}. In the rest of the section, we assume every proof is already trimmed, and provide the detail on proof checking and trimming in Appendix~\ref{ap:treereduction}. 
% \red{Appendix}.
%in appendix or in supplementary?  getting confused}

In a refutational proof, the final derivation of $\bot$ must be carried out by the \textbf{Impl} rule (e.g., step 8 in the proof in Tbl.~\ref{fig:simpleproofexample}). Given a proof $L$, $\textit{Imp}(L)$ is the set of QF lemmas derived via or listed as dependencies for \textbf{Impl}. Intuitively, $\textit{Imp}(L)$ is the set of \emph{core} QF formulas derived from the inputs following other derivation rules and sufficient to imply the conflict (UNSAT).  By analyzing lemmas in $\textit{Imp}(L)$ and backtracking their dependencies, we determine which
parts of inputs  are used to derive UNSAT. %If a part of an input is not necessary for deriving lemmas in $\textit{Imp}(L)$, it is not necessary for the proof, and hence  not part of the reason for UNSAT.

\vspace{-0.03in}
\begin{example}\label{example:proof3}
    Continuing from Ex.\ref{example:proof1}, the final step (step 7) of the proof $L$ derives $\bot$ using the \textbf{Impl} rule with dependency $\{5, 7\}$. Therefore, $\textit{Imp}(L)$ is the lemmas derived at steps 5 and 7,  and the term $p(a)$ in $\phi_2$ is not involved for deriving these lemmas and thus not necessary for the UNSAT conclusion. Therefore, the proof remains valid by removing $p(a)$ from $\phi_2$. 
\end{example}

% Given a proof $L$, by analyzing lemmas in $\textit{Imp}(L)$ and backtracking their dependencies, we can produce a diagnosis to highlight how different input clauses, introduced by \textbf{Input} rules, are involved to derive UNSAT.   {\bf MC:  integrate with the next paragraph!}

\vspace{-0.1in}
\subsection{Diagnosis for Redundancy, Vacuous and Situational Conflict, and Restrictiveness}
\label{ssec:diagUnsat}
In addition to reporting a binary ``yes'' or ``no'' answer to the satisfiability, the \fol satisfiability checker \lego also provides a causal proof of UNSAT (see   Sec.~\ref{sec:proof}) if the encoded formula is unsatisfiable.  We project the proof into the input \dsl rules to highlight the causes of \CRRI problems at the level of \textit{atomic elements}. 

\begin{definition}[Atomic element]\label{def:ap}
An \emph{atomic element} in \dsl is one the followings: (1) an \textit{atomic proposition} $\aprop$:   $\top \mid \bot \mid \term = \term \mid \term \ge \term \mid \neg \aprop$, (2) a triggering event $\event$ (where $\event$ in $\rulesyntax{\event \wedge 
 \ldots}{\ldots}$ or $\factsyntax{\event \wedge 
 \ldots}{\ldots}$), (3) a response event $\event'$ in an obligation (in $\within{\event'}{\ldots}$), or (4) a deadline $\term$ of an obligation (in $\within{\ldots}{\term}$).
\end{definition}
We now informally define \emph{involved atomic elements}
~\footnote{
A formal definition  is in Appendix~\ref{ap:iae}.
}.
Let a causal proof $L$ and the subset of lemmas derived by \textbf{Impl} $\textit{Imp}(L)$ be given. An atomic element $\aelement$ is \textit{involved in the derivation of UNSAT}, denoted as $inv(\aelement, L)$, if the atomic element encodes the necessary dependency of lemmas in $\textit{Imp}(L)$.
 An \emph{atomic proposition $\aprop$ is involved} if it is used to encode some QF lemma in $\textit{Imp}(L)$. 
\emph{A triggering event $\event$ is involved} if $\textit{Imp}(L)$ contains a QF implication where the premise of implication is $\obj{\event}.ext$ 
($\obj{\event} \Rightarrow \ldots$). A \emph{response event is involved} if $\textit{Imp}(L)$ contains  a QF implication where the conclusion of implication is $\obj{\event}.ext$ 
($\ldots \Rightarrow \obj{\event}$). A \emph{deadline $\term$ 
is involved} if there is a lemma in $\textit{Imp}(L)$ that contains the time constraint encoded from $\term$.
 % Applying \textbf{AND} decomposition rules exhaustively helps reduce the size of involved atomic elements. 
% \begin{example}[AND Decomposition]
% Let tree quantifier-free lemmas be given: (1) $A \wedge (B \vee C) \wedge (C \vee \neg D)$, 
% (2) $\neg B \vee C $   and (3) $\neg A \vee \neg C$ be given where $A$, $B$, $C$ and $D$ are atomic propositions. Suppose a proof of UNSAT $L$ derives lemma (4) $A$ and lemma (5) $(B \vee C)$ using \textbf{AND} decomposition from lemma (1), and then derives $\bot$ using \textit{Implication} from lemma (4), (5), (2) and (3). $\textit{Imp}(L) = \{(2),(3),(4),(5)\}$, and the involved atomic elements are $A$, $B$ and $C$ (and their negation). On the other hand, Suppose a different proof $L'$ that derives $\bot$ directly using (1), (2), (3) and (4) using \textit{Implication}. Then $\textit{Imp}(L') = \{(1), (2), (3), (4)\}$ and invovled atomic elements are $A$, $B$, $C$ and $D$ (and their negations).  
% \end{example}
% \green{alcc: Up to this point there is no indication of why we need all these notions. I suggest give an example to illustrate the approach before giving the details. }
\begin{example}
    Following Ex.~\ref{example:proof3}, the proof of UNSAT $L$ for $\phi_1 \wedge \phi_2$ is shown in Tbl.~\ref{fig:simpleproofexample}, and $\phi_1 \wedge \phi_2$ is the encoding for redundancy checking of $r13$ in the rule set $\{r13, r14\}$ where $r13$ and $r14$ are in Tbl.~\ref{tab:examplefig}:    $\phi_1 = T(r13)$ and $\phi_2 = T(\noncompliant{r14})$. From the proof $L$ and $\textit{Imp}(L)$, the involved atomic elements in $r13$ are  $A$ and $\within{B}{5\;\textbf{min}}$. The involved atomic elements in $r14$ are  $A$ and  $\within{B}{10\;\textbf{min}}$. The involved elements indicate the reason of redundancy: $r13$ and $r14$ have the same trigger $A$, and the response of $r14$ ($\within{B}{20 \; \textbf{min}}$) is logically implied by the response of $r13$ ($\within{B}{20 \; \textbf{min}}$). The element $p(A)$ in $r14$ is not involved, and the redundancy does not depend on it. 
\end{example}
% \red{Due to limited space, we defer an illustration of involved atomic elements to appendix.}

\vspace{-0.03in}
The diagnosis for redundancy, vacuous-conflict, and restrictiveness presents only those $\dsl$ rules $\srule$ in an UNSAT proof $L$, 
where $L$ derives $\translate{\srule}$ by the derivation rule \textbf{Input}. The diagnosis also highlights every involved atomic element (see Fig.~\ref{fig:unsatDiag}). 
To efficiently identify involved atomic elements for a given proof $L$, the translation function is augmented to keep track of the mapping between input \dsl elements (events, propositions, terms) and the translated \fol elements. For every lemma in $\textit{Imp}(L)$, we can efficiently obtain the source elements in \dsl using the mapping and identify the involved atomic elements. 
%\hlTodo{\textbf{\bf MC}:  look at caption of Figure 5.  There you are refering to redubdant clauses rather than involved atomic elements, so inconsistent.  I expect you are giving an example for the rule discussed in Example 20, right?  Caption of figure 5 does not describe the example.  Instead, it describes the format of the figure.}

% \hlTodo{Example comming.}

%  Consider sleec example  blabla corresponding to the fol balabla, a diagnosis would be ...

\begin{figure}
    \centering
    \shadowbox{\includegraphics[scale=0.2]{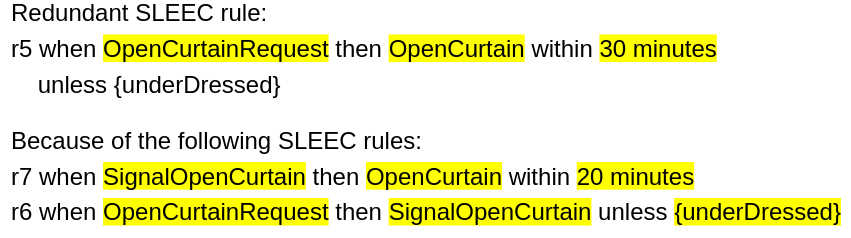}}
    \vspace{-0.045in}
    \caption{{\small Redundancy diagnosis for Ex.~\ref{exmaple:redudent}: a redundant rule $r5$ together with the set of rules $\{r6, r7\}$ with which $r5$ is redundant.  Redundant clauses are highlighted. }}
    %and highlights the redundant clauses.}
    \label{fig:unsatDiag}
    \vspace{-0.18in}
\end{figure}

\vspace{-0.05in}
\subsection{Other Diagnoses}
%Diagnosis for Insufficiency and Situational Conflicts}
\label{ssec:diagSol}
\label{ssec:diagHybrid}

%\vspace{-0.08in}
\noindent
\boldparagraph{Diagnosis for Insufficiency}
A set of rules $\ruleset$ is insufficient to block a concern $\fact$ if the \fol formula $T(\ruleset) \wedge T(\fact)$ 
is satisfiable. The diagnosis of $\ruleset$ insufficiency is   a trace constructed from a solution to the insufficiency  query for $\fact$ and representing one undesirable behavior allowed by $\ruleset$, that realizes $\fact$ (see  Fig.~\ref{fig:satDiag}).

% The UNSAT-proof based diagnosis is available if the \fol encoding is unsatisfiable. This is the case when a rule is redundant, vacuously-conflicting, or when the rule set is restrictive. On the other hand, if the rule set is insufficient, the \fol solution can be used to construct a trace (following the construction in the proof of Thm.~\ref{thm:transcorrectness}) to justify the verdict. 
% More specifically,  a trace constructed from a solution to the insufficient checking query for $\concernFact$ represents one undesirable behavior, allowed by the rules, that realizes $\concernFact$, as illustrated in Fig.~\ref{fig:satDiag}.

\begin{figure}
    \centering
    \shadowbox{\includegraphics[scale=0.25]{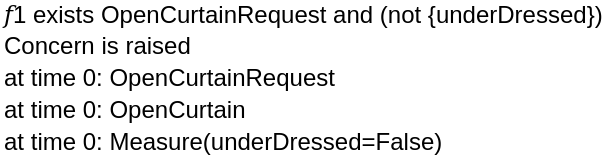}}
    \vspace{-0.1in}
    \caption{{\small Insufficiency diagnosis for  Ex.~\ref{example:inssuficent}: a trace where a concern is raised while respecting the rules.}}
    %specifies the concern that can be raised while respecting the rules and provides a trace where this concern is raised.}}
    \label{fig:satDiag}
    \vspace{-0.10in}
\end{figure}

%\begin{remark}
%Note that the solution-based diagnosis can also be used to provide an argument for demonstrating that a set of \nNFRs{} rules is non-restrictive or demonstrate that a rule is non-redundant or non-conflicting. Specifically, a trace $\fos$ constructed from a solution to the restrictiveness checking query for $\purposeFact$ represents one feasible behavior that achieves $\purposeFact$ while satisfying every rule.
%A trace $\fos$ constructed from a solution to the redundancy checking query for $\srule \in \ruleset$ represents a trace that is uniquely blocked by $\srule$ (i.e, $\fos \not\in \lang{\ruleset}$ and $\fos \in \lang{\ruleset \setminus \srule}$). A trace $\fos$ constructed from a solution to a vacuous conflict checking query for $\srule \in \ruleset$ represents a trace where $\srule$ is triggered without violating other rules. %The solution-based diagnosis is also available for emptiness checking and situational conflict detection.
%\end{remark}

\vspace{-0.04in}
\noindent
\boldparagraph{Diagnosis for Situational Conflict}
A rule $\srule$ is situationally conflicting if there exists a situation under which there is a conflict. 
 The diagnosis needs to capture both the situation and the reason for the conflict.
Therefore, we produce a hybrid diagnosis which combines the \emph{solution-based diagnosis} including a trace representing the situation where the rule is conflicting, and the \emph{causal-proof diagnosis} pinpointing the reasons for the conflicts (using the \nNFR{} rules) given the situation.
\begin{figure}
    \centering
    \shadowbox{\includegraphics[scale=0.18]{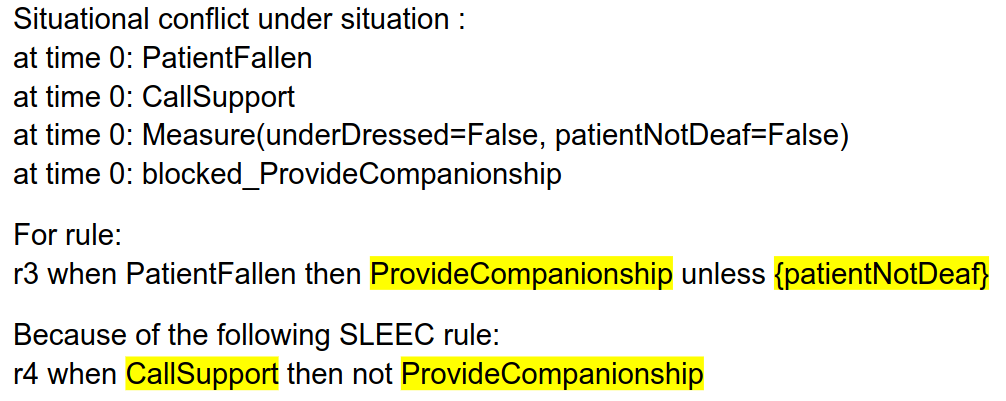}}
    \vspace{-0.15in}
    \caption{{\small Situational conflict diagnosis for Ex.~\ref{example:condconf}:   a trace of the situation where the conflict occurs, the conflicting rule $r3$ and the rule set $\{r4\}$. The  conflicting clauses are highlighted.}}
    \label{fig:satUnsatDiag}
    \vspace{-0.15in}
\end{figure}
To do so,  we first obtain the conflicting situation, and then encode it as facts to check whether the situation causes conflict between the rules. More specifically, given a rule $\srule \in \ruleset$, the situation where $\srule$ is conflicting can be constructed from the \fol solution to the situational-conflict checking query. %, with the construction shown in the proof of Thm.~\ref{def:condconfencoding}. 
By asserting the situation as facts $\fact_{\textit{sit}}$ and then checking whether  $\lang{\ruleset} \cap \lang{\fact_{\textit{sit}}}$ is empty, we can confirm the conflict and then use causal-proof based diagnosis to highlight the causes of conflict given the situation,
as illustrated in Fig.~\ref{fig:satUnsatDiag}.

%\vspace{-0.1in}
\section{Evaluation}
\label{sec:evaluation}
% header
% \red{In contrast to existing work (SLEECVAL~\cite{Getir-Yaman-et-al-23} and Autocheck~\cite{feng-et-al-23-b}, discussed in Section~\ref{sec:relatedwork}), our approach \approach checks different well-formedness constraints, specifically \approach can detect situational conflicts, insufficiency and restrictiveness. 
% Thus, \approach  is complementary to existing work and not comparable. 
% Consequently, in this section we cannot provide a quantitative comparison with existing work, but report instead the results of experiments demonstrating the effectiveness of our approach nine real-world case studies for identifying relevant well-formedness issues. 
% }

% implementation
%\vspace{-0.06in}
\noindent
\boldparagraph{Tool Support} We implemented our \approach approach in a tool \tool (see the architecture in Fig.~\ref{fig:overview}). Input to \tool are \nNFRs{} expressed in \dsl.  Outputs are diagnoses for each \CRRI issue identified (e.g., Fig.~\ref{fig:unsatDiag}-\ref{fig:satDiag}). Built using Python, \tool encodes \dsl rules into \fol \CRRI satisfiability queries. It then uses \lego~\cite{feng-et-al-23} as the back-end to check \CRRI \fol satisfiability and produces the causal (un)satisfiability proof. Finally, it computes \CRRI diagnoses based on the proof. When no issues are identified, \tool produces a proof of correctness (i.e., absence of \CRRI). \tool also includes a user interface to
allow users to input \dsl rules.
%an encoder to encode the \CRRI checking problems into \fol satisfiability problem which could be solved by the back-end \fol satisfiability checker \lego.}
%\red{\tool c}
\tool together with the evaluation artifacts is available in \cite{ICSE-Artifact-new}.

%{\bf MC: Intro mentioned web support - do we describe it here as well?}
%\green{\underline{Answer}: We removed the web support}

%--- case-study table
%table
\begin{table}[t]
    \centering
    \sffamily
    \caption{{\small Case study statistics.}}
    
    \vspace{-0.15in}
    \scalebox{0.73}{
        \begin{tabular}{rcccccc}
        \toprule
            name & rules (evnt.$-$ msr.) &  defeaters & N-TS$-$TS & domain & stage \\\toprule
            ALMI & $39$ ($41-15$) &  $8$ & $2$ $-$ $2$ & social & deployed\\
             \cellcolor{gray!10}ASPEN & \cellcolor{gray!10} $15$ ($25-18$)& \cellcolor{gray!10}  $5$ & \cellcolor{gray!10} $2$ $-$ $2$ & \cellcolor{gray!10} env. & \cellcolor{gray!10} design\\
            AutoCar & 19 (36-26) &  20 & $1$ $-$ $1$ & transport & design\\
             \cellcolor{gray!10}BSN & \cellcolor{gray!10} 29 ($33$-$31$)& \cellcolor{gray!10}  $4$ & \cellcolor{gray!10} $2$ $-$ $2$ & \cellcolor{gray!10} health & \cellcolor{gray!10} proto.\\
            DressAssist & 31 (54-42) & 39 & $4$ $-$ $1$ & care & proto.\\
             \cellcolor{gray!10}CSI-Cobot & \cellcolor{gray!10} $20$ ($23$-$11$)& \cellcolor{gray!10}  $8$ & \cellcolor{gray!10} $2$ $-$ $1$ & \cellcolor{gray!10} manifacture & \cellcolor{gray!10} proto.\\
            DAISY & $26$ (45-31) &  20 & $3$ $-$ $1$ & health & proto.\\
             \cellcolor{gray!10}DPA & \cellcolor{gray!10} $26$ (28-25)& \cellcolor{gray!10}  $4$ & \cellcolor{gray!10} $1$ $-$ $1$ & \cellcolor{gray!10} education & \cellcolor{gray!10} design\\
            SafeSCAD & $28$ ($29$-$20$)&  $3$ & $2$ $-$ $1$ & transport & proto.\\
        \bottomrule
        \end{tabular}
    }
    \label{tab:caseStudyStats}  
    \vspace{-0.15in}
\end{table}
% add stage, domain, previous tool supported validation 

% header
To evaluate our approach we ask the research question (\textbf{RQ}): How effective is  \tool in detecting {\CRRI}s? 

As discussed in Sec.~\ref{sec:relatedwork}, we cannot provide a quantitative comparison with existing work (SLEECVAL~\cite{Getir-Yaman-et-al-23} and Autocheck~\cite{feng-et-al-23-b}) for WFIs detection because of the semantic difference in the targeted WFIs. Instead, we report on the results of experiments demonstrating the \emph{effectiveness} of our approach  for identifying relevant well-formedness issues on nine real-world case studies. 

%and (\textbf{RQ2})  supporting debugging \CRRI?

%\vspace{-0.03in}
% case-studies
\boldparagraph{Models and Methods}
We have divided the authors of the paper into two groups: the \emph{analysis experts} (AEs) who have defined and developed the approach, and the \emph{stakeholders}, further categorized into \emph{technical}  (TSs) and \emph{non-technical}  (N-TSs). We have ensured that there was no overlap between these groups. The stakeholders include: an ethicist, a lawyer, a philosopher, a psychologist, a safety analyst, and three engineers. The TSs identified nine real-world case studies from different stages, ranging from the design phase to deployed systems, as shown in Tbl.~\ref{tab:caseStudyStats}. 

%\vspace{-0.1in}
The case studies were as follows. 
(1) ALMI~\cite{Hamilton-et-al-22}: a system assisting elderly and disabled users  in a monitoring or advisory role and with physical tasks; (2) ASPEN~\cite{aspen-23}: an autonomous agent dedicated to forest protection, providing both diagnosis and treatment of various tree pests and diseases; (3) AutoCar~\cite{autocar-20}: a system that implements emergency-vehicle priority awareness for autonomous vehicles; (4) BSN~\cite{Gil-et-al-21}: a healthcare system detecting emergencies by continuously monitoring the patient’s health status; (5) DressAssist~\cite{Jevtic-et-al-19,townsend-et-al-2022}: an assistive and supportive system used to dress and provide basic care for the elderly, children, and those with disabilities; (6) CSI-Cobot~\cite{Stefanakos-et-al-22}: a system ensuring the safe integration of industrial collaborative robot manipulators; (7) DAISY~\cite{daisy-22}: a sociotechnical AI-supported system that directs patients through an A\&E triage pathway; (8) DPA~\cite{Amaral-et-al-22}: a system to check the compliance of data processing agreements against the General Data Protection Regulation; (9) SafeSCAD~\cite{calinescu2021maintaining}: a driver attentiveness management system to support safe shared control of autonomous vehicles.
Case studies were chosen  (i) to test whether our approach scales with case studies involving complex \nNFRs{} with numerous defeaters and time constraints (AutoCar, DAISY, DressAssist); (ii) to examine the benefits of our approach on early-stage case studies (e.g., ASPEN and AutoCar), as compared to those at a more advanced stage (e.g., ALMI, DPA, BSN); (iii) to compare case studies involving many stakeholders (DressAssist, and DAISY) with those having fewer ones (DPA and AutoCar), and (iv) to consider different domains including the environment, healthcare, and transport.

For each case study, 1-2 TSs were paired with 1-4 N-TSs; together they built a set of normative requirements (\nNFRs). They then met to manually review, discuss, and agree on these \nNFRs. 
They specified a total of 233 \nNFRs. The number of \dsl rules ranged from 15 (ASPEN) to 39 (ALMI), and the number of defeaters in these rules ranged from 4 (DPA and BSN) to 39 (DressAssist).  Tbl.~\ref{tab:caseStudyStats} reports 
the number of events (evnt.), measures (msr.), rules (rules), defeaters (\textbf{unless}), and both technical (TS) and non-technical (N-TS) stakeholders  involved in the writing of the \nNFRs.

% table
\begin{table}[t]
    \centering
    \caption{\small  Run-time performance and  \CRRI. True positive (TP) and false negative (FP) issues identified for each  vacuous conflict (v-conf.), situational conflict (s-conf.), redundancy (redund.), restrictiveness (restrict.), and insufficiency (insuff.).  Spurious issues  are in bold.
    %Time and memory (mem) for identifying all the issues is also recorded.
    }
    \vspace{-0.1in}
    \scalebox{0.7}{
        \begin{tabular}{c c c c c c c}
             \toprule
            \multirow{ 2}{*}{case studies} & v-conf. & s-conf. & redund. & restrict. & insuffi. & time\\
            & (TP - FP) & (TP - FP) & (TP - FP) & (TP - FP) &  (TP - FP) & (sec.)\\ \toprule
           ALMI & 0 - 0 & 3 - 0 & 0 - 0 & 0 - 0 &  1 - \textbf{1} & 30\\
            \cellcolor{gray!10} ASPEN &  \cellcolor{gray!10} 0 - 0 & \cellcolor{gray!10} 3 - 0 & \cellcolor{gray!10} 1 - 0 & \cellcolor{gray!10} 0 - 0 & \cellcolor{gray!10}  5 - 0 & \cellcolor{gray!10} 25.3\\
           AutoCar & 0 - 0 & 4 - 0 & 2 - 0 & 0 - 0 &  9 - 0 & 27.7\\
            \cellcolor{gray!10} BSN & \cellcolor{gray!10} 0 - 0 & \cellcolor{gray!10} 0 - 0 & \cellcolor{gray!10} 0 - 0 & \cellcolor{gray!10} 0 - 0 & \cellcolor{gray!10}  3 - 0 & \cellcolor{gray!10} 46\\
           DressAssist & 0 - 0 & 1 - 0 & 0 - 0 & 0 - 0 &  1 - \textbf{3} & 20.3\\
            \cellcolor{gray!10} CSI-Cobot & \cellcolor{gray!10} 0 - 0 & \cellcolor{gray!10} 0 - 0 & \cellcolor{gray!10} 2 - 0 & \cellcolor{gray!10} 0 - 0 & \cellcolor{gray!10} 6 - \textbf{1} & \cellcolor{gray!10} 25.3\\ 
           DAISY & 0 - 0 & 1 - 0 & 1 - 0 & 0 - 0 &  5 - 0 & 30.4\\
            \cellcolor{gray!10}DPA & \cellcolor{gray!10}  0 - 0  & \cellcolor{gray!10} 0- 0 & \cellcolor{gray!10} 0 - 0 & \cellcolor{gray!10} 0 - 0 & \cellcolor{gray!10} 4 - 0  & \cellcolor{gray!10} 21.4\\
           SafeSCAD & 0 - 0 & 8 - 0 & 2 - 0 & \textbf{2} - 0 & 4 - \textbf{1} & 42.4\\
           \bottomrule
        \end{tabular}}
    \label{tab:identifyIssues}
    \vspace{-0.2in}
\end{table}

% total- reoslution effort - too complex
\begin{table}[t]
    \centering
    \caption{\small %Effectiveness of \tool for supporting 
    Debugging effort for non-spurious \CRRI{} in Tbl.~\ref{tab:identifyIssues}. Metrics are: \#  issues, \# added events or measures (add D), \# added  (add R), refined, merged, and deleted rules, and \# cases classified as too complex.} 

    \vspace{-0.1in}
    \scalebox{0.7}{
        \begin{tabular}{ccccccccc}
        \toprule
             %\multirow{ 2}{*}{case studies} & \multicolumn{2} {c} {diagnosis helpful}  \\
             \multirow{ 2}{*}{case studies} & \multirow{ 2}{*}{total} &  \multicolumn{5} {c} {resolution effort} & \multirow{2}{*}{complex} \\
              &  & add D & add R & refine & merge & delete &  \\  \toprule
             ALMI & $4$ &  1& 4 & 0 & 0 & 1 & - \\ 
              \cellcolor{gray!10} ASPEN &  \cellcolor{gray!10} $9$  & \cellcolor{gray!10} 1 & \cellcolor{gray!10} 3 & \cellcolor{gray!10} 3 & \cellcolor{gray!10} 1 & \cellcolor{gray!10} 2  &  \cellcolor{gray!10}  - \\ 
             AutoCar & 11 & 1 & 20 & 4 & 0 & 1 & - \\ 
              \cellcolor{gray!10} BSN &  \cellcolor{gray!10} 3 & \cellcolor{gray!10} 3 & \cellcolor{gray!10} 3 & \cellcolor{gray!10} 0 & \cellcolor{gray!10} 0 & \cellcolor{gray!10}0  &  \cellcolor{gray!10}  - \\
             DressAssist & $1$ & 1 & 0 & 1 & 0 & 0 & - \\
              \cellcolor{gray!10} CSI-Cobot &  \cellcolor{gray!10} $8$ & \cellcolor{gray!10} 3 & \cellcolor{gray!10}7 & \cellcolor{gray!10} 1 & \cellcolor{gray!10}0 & \cellcolor{gray!10}1 &  \cellcolor{gray!10}  - \\ 
             DAISY & $7$ & 0 & 5 & 1 & 1 & 0 & - \\ 
              \cellcolor{gray!10} DPA &  \cellcolor{gray!10} $4$ & \cellcolor{gray!10}0 & \cellcolor{gray!10} 4 & \cellcolor{gray!10}1 & \cellcolor{gray!10}0 & \cellcolor{gray!10} 2 &  \cellcolor{gray!10}  - \\ 
             SafeSCAD & $16$ & 5 & 6 &  9 & 4 & 2 & \textbf{1} \\
        \bottomrule
        \end{tabular}}
    \label{tab:diagnosishelpful}
    \vspace{-0.25in}
\end{table}
% understanding (U/N) - resolving (yes/no) - resolved (yes/time/complex) - complex resolution - 

\vspace{-0.03in}
\boldparagraph{RQ}
Once the  \nNFRs{} were built,
the AEs met with the TSs and N-TSs involved in each case study to evaluate the \CRRI, in one session for each case study. 
When no issue was identified, AEs provided a proof of correctness. We focused on the stakeholders' ability to understand the feedback given by \tool to split the identified concerns into \emph{relevant} (TP) or \emph{spurious} (FP). Additionally, the stakeholders specified which part of the diagnosis helped them understand and perform this classification. The results are in Tbl.~\ref{tab:identifyIssues}.  

%of the case study \nNFRs  (i.e., nine sessions).}  
It took \tool between 21 and 49 seconds to analyze the \nNFR{} \CRRI in each case study, using a ThinkPad X1 Carbon with an Intel Core i7 1.80 GHz processor, 8 GB of RAM, and running 64-bit Ubuntu GNU/Linux 8. 
%
% Vacuous conflicts
No vacuous conflicts were identified, likely
 due to the careful manual review of the rules prior to our analysis. To ensure correctness, we augmented the proof of absence of vacuous conflicts produced by \tool with
feedback produced by AutoCheck~\cite{feng-et-al-23}, an existing tool for analyzing vacuous conflicts. This evaluation, see Appendix~\ref{ap:eval}, confirmed the soundness of our approach.
%w.r.t. the identification of vacuous conflicts. 

% Situational conflicts
\tool identified 20 situational conflicts, each involving 2-3 rules, and a conflicting situation. For each conflict, \tool also highlighted 2-3 out of 20-33 environment measures that are constrained by the conflicting situation to satisfy its conflicting condition. 
% from the 20 to 33 possible ones.  
This means that the rules conflict only in highly limited environments.
%This means that there are rules conflicting only in specific situations (specific environment).
Reviewing these conflicts without tool support is quite challenging, especially for case studies involving many rules (DAISY) or numerous environmental constraints  (e.g., DAISY, AutoCar, and ASPEN). 
Here, the diagnosis helped stakeholders  understand the issue by: (1) specifying which rules to focus on; (2) pinpointing the specific clauses causing the conflicts;  and (3) providing a trace specifying the situations in which the conflicts occur and the blocked response, thus aiding stakeholders in understanding the environments under which the conflicts occur.
%under which environments the conflicts occur.
% Redundancy
\tool also identified eight redundancies involving 2-3 rules each.  The diagnosis provided all the necessary information to understand each redundancy, including identifying and extracting redundant rules and highlighting  redundant clauses. Stakeholders quickly confirmed that none of these redundancies were intentional, so
the identified redundancies were unnecessary and could have led to future inconsistencies.
% Purpose
Regarding restrictiveness, one case study, SafeSCADE, was found to have \nNFRs{} that were too restrictive, preventing the system from achieving two of its main purposes. 
%Specifically, for each purpose, one rule with the environment and time constraints was preventing the system from fulfilling these purposes. 
The provided diagnosis directly helped the stakeholders.
%understand the reasons.% for that.
%why the rule was too restrictive.

% Concerns
Cases of insufficiency were identified in every case study, raising the total of 38 concerns. 
%also reported six spurious concerns. 
%Classifying the concerns as spurious or relevant was not a trivial task. 
Determining the spuriousness of some concerns required a non-trivial  examination and consideration process.
Specifically, the stakeholders analyzed the traces provided by our diagnosis (e.g., Fig.~\ref{fig:unsatDiag}) and engaged in detailed discussions on whether their rules needed to address each identified concern, e.g., whether enabling patients to disconnect sensors tracking their health is a bad behaviour.
At the end, they identified spurious concerns only in four case studies.  
The identified spurious concerns were caused by their overly generic specifications. 
%At the end, they limited spurious only to ALMI, CSI-Robot, DressAssist, and SafeSCADE.
% they identified spurious concerns only in
%Many factors could cause the spuriousness. 
%For instance, spurious concerns in the DressAssist case study resulted from an imprecise specification regarding the consideration of defeaters (notably, DressAssist had the highest number, 39, of defeaters among all nine case studies).
%Similarly, the spurious concerns in the CSI case study were due to inconsistent specifications.
%Finally, in SafeSCADE, the spurious concern also arose from an imprecise specification of not considering the priority. Although spurious, the stakeholders decided to refine the definitions to make the priority explicit. 
%Overall, 100\% of situational conflicts, 100\% of redundant rules, 100\% restricted system purposes, and \red{84\%} of the raised concerns identified by \tool were determined by the stakeholders to be relevant.
The feedback provided by \tool enabled the stakeholders to understand the meaning of the identified cases for all types of \CRRI, limiting spurious feedback to raised concerns. Thus, we successfully answered our RQ.
% "we established effectiveness of our approach (RQ1)
% Thus, we successfully answered RQ1.

%\vspace{-0.05in}
\boldparagraph{Discussion about diagnosis support for debugging {\CRRI}s}
%We intend to use these case studies to measure the number of spurious \CRRI issues that \tool identifies. We will do this by presenting them to the stakeholders and asking whether they are relevant. 
%We aim to answer the following question: 
% intro Q2
%Secondly, we evaluate the usability of \tool's \CRRI diagnosis in aiding stakeholders to resolve the \CRRI issues. 
Once the stakeholders understood and classified each identified \CRRI issue as non-spurious, we asked them, in the same session, to use the diagnosis to \emph{debug} the issue.  We recorded the resolution effort, i.e., the number of definitions/rules added, refined, merged or deleted, as a proxy to the resolution complexity. 
The results are in Tbl.~\ref{tab:diagnosishelpful}. 
% redundancy
The enumeration of redundant rules along with the highlighting of redundant clauses (e.g., Fig.~\ref{fig:unsatDiag}) provided by \tool enabled the stakeholders to easily resolve the cases of redundancy. They knew directly what to fix (mostly merge, delete, or refine).
%thanks to the clear identification provided by the diagnosis.
% situational conflicts
\tool supported the stakeholders in resolving situational conflicts by:
(1) specifying the specific clauses causing the conflicts, which enabled stakeholders to focus on these rules, making the resolution process more targeted;
(2) providing the event and measure values representing the situations in which the conflicts occurred, which helped stakeholders consider the environmental constraints comprehensively while devising a resolution strategy (merging and refining both rules and definitions),
and (3) highlighting the blocked response in  the trace  (e.g., see Fig.~\ref{fig:unsatDiag}), which allowed stakeholders  study priority between the conflicting rule responses.
%Resolving situation conflict, resulted in merging and refining both rules and definitions
%Last, a single rule was refined multiple times for this case study (as well as for ASPEN), and resolving situational conflicts helped the stakeholders to be comprehensive in terms of considering the environmental constraints.

% insufficiency
Regarding insufficiency, the stakeholders specifically appreciated the optimal size of the trace (e.g., Fig.~\ref{fig:unsatDiag}), as it helped them understand which events interfere with the identified concerns.
%However, all of them agreed that having hints about the patching type (refine or add) and, if refining, which rule must be refined,
% would have saved them time.  
%This is because resolving insufficiency mostly involved the stakeholders in refining existing rules (adding some constraints or defeaters) or adding a new rule to address the raised concern.
% restrictiveness
Regarding restrictiveness, the tool  provided all of the necessary information (i.e., pinpointed clauses on the rules restricting the system purpose) for the stakeholders to straightforwardly and trivially resolve the issue. Resolving restrictiveness issues (only two instances found across all the case studies) resulted in refining and merging the rules. 
In the case of AutoCar, checking restrictiveness helped refine constraints and correct incorrect responses, as well as establish priority between the rules, ultimately simplifying the rules.   
It also enabled the refinement of definitions and rules to be more specific (less restrictive).
% TODO: uncomment for CAMERA READY
%i.e., by introducing an additional event {\small\texttt{StopAutonomousAssistance}} instead of {\small\texttt{CarDriving}} for the rule %\textbf{when} {\small\texttt{DisplayError}} \textbf{then} {\small\texttt{StopAutonomousAssistance}}.
%
For the majority of resolutions, stakeholders added a rule (60\%) or refined an existing one (23\%). They very rarely removed one (10\%), and even less frequently merged existing rules (7\%).
Our diagnosis supported the stakeholders  in resolving the \CRRI  when time allowed.  
%Therefore, we successfully answer RQ2.

\vspace{-0.03in}
\boldparagraph{Summary}
Our experiment demonstrated that our approach effectively and efficiently identifies \nNFR{} \CRRI. 
Both technical and non-technical stakeholders found the diagnosis to
be comprehensible and
helpful in resolving the identified issues. 
Notably, \tool allowed stakeholders to add missing rules and encouraged them to consider missing cases without constraining their imagination during the elicitation process. 
The inclusion of traces proved beneficial, enabling stakeholders to understand concerns that were not addressed by their \nNFRs{} and providing insights into causes of conflicts. 
However, stakeholders expressed the desire for a methodology to aid in issue resolution, as well as suggestions for potential resolution patches. They also would have preferred
%Additionally, stakeholders mentioned thatthey would have appreciated 
using the tool during the elicitation process to help them identify and address potential issues proactively. Overall, the stakeholders really appreciated the assistance provided by \tool in addressing the \CRRI.

\vspace{-0.03in}
\boldparagraph{Threats to validity}
(1) 
The case studies were built by the co-authors of this paper. We mitigated this threat by separating the authors onto those who developed the validation approach and those who built the case studies, and ensured that a complete separation between them was maintained throughout the entire lifecycle of the project.
(2)
While we considered only nine case studies,  we  mitigated this treat by choosing them from different domains, at different development stages, and written by stakeholders with different types of expertise. Developing, collecting, evaluating these case studies took around
140 hours of work, resulting, to the best of our knowledge, in the most extensive collection of normative requirements to date.
%resulted in approximately a total of \red{169} hours of work. 
%We are not aware of an existing normative requirement benchmark as extensive as ours.\\
%
(3)
The number of identified issues was relatively modest  because we 
% \red{allow unlimited amount of time for the stakeholders to manually analyze and refine the requirements before commencing the experiments.}
started with existing systems where many issues had already been resolved, and followed the stakeholders' manual analysis.
%a `realistic' set of \nNFRs{}, where many issues had already been resolved by stakeholders' manual analysis (with unlimited amount of time) during their design). }
%{\bf MC: 
% prev sentence is ugly.}
% The stakeholders also reviewed their \nNFRs{} before the analysis commenced.  
Yet the stakeholders mentioned that they would have preferred to bypass this manual reviewing meeting and use our tool directly, as it would have saved them at least 10 hours. %between all the case studies.
%\red{(4) TODO: Because the stakeholders who developed the requirements specification were also asked to evaluate the usefulness of the diagnostic technique, they may be biased in the evaluation of the usefulness of the technique. The authors need to discuss this threat to validity in the paper.}
(4)~The earlier discussion of the effectiveness of \tool's diagnostic support was informed by the feedback from the stakeholders involved in the development of the requirements. While this feedback might be biased, it mirrors the intended real-world usage scenario of our approach, as \tool is designed to assist stakeholders with debugging the rules that they have elicited themselves, and thus their expertise is essential in this process. A more comprehensive evaluation involving a control study is left for future work. 
\vspace{-0.09in}
\section{Related Work}
\label{sec:relatedwork}

\vspace{-0.04in}
% header
% \CRRI of requirements has been explored and can be applied for different purposes, such as for reactive program synthesis~\cite{SPECTRA}. However, the \CRRI definitions would be different in the context of program synthesis, there redundancy could capture unnecessary assumptions, while restrictiveness could be perceived as the set of rules making the synthesis infeasible.
Below, we 
%focus on the purpose of validating elicited requirements and 
compare our work with existing 
%approaches for analyzing ethical requirements and 
approaches to identifying NFR and, specifically, \nNFRs{} well-formedness issues.

%\boldparagraph{Ethical requirements analysis}
%\textcolor{red}{Normative requirements adhere to the principle of machine ethics, ensuring that machines exhibit ethically acceptable behavior towards human users.~\cite{Anderson_Anderson_2007} and human values~\cite{Anderson2019ASB}.}
%Prior studies in this field show promise, nevertheless they have focused on limited aspects of this issue. For instance, some research has explored ethical concerns related to autonomous agents and their verification~\cite{BremnerDFW19, Dennis2016FSW}. Others have investigated modelling legal requirements for software systems~\cite{boltz2022model}~\cite{Breaux2006VMA} or creating personalized ethical assistant tools that align with the user's moral preferences~\cite{Alfieri2022}. 

\vspace{-0.05in}
\boldparagraph{\upCRRI analysis of NFRs}
One of the primary challenges in verifying NFRs is resolving conflicts and trade-offs between multiple requirements engineering aspects~\cite{ChazetteSchneider2020,mandira-et-al-23,Lamsweerde-2009}. 
Formal techniques have paved the way for NFR conflict detection~\cite{Manna-Pnueli-92, Owre-Rushby-Shankar-95}. 
Researchers have also explored conflict and redundancy analysis using catalogs~\cite{Mairiza2011Zowghi}, or goal-oriented modelling~\cite{Lamsweerde-Darimont-Letier-98}~\cite{Matsumoto-et-al-17}. %These approaches enable a more structured and organized analysis of the interdependencies among NFRs, enhancing the decision-making process in system development. 
\st{In contrast to} \approach, \st{these other approaches necessitate technical expertise, making them less accessible to stakeholders without such skills.} 
\red{In contrast to these approaches, \approach has been developed specifically for helping non-technical stakeholders to elicit normative requirements.} 
Furthermore, these approaches are not designed to capture intricate non-monotonic conditions expressed through `if', `then', `unless', as well as time constraints, which are abundant in \nNFRs.

%A specification language called SPECTRA designed \cite{SPECTRA} (\green{NF: is it similar to GR(1), I think we cannot capture always-eventually using SLEEC SGY: yes it is similar to GR(1))} for reactive systems has a very similar semantics \hlTodo{TODO:check if this is true} we develop in this paper. This approach, however, focuses on synthesis with the system environment variables. 
%Another tool The FRET tool developed at NASA Ames Research Center~\cite{FRET} provides an innovative approach to requirements elicitation. By producing diagrammatic explanations and formalizing requirements using temporal logic, FRET aids in understanding and analyzing the compliance of written requirements with user intentions. 

% does not tackle other aspect of normative req in general

\vspace{-0.05in}
\boldparagraph{\upCRRI analysis of \nNFRs} %lina will add something
To address \nNFR{} \CRRI, SLEECVAL analyzes conflicts and redundancies using a time variant of the CSP process algebra~\cite{Getir-Yaman-et-al-23,Getir-Yaman-et-al-23b}.
AutoCheck~\cite{feng-et-al-23-b} proposed using satisfiability checking and implemented a normative requirements' elicitation process by highlighting the causes for conflicts and redundancies. 
%These approaches have demonstrated success in the \nNFRs{} elicitation process, but they have a limited scope of large user studies and potential incompleteness in sanitization process comparing to the \approach's comprehensive process.  {\bf MC:  ???} 
Compared to SLEECVAL,  \approach refines the definition of redundancy and conflicts to ensure the analysis results are complete while considering the effect of every rule in a given rule set. Compared to Autocheck, \approach can detect situational conflicts, insufficiency and restrictiveness.

% 
%Existing approach based on model checking~\cite{Getir-Yaman-et-al-23} verifies the presence of conflicts and redundancy in \dsl rules. Specifically, they consider two rules at a time, which can result in spurious conflicts, i.e., conflicts that are blocked when one considers the set of rules. Our approach refines the conflict definition to prune away some unfeasible situations while considering all the rules.
%AutoCheck~\cite{feng-et-al-23-b} %lina will add something

% \red{NF: Compare to SLEECVAL, \approach also refines the definition of redundancy and conflicts to ensures the analysis results are complete while considering the effects of every rule in a given rule set. Compare to AutoCheck,  \approach can detect situational conflicts. }
%\hlTodo{TODO:Emphasize the contribution with more reasons. Lina adds sth...}

%\boldparagraph{Diagnosis via causal (un)satisfiability proof analysis}

%\hlTodo{TBA}

%SCARS~\cite{mandira-et-al-23}
%Spectra, GRto
%FRET: Formal Requirements Elicitation Tool %https://software.nasa.gov/software/ARC-18066-1
\vspace{-0.07in}
\section{Conclusion}
\label{sec:conclusion}
%\red{Deadline: August 1 -- not exceed 10 pages + 2 pages references}
% summary of the contribution

\vspace{-0.03in}
We have introduced \approach, an approach to efficiently identify and debug \CRRI in normative requirements through the use of satisfiability checking.
The result of this  process is a diagnosis pinpointing the root causes of these issues, thus facilitating their resolution and supporting the normative requirements elicitation process. This approach enables stakeholders to analyze normative requirements on the fly, and use the diagnostic results to improve the quality of the requirements during the elicitation process.
The effectiveness and usability of \approach have been demonstrated via nine case studies with a total of 233 \nNFRs{} elicited
by  stakeholders from diverse backgrounds.
%, including ethicists, lawyers, philosophers, psychologists, safety analysts, and engineers. These case studies cover a wide range of scenarios, from forest protection design plans to deployed collaborative robots.
% future work
This work is part of a longer-term goal of supporting the development of systems that operate ethically, responsibly, and in harmony with human values and societal norms.
In the future, we plan to
 further develop the debugging support by
  (semi-)automatically generating and suggesting patches to \CRRI.
  %, providing users with useful hints for resolving these issues efficiently.
As requested by multiple stakeholders, we also plan to design a systematic process for analyzing the \CRRI while eliciting the \nNFRs. 
%This process will integrate \approach tool support at the elicitation stage, to avoid overwhelming users with multiple issues and facilitate a smoother resolution process.
Finally, we have focused on well-formedness constraints but envision the need to analyze and debug a broader range of properties such as the scenario-based ones~\cite{Gregoriades-et-al-05}.

\section*{Acknowledgements}

This work has received funding from the UKRI project EP/V026747/1 `Trustworthy Autonomous Systems Node in Resilience', the UKRI projects EP/V026801/2 and EP/R025479/1, the Assuring Autonomy International Programme, \red{NSERC and the Amazon Research Award}, the Royal Academy of Engineering Grant No CiET1718/45.

%\section*{Acknowledgment}
\bibliographystyle{splncs04}
\bibliography{refs}

%\newpage
%\onecolumn
%\input{Sections/response}

\newpage
\appendix
\section*{Appendix}

Sec.\ref{ap:norm} presents \dsl normalization. Sec.\ref{ap:extendedsem} provides the \emph{bounded semantics} of normalized \dsl. Sec.\ref{ap:translate} presents the translation from \dsl to \fol.
Sec.\ref{app:sanitizationSoundness} provides the proof for the correctness of \tool in checking \emph{vacuous conflict} and \emph{redundancy}.
Sec.\ref{ap:translateSituation} formally describes the \fol encoding of \emph{situational conflicts}.
Sec.\ref{ap:treereduction} provides details on the derivation tree and the reduction process for causal unsatisfiability proofs.
Sec.\ref{ap:iae} defines the active conditions for an atomic element in clausal proofs.
Sec.\ref{ap:eval} provides additional evaluation results.

\section{Rule Normalization} \label{ap:norm}

% \hlTodo{Change this to look like background.}
%------------  Semantic

\begin{figure*}
    $\normalize(\textit{resp}, \prop) = \begin{cases}
     \{\prop \Rightarrow \within{\event
     }{\term}\} & \textit{if } \textit{resp} = \within{\event}{\term} \\
     \{\prop \Rightarrow \within{\notS \; \event
     }{\term}\} & \textit{if } \textit{resp} = \within{\notS \; \event}{\term} \\
     \normalize(\textit{resp}_1 , \prop \; \andS \; \notS \; \prop')  \cup \normalize(\textit{resp}_2, \prop \; \andS \; \prop')  & \textit{if } \textit{resp} = \textit{resp}_1 \textbf{ unless } \prop' \textbf{ then } \textit{resp}_2\\
     
          \normalize(\textit{resp}_1 , \prop \; \andS \; \notS \; \prop')   & \textit{if } \textit{resp} = \textit{resp}_1  \textbf{ unless } \prop'\\
          
     \{ \normalize(\within{\event}{\term}, p) \textbf{ otherwise } \dobg \mid  \dobg \in \normalize( \textit{resp}_2, \top)\}
     & \textit{if } \srule_{op} = \within{\event}{\term} \textbf{ otherwise } \textit{resp}_2  \\
\end{cases}$ 

    % where 
    %  $S_1 \cross S_2 = \{\rulesyntax{ (\prop_1 \vee \prop_2)}{(\dobg^{1} \textbf{otherwise} \dobg^{2})} \mid (\rulesyntax{\prop_1}{\dobg^{1}}), \: (\rulesyntax{\prop_2}{\dobg^{2}}) \in S_1 \times S_2 \}$
    \caption{Function $\normalize$ takes \textit{resp} and $\prop$ and returns a set of normalized pseudo-rules.  }
    \label{fig:normalize}
\end{figure*}

\begin{figure*}
    \scalebox{0.75}{
        \begin{tabular}{c}
              Original SLEEC Rule  \\
              $r_{o} = \rulesyntax{\event_1 \; \andS \; \prop_1}{\within
        {\event_2}{\term_1} \textbf{ otherwise } (\within{\event_3}{ \term_2} \textbf{ unless } \prop_3 \textbf{ then } \within{\event_4}{ \term_3})}$ \\
        \hline
        Normalized SLEEC Rules\\
        $r_{n1} =\rulesyntax{\event_1}{(\prop_1 \Rightarrow (\within{\event_2}{\term_1})) \textbf{ otherwise } (\notS \; \prop_3 \Rightarrow \within{\event_3}{ \term_2})
        }$\\
        $r_{n2} = \rulesyntax{\event_1}{(\prop_1 \Rightarrow (\within{\event_2}{\term_1})) \textbf{ otherwise } (\prop_3 \Rightarrow \within{\event_4}{ \term_3})
        }$
         \end{tabular}
     }
     \caption{An example of SLEEC Rule normalization. Given an original SLEEC rule $r_{o}$, applying function $\normalize$ yields two normalized rules $r_{n1}$ and $r_{n2}$.}\label{tab:normalizationExample}
\end{figure*}

In this section, we present the normalization function, \normalize, which converts an original SLEEC rule to a set of normalized \dsl rules. The original \dsl rule follows the syntax: $\rulesyntax{\event \wedge \prop}{\textit{resp}}$ where $\event$ is an event symbol, $\prop$ is a proposition and $\textit{resp}$ is a response. A response is one of the following:
\begin{enumerate}
    \item $\within{\notS \; \event}{\term}$
    \item $\within{\event}{\term}$
    \item $\within{\event}{\term} \textbf{ otherwise } \textit{resp}$
    \item $\textit{resp}_1 \textbf{ unless } \prop textbf{ (\textit{ then } }\textit{resp}_2)?$  where the expression $(*)?$ indicates $*$ is optional. 
\end{enumerate}

Let an original \dsl rule \quoted{$\srule_{o} = \rulesyntax{(\event \wedge \prop)}{\textit{response}}$} be given. The result of normalizing $\srule_{o}$ is the set of normalized rules = $\{ \rulesyntax{\event}{\dobg} \mid \dobg \in \normalize(\textit{resp}, \prop)\}$ where the normalization function \normalize is defined in Fig.~\ref{fig:normalize}. Given a response $\textit{resp}$, $\normalize$ flattens $\textit{resp}$ into a set of obligation chains by traversing the nested structure of $\textit{resp}$ top-down, and then merges the normalization results bottom-up. Note that in a nested response, a chain of $\textit{unless}$ is left associative (e.g., $A$ \textit{unless} $B$ \textit{unless} $C$ is equivalent as (($A$ \textit{unless} $B$) \textit{unless} $C$) ), and $\textit{otherwise}$ has a higher precedence than $\textit{unless}$ by default (e.g., $A$ \textit{unless} $B$ \textit{otherwise} $C$ is equivalent to $A$ \textit{unless} ($B$ \textit{otherwise} $C$ ). $\normalize$ also records and recursively distributes the triggering condition $\prop$ to each case. The set of obligation chains returned by $\normalize$ can then be turned into a set of normalized rules by disturbing the triggering event $\event$ to them.

\begin{corollary}
For any original SLEEC rule with $n$ syntax tokens, the size of the normalized SLEEC rule is $O(n)$ 
\end{corollary}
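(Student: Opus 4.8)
The plan is to prove the bound by structural induction on the response $\textit{resp}$, strengthening the statement into an invariant that tracks the accumulated triggering condition. Concretely, I would show that for every response $\textit{resp}$ and every proposition $\prop$ passed into the recursion, each obligation chain $\dobg \in \normalize(\textit{resp}, \prop)$ has size $O(|\textit{resp}| + |\prop|)$, where $|\cdot|$ counts syntax tokens. The corollary then follows by instantiating the top-level call $\normalize(\textit{resp}, \prop_{\mathrm{trig}})$, in which $\textit{resp}$ together with the trigger condition $\prop_{\mathrm{trig}}$ account for the $n$ tokens of the original rule, so each produced chain---and hence each normalized rule ``\textbf{when} $\event$ \textbf{then} $\dobg$''---has size $O(n)$.

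First I would set up the induction over the four response shapes in Fig.~\ref{fig:normalize}. The two base cases ($\within{\event}{\term}$ and $\within{\notS\;\event}{\term}$) return the single chain $\{\prop \Rightarrow \within{\event}{\term}\}$, of size $O(|\prop|) + O(1)$. For the two \textbf{unless} cases, the recursive calls pass $\prop \sand \notS\;\prop'$ (respectively $\prop \sand \prop'$); since $\prop'$ is a sub-expression of $\textit{resp}$, the incoming condition grows only by tokens already charged to $\textit{resp}$, so the inductive hypothesis applied to the strictly smaller $\textit{resp}_1, \textit{resp}_2$ gives each resulting chain size $O(|\textit{resp}| + |\prop|)$.

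The crux---and the step I expect to be the main obstacle---is the \textbf{otherwise} case, since this is where both the number of chains and the per-chain condition could in principle blow up multiplicatively. Here each output chain has the form $(\prop \Rightarrow \within{\event}{\term})\ \otherwise\ \dobg$ for some $\dobg \in \normalize(\textit{resp}_2, \top)$. The key observation to make precise is that the recursive call on $\textit{resp}_2$ is made with the \emph{reset} condition $\top$ rather than with the accumulated $\prop$; consequently the potentially large condition $\prop$ is attached only to the single head obligation of each chain and is never duplicated down the tail. By the inductive hypothesis each tail $\dobg$ has size $O(|\textit{resp}_2|)$, so each full chain has size $O(|\prop| + |\within{\event}{\term}| + |\textit{resp}_2|) = O(|\prop| + |\textit{resp}|)$, closing the induction. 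I would stress that this reset is exactly what prevents an exponential or quadratic per-rule blowup.

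Finally I would pin down the scope of the claim, since the bound is \emph{per normalized rule}: the total size of the whole normalized set can be $\Theta(n^2)$, as a chain of $m$ nested \textbf{unless}-\textbf{then} defeaters yields $m+1$ rules whose conditions have sizes $1, 2, \ldots, m+1$. To document this cleanly, I would additionally verify by the same induction that the \emph{number} of produced rules is $O(n)$, using $N(\textit{resp}_1\ \textbf{unless}\ \prop'\ \textbf{then}\ \textit{resp}_2) = N(\textit{resp}_1) + N(\textit{resp}_2)$ and $N(\within{\event}{\term}\ \textbf{otherwise}\ \textit{resp}_2) = N(\textit{resp}_2)$, both bounded by the number of leaf obligations in $\textit{resp}$. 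This confirms that each individual rule stays linear even though the collection need not, which is precisely the content of the corollary as stated.
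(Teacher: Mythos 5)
Your proof is correct; the main thing to note is that the paper offers no proof to compare it with --- the corollary is stated immediately after the definition of $\normalize$ (Fig.~\ref{fig:normalize}) as an unproved consequence of the shape of the rewrite rules. Your structural induction, with the strengthened invariant that every chain in $\normalize(\textit{resp}, \prop)$ has size $O(|\textit{resp}| + |\prop|)$, is the natural way to make that implicit argument rigorous, and each case does check out: the two \textbf{unless} cases merely move tokens of $\prop'$ (a subterm of $\textit{resp}$) into the accumulated guards of two \emph{disjoint} branches of the union, and in the \textbf{otherwise} case the accumulated $\prop$ decorates only the head conditional obligation because the tail is recursively normalized under $\top$. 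Your two supplementary observations go beyond what the paper records and are both right: the bound holds only \emph{per normalized rule}, since a cascade of $m$ \textbf{unless}-\textbf{then} defeaters produces $m+1$ rules whose guards have sizes roughly $1, 2, \ldots, m+1$, making the aggregate output $\Theta(n^2)$ (so the ``total size'' reading of the corollary would be false); and the $\top$-reset in the \textbf{otherwise} case is exactly what blocks a per-chain blowup when a large trigger condition meets a long \textbf{otherwise} cascade. One small point of rigor: because each \textbf{otherwise} nesting level contributes a constant number of fresh tokens (the implication, the $\top$, the keyword), the invariant must carry an explicit multiplicative constant, e.g.\ $|\dobg| \le |\prop| + c\,|\textit{resp}|$ for a fixed $c \ge 2$, for the induction to close uniformly; a purely additive invariant of the form $|\textit{resp}| + |\prop| + O(1)$ does not survive that case, but this is routine bookkeeping rather than a gap.
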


\begin{example}
    Consider the original SLEEC rule $r_{o}$ shown in Fig.~\ref{tab:normalizationExample}. Applying the normalization function $\normalize$ on $r_{o}$ yields two normalized rules $r_{n1}$ and $r_{n2}$ shown in Fig.~\ref{tab:normalizationExample}.
\end{example}

The semantics of the normalized \dsl is shown in Fig.~\ref{fig:semantics}.

\begin{figure*}
    \centering 
    \scalebox{0.8}{
        \begin{tabular}{lll}
        $\fos \models_i \prop$  & iff & $\measureassign{i}(\prop) $\\
        
        $\fos \models_{i} \within{\event}{\term}$  & iff & $\exists j \in [i, n] . (\event \in \eventocc{j} \wedge \timestamp{j} \in [\timestamp{i}, \timestamp{i} +\measureassign{j}(\term)] )$ \\
    
        $\fos \not\models_{i}^{j} \within{\event}{\term}$ & iff & $\timestamp{j} = \timestamp{i} + \measureassign{i}(\term) \wedge \forall j' \in [i, j](\event \not\in \eventocc{j'})$ \\
    
        $\fos \models_{i} \within{\notS \; \event}{\term}$ & iff & $\exists j (\fos \not\models_{i}^{j} \within{\event}{\term})$ \\
    
        $\fos \not\models_{i}^{j} \within{\notS \; \event}{\term}$  & iff & $\fos \models \within{\event}{\term} \wedge \forall j'\in [i,j) (\fos \not\models_{i}^{j} \within{\event}{\term} )  $\\
        
        $\fos \models_i (\prop \Rightarrow \obg)$ & iff & $\fos \models_{i} \prop \Rightarrow \fos \models_{i} \obg$\\
    
         $\fos \not\models_{i}^{j} (\prop \Rightarrow \obg)$ & iff & $\fos \models_{i} \notS \; \prop \wedge \fos \not\models_{i}^{j} \obg$\\
    
         $\fos \models_{i} \cobg^+ \otherwise \dobg$ & iff & $\fos \models_{i} \cobg^+ \vee \exists j (\fos \not\models_{i}^{j} \cobg^+ \wedge \fos \models_{j} \dobg)$ \\
    
         $\fos \not\models_{i}^{j} \cobg^+ \otherwise \dobg$ & iff & $\exists j' \in [i, j] (\fos \not\models_{i}^{j'} \cobg^+ \wedge \fos \not\models_{j'}^{j} \dobg)$\\
    
         $\fos \models \rulesyntax{\event \sand \prop}{\dobg}$ & iff & $\forall i\in [1,n] ((\event \in \eventocc{i} \wedge \measureassign{i}(\prop)) \Rightarrow \fos \models_{i} \dobg)$\\
         
         \hline
         $\fos \models_{i} \notS \dobg$ & iff & $\exists j (\fos \not\models_{i}^{j} \dobg)$\\
    
         $\fos \models \factsyntax{\event \sand \prop}{\dobg}$ & iff & $\exists i\in [1,n] (\event \in \eventocc{i} \wedge \measureassign{i}(\prop) \wedge \fos \models_{i} \dobg)$\\
    
         $\fos \models \factsyntax{\event \sand \prop}{\notS \dobg}$ & iff & $\exists i\in [1,n] (\event \in \eventocc{i} \wedge \measureassign{i}(\prop) \wedge \fos \models_{i} \notS \; \dobg)$\\
        
        \end{tabular}
    }

 \caption{\small Semantics of normalized $\dsl$ defined over trace $\fos = (\eventocc{1}, \measureassign{1}, \timestamp{1}) \ldots (\eventocc{n}, \measureassign{n}, \timestamp{n})$.}
  \label{fig:semantics}
\end{figure*}

% \begin{example}[Normalization]
% Let a \dsl rule $\srule_1 = \rulesyntax{\event_1 \wedge a}
% { (\obg_1  \textbf{ unless } b \textbf{ then } \obg_2 ) 
% \textbf{ otherwise } ({\obg_3 } \textbf{ unless } b \textbf{ then } \obg_4) }$. To normalize $\srule_1$, we extrasct the pseudos rule of $\srule_1$, $\rulesyntax{a}
% { \textit{resp} }$ where $I = (\obg_1  \textbf{ unless } b \textbf{ then } \obg_2 ) 
% \textbf{ otherwise } ({\obg_3 } \textbf{ unless } b \textbf{ then } \obg_4)$, and then computes the normalized pseudo rules $\normalize(\textit{resp}, a)$ (See Fig.~\ref{fig:normalize}). The top level operator of $\textit{resp} = \alpha_1 \textit{otherwise} \alpha_2$ is \textit{otherwise}, and recursively call itself on the two operands $\normalize(\alpha_1, a)$ and $\normalize(\alpha_2, a)$ where $\alpha_1 = \obg_1  \textbf{ unless } b \textbf{ then } \obg_2$ and $\alpha_2 = {\obg_3 } \textbf{ unless } b \textbf{ then } \obg_4$. $\normalize(\alpha_1, a)$ detects the top level of $\alpha_1$ is 
     
% \end{example}

% In the base case where $\textit{resp} = \event | \neg \event$, a set containing a normalized pseudo-rule $\rulesyntax{\prop}{\textit{resp}}$ is returned. 

\section{SLEEC DSL Semantic Extensions}
\label{ap:extendedsem}

The \emph{bounded semantics} of \dsl is presented in in Fig.~\ref{fig:boundsemantics}. The bounded semantics are defines over traces to describe the satisfaction of \dsl rule up to a time point $k$.

\begin{figure}
    \centering 
    \scalebox{0.9}{
        \begin{tabular}{lll}        
        $\fos \vdash_{i}^{k} \obg$  & iff & $\neg (\exists j \in [i, k]. \fos \not\models_{i}^{k} \obg))$ \\
        
        $\fos \vdash_i^k (\prop \Rightarrow \obg)$ & iff & $\fos \models_{i} \prop \Rightarrow \fos \vdash_{i}^{k} \obg$\\
    
         $\fos \vdash_{i}^{k} \cobg^+ \otherwise \dobg$ & iff & \makecell{$\fos \vdash_{i} \cobg^+ \vee \exists j (j \le k \wedge \fos \not\models_{i}^{j} \cobg^+ $ \\ $\wedge \fos \vdash_{j}^{k} \dobg)$} \\

         $\fos \vdash^k \rulesyntax{\event \sand \prop}{\dobg}$ & iff & \makecell{$\forall i\in [1,k] ((\event \in \eventocc{i} \wedge \measureassign{i}(\prop))$ \\ $ \Rightarrow \fos \vdash_{i}^k \dobg)$}\\
        \end{tabular}
    }
 \caption{\small Bounded Semantics of normalized $\dsl$ up to a time point $k$ defined over trace $\fos = (\eventocc{1}, \measureassign{1}, \timestamp{1}) \ldots (\eventocc{n}, \measureassign{n}, \timestamp{n})$ where $n \ge k$.}
 \label{fig:boundsemantics}
\end{figure}

\section{Translation of SLEEC DSL to \fol}~\label{ap:translate}
In this section, we provide the translation function $T$  in Tbl.~\ref{tab:translate}, and prove the correctness claim of the translation prove the correctness of the translation in Thm.~\ref{thm:transcorrectness}.

\begin{figure*}
    \centering
    \scalebox{0.8}{
    \begin{tabular}{lll}            
                $T(\rulesyntax{\event \; \andS \; \prop}{\dobg})$ & \multicolumn{2}{l}{$\rightarrow$ $\forall \obj{\event}: \Class{\event}\exists \obj{\measures}:\Class{\measures}(\obj{\measure}.\textit{time} = \obj{\event}.\textit{time} \wedge (T^{*}(\prop, \obj{\measures}) \Rightarrow T^{*}(\dobg, \obj{\measures})))$}  \\
                
                $T(\factsyntax{\event \; \andS \; \prop}{\dobg})$ & \multicolumn{2}{l}{ $\rightarrow$ $\exists \obj{\event}: \Class{\event}\exists \obj{\measures}:\Class{\measures}(\obj{\measure}.\textit{time} = \obj{\event}.\textit{time} \wedge (T^{*}(\prop, \obj{\measures}) \wedge T^{*}(\dobg, \obj{\measures})))$} \\
                
                $T^{*}(\cobg_1 \otherwise \cobg_2 \ldots \cobg_n, \; \obj{\measures})$& \multicolumn{2}{l}{$\rightarrow$ $T^{*}(\cobg_1, \obj{\measures}) \vee \exists \obj{\measures}_j:\Class{\measures}(\textbf{Violation}(\cobg_1, \obj{\measures}, \obj{\measures}_j) \wedge T^{*}(\cobg_2 \otherwise \ldots \cobg_n, \obj{\measures}_j))$} \\

                $T^{*}(\notS \; \dobg, \; \obj{\measures})$& \multicolumn{2}{l}{$\rightarrow$ $\neg T^{*}(\dobg)$} \\

                $T^{*}(\prop \Rightarrow \obg,  \obj{\measures})$&  \multicolumn{2}{l}{$\rightarrow$ $\neg T^{*}(\prop, \obj{\measures}) \wedge T^{*}(\obg, \obj{\measures})$} \\

                $\textbf{Violation}(\prop \Rightarrow \obg,  \obj{\measures}, \obj{\measures}_j)$&  \multicolumn{2}{l}{$\rightarrow$ $T^{*}(\prop, \obj{\measures}) \wedge \textbf{Violation}(\obg, \obj{\measures}, \obj{\measures}_j)$} \\
                
                $T^{*}(\within{\event}{\term},  \obj{\measures})$&  \multicolumn{2}{l}{$\rightarrow$ $\exists \obj{\event}:\Class{\event}(\obj{\measures}.\textit{time} \le \obj{\event}.\textit{time} \le \obj{\measures}.\textit{time} + T^{*}(\term, \obj{\measures}))$} \\

                $\textbf{violation}(\within{\event}{\term},  \obj{\measures}, \obj{\measures}_j)$&  \multicolumn{2}{l}{$\rightarrow$ $ \obj{\measures}_j.\textit{time} = \obj{\measures}.\textit{time} + T^{*}(\term, \obj{\measures})  \wedge \neg T^{*}(\within{\notS \; \event}{\term},  \obj{\measures})$ } \\

                $T^{*}(\within{\notS \; \event}{\term},  \obj{\measures})$& \multicolumn{2}{l}{ $\rightarrow$ $\neg T^{*}(\within{\notS \; \event}{\term},  \obj{\measures})$} \\

                $\textbf{Violation}(\within{\neg \event}{\term},  \obj{\measures}, \obj{\measures}_j)$&  \multicolumn{2}{l}{$\rightarrow$ \makecell{$ 
                (\obj{\measures}.\textit{time} \le \obj{\measures}_j.\textit{time} \le \obj{\measures}.\textit{time} + T^{*}(\term, \obj{\measures})) \wedge \exists \obj{\event}:\Class{\event}(\obj{\measures}_j.\textit{time} = \obj{\event}.\textit{time})$  \\ $\neg (\exists \obj{\event}_1:\Class{\event} (\obj{\measures}.\textit{time} \le \obj{\event}_1.\textit{time} < \obj{\measures}_j.\textit{time}) )$}} \\
                
                $T^{*}(c, \obj{\measures})$ $\rightarrow$ $c$  &  $T^{*}(\measure, \obj{\measures})$ $\rightarrow$ $\obj{\measures}.\measure$ &  \\
                $T^{*}(-\term, \obj{\measures})$ $\rightarrow$ $-T^{*}(\term, \obj{\measures})$& $T^{*}(\term_1 + \term_2, \obj{\measures})$ $\rightarrow$ $T^{*}(\term_1, \obj{\measures}) + T^{*}(\term_2, \obj{\measures})$ & $T^{*}(c \times \term, \obj{\measures})$ $\rightarrow$ $T^{*}(c, \obj{\measures}) \times T^{*}(\term, \obj{\measures})$\\
                
                $T^{*}(\top, \obj{\measures})$ $\rightarrow$  $\top$ & $T^{*}(\term_1 > \term_2, \obj{\measures})$ $\rightarrow$  $T^{*}( \term_1, \obj{\measures}) > T^{*}( \term_2, \obj{\measures})$ & $T^{*}(\term_1 = \term_2, \obj{\measures})$ $\rightarrow$  $T^{*}( \term_1, \obj{\measures}) = T^{*}( \term_2, \obj{\measures})$\\
                $T^{*}(\neg \prop, \obj{\measures})$ $\rightarrow$  $\neg T^{*}( \prop, \obj{\measures})$& $T^{*}( \prop_1 \wedge \prop_2, \obj{\measures})$ $\rightarrow$   $T^{*}( \prop_1, \obj{\measures}) \wedge T^{*}( \prop_2, \obj{\measures}) $ & $T^{*}( \prop_1 \vee \prop_2, \obj{\measures})$ $\rightarrow$   $T^{*}( \prop_1, \obj{\measures}) \vee T^{*}( \prop_2, \obj{\measures}) $
                \end{tabular}  
    }
\caption{Translation rules from \dsl to \fol. Given a \dsl rule $\srule$, $T$ translates $\srule$ to an \fol formula using the translation function $T^{*}$. The function $T^{*}$ recursively visits the elements (i.e., term, proposition and obligations) 
% \hl{Discuss (we did not define components, should we say ``element'' instead, because component is a strong word)} 
of $\srule$ and translates them into \fol constraints under a relational object $\obj{\measures}$ representing the measures when $\srule$ is triggered.} 
\label{tab:translate}
\end{figure*}

We now state and prove the correctness of the \fol encoding:

\begin{theorem}[Correctness of the \fol translation]\label{thm:transcorrectness}
Let a set of rules $\ruleset$ and facts $\facts$ in \dsl be given. There exists a trace $\fos = (\eventocc{1}, \measureassign{1}, \timestamp{1}),$ $ (\eventocc{2}, \measureassign{2}, \timestamp{2}),$ $\ldots  (\eventocc{n}, \measureassign{n}, \timestamp{n})$ such that $\fos \in \lang{\ruleset} \cap \lang{\facts}$ if and only if $\translate{\ruleset} \wedge \translate{\facts} \wedge \textit{axiom}_{mc}$ has a satisfying solution $(\domain, v)$.
\end{theorem}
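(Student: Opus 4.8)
The plan is to prove the biconditional by constructing an explicit, essentially bijective correspondence between traces $\fos$ and satisfying solutions $(\domain, v)$, and then showing this correspondence preserves the satisfaction relation clause by clause. The heart of the argument is a single structural-induction lemma relating the SLEEC semantics of Sec.~\ref{sec:background} to the evaluation of the translated formulas in Tbl.~\ref{tab:translate}; the two directions of the theorem are then bookkeeping on top of it.

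First I would fix the correspondence in both directions. Given a trace $\fos = (\eventocc{1}, \measureassign{1}, \timestamp{1}) \ldots (\eventocc{n}, \measureassign{n}, \timestamp{n})$, I build $\domain$ with one object $\obj{\event}:\Class{\event}$ having $\obj{\event}.\textit{time} = \timestamp{i}$ and $\obj{\event}.ext = \top$ for each occurrence $\event \in \eventocc{i}$, and one object $\obj{\measures}_i:\Class{\measures}$ with $\obj{\measures}_i.\textit{time} = \timestamp{i}$ and $\obj{\measures}_i.\measure = \measureassign{i}(\measure)$ per time point; the valuation $v$ reads off these attributes. Conversely, given a solution, I take the time points to be the distinct time values occurring among the existing objects, sorted (which automatically yields the strictly increasing $\timestamp{i}$ demanded of a trace), set $\eventocc{i}$ to the events whose objects exist at that time, and read $\measureassign{i}$ off the measure object there (assigning arbitrary values where none exists, which is harmless since rules and facts only consult measures at triggering times, where the translation forces a measure object to exist). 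The axiom $\textit{axiom}_{mc}$ is exactly what makes this read-off well defined, as it forbids two distinct measure objects from sharing a time value. One subtlety must be flagged here: the SLEEC violation conditions refer to an \emph{explicit} deadline time point $\timestamp{j} = \timestamp{i} + \measureassign{i}(\term)$, whereas the encoding only constrains time \emph{values}; so in the backward direction the reconstructed trace may need to be augmented with finitely many ``empty'' states at obligation deadlines, and one checks such insertions witness the required violations without disturbing any already-satisfied rule or fact.

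Next I would state and prove the core correspondence lemma by structural induction over the grammar of Tbl.~\ref{tab:syntax}. For corresponding $\fos$ and $(\domain, v)$ it asserts: (i) $v(T^{*}(\term, \obj{\measures}_i)) = \measureassign{i}(\term)$ for every term; (ii) $v \models T^{*}(\prop, \obj{\measures}_i)$ iff $\measureassign{i}(\prop) = \top$; (iii) $v \models T^{*}(\dobg, \obj{\measures}_i)$ iff $\fos \models_{i} \dobg$ for obligation chains; and (iv) $v \models \textbf{Violation}(\cobg, \obj{\measures}_i, \obj{\measures}_j)$ iff $\fos \not\models_{i}^{j} \cobg$ for conditional obligations and their underlying obligations. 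The base cases (terms, atomic propositions, and single positive/negative obligations) follow by unfolding the matching rows of Tbl.~\ref{tab:translate} against the semantics, using that the existential over $\obj{\event}$ in $T^{*}(\within{\event}{\term}, \cdot)$ ranges exactly over the event occurrences of $\fos$. Claims (iii) and (iv) are proved simultaneously, since chain fulfillment and head violation are mutually recursive: the $\otherwise$ row of $T^{*}$ feeds (iv) for the head $\cobg_1^{+}$ into (iii) for the tail $\cobg_2 \otherwise \ldots$, mirroring the disjunction-with-existential-violation shape of the chain semantics in Fig.~\ref{fig:semantics}.

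Finally I would lift the lemma to whole rules and facts and assemble the theorem. For a rule $\rulesyntax{\event \wedge \prop}{\dobg}$ the quantifier $\forall \obj{\event}$ ranges over exactly the occurrences of $\event$, so by (ii)--(iii) the translated formula holds under $v$ iff the chain is fulfilled at every triggering time point, i.e. $\fos \models \srule$; the existential case for facts (including the $\notS \dobg$ form, which routes through (iii) negated and the $\exists j$ violation semantics) is symmetric. Conjoining over $\ruleset$ and $\facts$ gives $v \models \translate{\ruleset} \wedge \translate{\facts}$ iff $\fos \in \lang{\ruleset} \cap \lang{\facts}$. The forward direction then uses the solution built from $\fos$, where $\textit{axiom}_{mc}$ holds because exactly one measure object is created per time point; the backward direction uses the (possibly deadline-augmented) trace built from $(\domain, v)$, where the lemma applies precisely because $\textit{axiom}_{mc}$ makes the measure read-off unambiguous. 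I expect the main obstacle to be exactly this reconciliation in the simultaneous induction for (iii)--(iv): lining up the ``there exists a time point $j$'' of the chain and negative-obligation semantics with the existential over measure objects $\obj{\measures}_j$ in the encoding, and verifying the delicate ``first occurrence at $j$'' side-condition in the negative-obligation violation row, are where the encoding and the semantics must be matched with the most care.
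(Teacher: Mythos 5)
Your proposal takes essentially the same route as the paper's own proof sketch: both directions are handled by the identical trace-to-solution and solution-to-trace constructions (one relational object per event occurrence, one measure object per time point), with the measure-consistency axiom securing well-definedness of the measure read-off. The paper compresses the verification step into ``we then prove that the constructed pair is a solution,'' whereas you make it explicit via the structural-induction correspondence lemma and the deadline-state augmentation in the backward direction; these additions are sound and simply spell out what the paper's sketch leaves implicit.
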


\begin{sop}
    We prove the forward direction by constructing a satisfying solution $(\domain, v)$ to $\translate{\ruleset} \wedge \translate{\facts}$ from the trace $\fos \in \lang{\ruleset} \cap \lang{\facts}$. For every state $(\eventocc{i}, \measureassign{i}, \timestamp{i}) \in \fos$, we follow the construction rules: (1) for every event $\event \in \eventocc{i}$, add a relational object $\obj{e}$ of class $\Class{e}$ such that $v(\obj{e}.\textit{ext}) = \top$ and  $v(\obj{e}.\textit{time}) = \timestamp{i}$; and (2) add a relational object $\obj{\measures}$ such that $\obj{\measures}.\textit{time} = \timestamp{i}$ and $v(\obj{\measures}.m) = \measureassign{i}(\measure)$ for every measure $m \in \measures$. We then prove that the constructed $(\domain, v)$ is a solution to   $\translate{\ruleset} \wedge \translate{\facts} \wedge \textit{axiom}_{mc}$. 

    We prove the backward direction by constructing  $\fos$ from a satisfying solution $(\domain, v)$ to $\translate{\ruleset} \wedge \translate{\facts} \wedge \textit{axiom}_{mc}$. The construction maps every relational object $\obj{\event}$ and $\obj{\measures}$ to some state $(\eventocc{i}, \measureassign{i}, \timestamp{i}) \in \fos$, where (1) $\event \in \eventocc{i}$ if $v(\obj{\event}).\textit{ext} = \top \wedge v(\obj{\event}).\textit{time} = \timestamp{i}$; and (2) $\measureassign{\measure} = v(\obj{\measures}.\measure)$ for every $\measure \in \measures$ if $v(\obj{\measures}).\textit{ext} = \top \wedge v(\obj{\measures}).\textit{time} = \timestamp{i}$. We then prove $\fos \in \lang{\ruleset} \cap \lang{\facts}$ 
    and conclude the proof. 
    % {\bf MC:  do you mean that this shows that this proposition holds?}
\end{sop}

\section{Soundness of the \approach approach}\label{app:sanitizationSoundness}
\begin{theorem}[Vacuous Conflict as Satisfiability]\label{thm:vcdef}
    Let a set of rules $\ruleset$ be given. A rule $\srule \in \ruleset$ is vacuously conflicting if and only if $\translate{\ruleset} \wedge \translate{\triggerfunc{\srule}} \wedge \textit{axiom}_{mc}$ is UNSAT.
\end{theorem}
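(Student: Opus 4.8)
The plan is to prove this biconditional by composing two results already established in the excerpt: Lemma~\ref{lemma:vcdef}, which characterizes vacuous conflict in terms of behaviour emptiness, and Theorem~\ref{thm:transcorrectness}, which relates behaviour nonemptiness to \fol satisfiability. No new semantic reasoning over traces is required; the work is purely in aligning the two statements so they chain cleanly into the claimed equivalence.

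First I would invoke Lemma~\ref{lemma:vcdef} to replace the property ``$\srule$ is vacuously conflicting'' by the equivalent emptiness condition $\lang{\ruleset} \cap \lang{\triggerfunc{\srule}} = \emptyset$, where $\triggerfunc{\srule}$ is the triggering fact of Def.~\ref{def:ruletrigger}. Next I would apply Theorem~\ref{thm:transcorrectness} with the singleton fact set $\facts = \{\triggerfunc{\srule}\}$: that theorem asserts that a trace $\fos \in \lang{\ruleset} \cap \lang{\facts}$ exists if and only if $\translate{\ruleset} \wedge \translate{\facts} \wedge \textit{axiom}_{mc}$ has a satisfying solution. Reading this contrapositively, $\lang{\ruleset} \cap \lang{\triggerfunc{\srule}}$ is empty exactly when $\translate{\ruleset} \wedge \translate{\triggerfunc{\srule}} \wedge \textit{axiom}_{mc}$ is UNSAT. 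Chaining the two equivalences gives ``$\srule$ vacuously conflicting $\iff$ the \fol encoding is UNSAT'', which is the statement.

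The only points needing care — rather than a genuine obstacle — are bookkeeping ones. I would check that $\translate{\facts}$ for the singleton set $\{\triggerfunc{\srule}\}$ is literally $\translate{\triggerfunc{\srule}}$, so that the \fol formula named inside Theorem~\ref{thm:transcorrectness} is syntactically the same as the one appearing in this statement; and I would observe that both Lemma~\ref{lemma:vcdef} and Theorem~\ref{thm:transcorrectness} are themselves biconditionals, so each direction of the final ``if and only if'' transfers without any extra case split. Since all the substantive content (soundness and completeness of the \dsl-to-\fol translation, including the role of $\textit{axiom}_{mc}$ in forbidding inconsistent measure assignments at a shared time point) is discharged within Theorem~\ref{thm:transcorrectness}, this result is in effect a one-line corollary of it together with Lemma~\ref{lemma:vcdef}, and I would present it as such.
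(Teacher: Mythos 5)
Your proposal is correct and follows exactly the paper's own proof: the paper likewise cites Lemma~\ref{lemma:vcdef} to reduce vacuous conflict to emptiness of $\lang{\ruleset} \cap \lang{\triggerfunc{\srule}}$ and then invokes Thm.~\ref{thm:transcorrectness} to equate that emptiness with UNSAT of the encoded formula. Your additional bookkeeping remarks (the singleton fact set and the biconditional chaining) are just explicit spellings-out of steps the paper leaves implicit.
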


\begin{proof}
    By Lemma~\ref{lemma:vcdef}, $\srule$ is vacuously conflicting if and only if $\lang{\ruleset} \cap \lang{\triggerfunc{\srule}}$ is empty. The emptiness checking problem can be encoded as the satisfiability problem by Thm.~\ref{thm:transcorrectness}.
\end{proof}

Given a rule $\srule \in \ruleset$, lemma.~\ref{lemma:redundantdef} enables detection of redundancy by checking the emptiness of $\lang{\ruleset \setminus \srule} \cap \lang{\noncompliant{\srule}}$. 

The emptiness checking encoded as a \fol satisfiability problem:

\begin{theorem}[Redundancy via Satisfiability]\label{Thm:redviasat}
    Let a set of \dsl rules ($\ruleset$) be given. A rule $\srule \in \ruleset$ is redundant if and only if $\translate{\ruleset \setminus \srule} \wedge \translate{\noncompliant{\srule}} \wedge \textit{axiom}_{mc}$ is UNSAT. 
\end{theorem}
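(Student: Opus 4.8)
The plan is to mirror the structure of the vacuous-conflict argument (Thm.~\ref{thm:vcdef}) and assemble the result by composing two facts already established in the excerpt: the behavioural characterisation of redundancy in Lemma~\ref{lemma:redundantdef}, and the soundness-and-completeness of the \fol translation in Thm.~\ref{thm:transcorrectness}. Because both of these are available, the theorem reduces to a short chain of equivalences rather than a fresh semantic argument over traces.

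First I would invoke Lemma~\ref{lemma:redundantdef}, which states that a rule $\srule \in \ruleset$ is redundant if and only if $\lang{\ruleset \setminus \srule} \cap \lang{\noncompliant{\srule}}$ is empty. This reduces the claim to showing that this behavioural intersection is empty exactly when the \fol formula $\translate{\ruleset \setminus \srule} \wedge \translate{\noncompliant{\srule}} \wedge \textit{axiom}_{mc}$ is UNSAT. Second, I would instantiate Thm.~\ref{thm:transcorrectness} with the rule set taken to be $\ruleset \setminus \srule$ and the fact set taken to be the singleton $\facts = \{\noncompliant{\srule}\}$. The theorem then yields: there exists a trace $\fos \in \lang{\ruleset \setminus \srule} \cap \lang{\noncompliant{\srule}}$ if and only if $\translate{\ruleset \setminus \srule} \wedge \translate{\noncompliant{\srule}} \wedge \textit{axiom}_{mc}$ has a satisfying solution. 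Contrapositively, the intersection is empty precisely when the formula has no satisfying solution, i.e., is UNSAT. Chaining this equivalence with Lemma~\ref{lemma:redundantdef} delivers the stated biconditional.

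The main obstacle --- really the only point requiring care --- is verifying the side conditions needed to apply Thm.~\ref{thm:transcorrectness} to the noncompliant fact. The fact $\noncompliant{\srule}$ has the negated shape $\factsyntax{\event \; \andS \; \prop}{\notS \; \dobg}$, so I must confirm that the translation $T$ and the statement of Thm.~\ref{thm:transcorrectness} are indeed phrased for facts of this form; this holds by Def.~\ref{def:violation}, Lemma~\ref{lemma:concern} (which guarantees $\fos \not\models \srule \iff \fos \models \noncompliant{\srule}$), and the fact clause of the translation in Tbl.~\ref{tab:translate}. I would also note that Thm.~\ref{thm:transcorrectness} is stated for an arbitrary fact set, so specialising to a singleton is immediate, and that $\lang{\{\noncompliant{\srule}\}} = \lang{\noncompliant{\srule}}$ by Def.~\ref{def:fact}. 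With these checks in place, no additional trace-level reasoning is needed, and the proof collapses to the two-line composition sketched above.
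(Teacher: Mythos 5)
Your proposal is correct and follows essentially the same route as the paper: the paper's proof likewise chains the behavioural characterisation of redundancy (it unfolds Def.~\ref{def:redudent}, Lemma~\ref{lemma:nonredundant}, and Lemma~\ref{lemma:concern} inline, which is exactly the content of Lemma~\ref{lemma:redundantdef} that you cite directly) with Thm.~\ref{thm:transcorrectness} applied to the rule set $\ruleset \setminus \srule$ and the fact $\noncompliant{\srule}$. Your extra verification of the side conditions for the negated fact form is sound diligence but does not change the argument.
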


\begin{proof}
    By definition of redundancy (Def.~\ref{def:redudent}), $\srule$ is redundant if and only if there does not exist a $\fos$ such that $\fos \in \lang{\ruleset \setminus \srule}$ and $\fos \not\models \srule$. By Lemma.~\ref{lemma:concern}, $\srule$ is redundant if and only if there does not exist a $\fos$ such that $\fos \in \lang{\ruleset \setminus \srule} \cap \lang{\noncompliant{\srule}}$. By Thm.~\ref{thm:transcorrectness}, $\srule$ is redundant if and only if $\translate{\ruleset \setminus \srule} \wedge \translate{\noncompliant{\srule}} \wedge \textit{axiom}_{mc}$ is UNSAT .
\end{proof}

%\st{Given a rule set $\ruleset$, we examine every rule in $\ruleset$ and check whether the rule is redundant by checking} \fol \st{satisfiability according to Thm.}~\ref{Thm:redviasat}. \st{In addition to reporting a binary ``yes'' or ``no'' answer, we also provide a diagnosis (if the rule is redundant), computed from a causal proof of UNSAT, to highlight the causes of the redundancy (see Sec.}~\ref{sec:diagnose}).
%We will present the diagnose in Sec.~\ref{sec:diagnose}

\section{\fol Encoding for Situational Conflicts}\label{ap:translateSituation}

In this section, we first define the states of obligation in Def.~\ref{def:triggered}, and then use it to prove sufficient condition for situational conflict in Thm.~\ref{lemma:sc}. Next, we present the \fol encoding for the sufficient condition of situational conflict in Tab.~\ref{tab:sctranslate}, and finally provide a sketch of the proof the correctness of the encoding (Thm.~\ref{thm:condconfencoding}).

\begin{definition}[State of Obligations]\label{def:triggered}
Let a rule set $\ruleset$, a rule $\srule \in \ruleset$ and an $\srule$-triggering situation ($\fos^k_0$, $\Vec{\measures_{k}}$) be given. The time point $k$ is the last time point of $\fos^k_0$, and it is also when $\srule$ is triggered. The status of rules, obligation chains, conditional obligations and obligations are defined as follows:
\begin{itemize}[label={},leftmargin=0cm]
    \item \textbf{Triggered}: A rule $\srule = \rulesyntax{\event \wedge \prop}{\dobg}$ is \textit{triggered} at time point $i$ if $i \le k$ and $\fos^k_0 \models_i e$ and $\fos^k_o \models \prop$. If a rule $\srule = \rulesyntax{\event \wedge \prop}{\dobg}$ is triggered at $i$, then the obligation chain $\dobg$ is \textit{triggered} at $i$. If an obligation chain $\dobg = \cobg \otherwise \dobg'$ is triggered at $i$, then (1) the \emph{conditional obligation} $\cobg$ is triggered at $i$ and (2) $\dobg'$ is \textit{triggered} at $j > i$ if $\cobg$ is \textit{violated} at $j$ or $\cobg$ is \textit{blocked} at $i$. If a conditional obligation $\prop \Rightarrow \obg$ is triggered at $i$ and $\prop$ is evaluated to $\top$ at $i$, then $\obg$ is triggered at $i$.  
    % {\bf MC:  put into italics only those terms that you are defining!}
    
    \item \textbf{Fulfilled}: An obligation $\obg$ is \textit{fulfilled} at time point $j \le k$ if it is triggered at some time point $i \le j$ and
    $\fos_0^j \models_{i} \obg$. A conditional obligation $\prop \Rightarrow \obg$ (triggered at $i$)
    is \textit{fulfilled} at $j$ if its obligation $\obg$ is fulfilled at $j$ or $\prop$ evaluated to $\bot$ at $i$. An obligation chain $\dobg = \cobg \otherwise \dobg'$ 
    is fulfilled at $j$ if $\cobg$ is fulfilled at $j$ or $\dobg'$ is fulfilled at $j$. 
    
    \item \textbf{Violated}: An obligation $\obg$ (\textit{triggered} at $i$) is \textit{violated} at a time point $j \le k$ if $\fos^k_0 \not\models_{i}^{j} \obg$. A conditional obligation $\prop \Rightarrow \obg$ is \textit{violated} at time point $j$ if $\obg$ is \textit{violated} at $j$. An obligation chain $\dobg = \dobg' \otherwise \cobg$ is \textit{violated} at time point $j$ if $\dobg'$ is \textit{violated} at some point $j' \le j$, $\cobg$ is \textit{triggered} at $j$  and $\cobg$ is \textit{violated} at $j'$.
    
    \item \textbf{Active}: An obligation, conditional obligation and obligation chain are \textit{active} at time point $j$ if they are \textit{triggered} at some time point $i \le j$ and are not \textit{fulfilled} and \textit{violated} at any time point $j' \in [i,j]$.
    
    \item \textbf{Forced}: An obligation chain $\dobg$ is \textit{forced} at time point $j \ge k$ if $\dobg$ is \textit{active} at $j$. If an obligation chain $\dobg = \cobg^+ \otherwise \dobg'$ is \textit{forced}, and $\dobg'$ is \textit{blocked} at the time point $j'$ when $\cobg^+$ expires (s.t., $\timestamp{j'} = \timestamp{j} + \measureassign{j}$), then $\cobg^+$ is \textit{forced} at time point $j$. If a conditional obligation $\prop \Rightarrow \obg$ is \textit{forced} at $j$, and $\obg$ is triggered at $j$, then $\obg$ is \textit{forced} at $j$.
    
    \item \textbf{Blocked}: An obligation $\obg$ (triggered at $i$) is \textit{blocked} at time point $j$ if it is \textit{active} and there is an obligation $\obg'$ such that (1) $\obg'$ is \textit{forced} at time point $j$; (2) if $\obg = \within{\event}{\term} $ and $\obg' = \within{\neg \event}{\term'}$ then $\measureassign{i}(\term) + \timestamp{i} \le \measureassign{i'} (\term') + \timestamp{i'}$; and (3) if $\obg = \within{\neg \event}{\term} $ and $\obg' = \within{ \event}{\term'}$ then $\measureassign{i}(\term) + \timestamp{i} \ge \measureassign{i'} (\term') + \timestamp{i'}$. A conditional obligation $\prop \Rightarrow \obg$ is \textit{blocked} at $j$ if $\obg$ is \textit{blocked} $j$ and $\prop$ evaluates to $\top$ at $j$. An obligation chain $\cobg^+ \otherwise \dobg$ is \textit{blocked} at time point $j$ if $\cobg^+$ is \textit{blocked} at some time point $j$ and $\dobg$ is \textit{blocked} at time point $j'$ when $\cobg$ expires (s.t., $\timestamp{j'} = \timestamp{j} + \measureassign{j}$).
    
    Noticed that there are circular dependencies between forced and blocked obligation chains. Fortunately, by encoding the status definitions into \fol, we can leverage \fol solver's ability to incrementally and lazily unroll the necessary definitions to resolve the dependencies.
    
\end{itemize}
\end{definition}

% The status of obligations at time point $k$ is determined by the historical states leading up to $k$ (i.e. the prefix of a trace $\fos^k_0$) and the future measures after $k$ (i.e., $\Vec{\measures}_{k}$).  When a rule is triggered at time point $k$ while its response (the entire obligation chain) is blocked given a situation ($\fos^k_0$, $\Vec{\measures}_{k}$) , then a conflict is inevitable.

% \begin{lemma}\label{lemma:sc}
%      For every rule $\srule = \rulesyntax{\event \wedge \prop}{\dobg}$ in a rule set $\ruleset$, if there exists a $\srule$-triggering situation ($\fos^k_0$, $\Vec{\measures}_{k}$) (see Def.~\ref{def:situation}) 
%     where the obligation chain $\dobg$ is blocked at time point $k$, 
%     then $\srule$ is situationally conflicting with respect to the situation ($\fos^k_0$, $\Vec{\measures}_{k}$).
% \end{lemma}

We now provide the Proof of Lemma~\ref{lemma:sc}.
\begin{sop}
    Proof by contradiction, we assume there exists a trace $\fos$ such that $\fos$ is an extension to $\fos_0^k$ and is also consistent with $\Vec{\measures_k}$. 
    Since $\fos \in \lang{\ruleset}$, then $\fos$ fulfill the obligation chain $\dobg = (\cobg_1, \ldots \cobg_n)$ triggered at $k$ ($\fos \models_k \dobg$). Therefore, by the semantics of obligation chain fulfillment, either $\fos \models_k \cobg_1$ or $\cobg_1$ is positive and there exists a time point $k' \ge k$ such that $\fos \not\models_{k}^{k'} \cobg_1$ and $\fos \models_{k'} (\cobg_2, \ldots \cobg_n)$

    Since $\dobg$ is blocked at $k$, by Def.~\ref{def:triggered}, the obligation $\obg_1$ in $\cobg_1$ is blocked at $k$, and for every conditional obligation $\cobg_m$,  the obligation $\obg_m$ is $\cobg_m$ is blocked at the unique time when $\obg_{m-1}$ is violated (the violation time is unique since $\cobg_1 \ldots \cobg_{n-1}$ are all positive). Therefore, it is sufficient to show that if an obligation $\obg$ is blocked, then $\fos$ does not fulfill $\obg_{m}$. There are two cases: 

    First, we consider the case $\obg = \within{\event}{\term}$. There is an event occurred at some time point $k \ge j$ where $\timestamp{j} \le \timestamp{k} \le \timestamp{j} + \measureassign{j}$. Since $\obg$ is blocked, then there exists an obligation forced by some rule $
\srule'$ (triggered at $j'$)
    such that $\obg' = \within{\neg \event}{\term'}$ where $\measureassign{j'}(\term') \ge \measureassign{j}$. Therefore, the occurrence of $\event$ at time point $k$ violates $\srule'$. Contradiction.

    The case where $\obg = \within{\neg \event}{\term}$ can be proved analogously. \qed
\end{sop}

\begin{remark}\label{remark:tanssc}
The sufficient condition for situational conflict defined in Lemma~\ref{lemma:sc} is not a necessary condition, as some situational conflicts do not require a rule's response to be blocked at the last state of the situation. Let's consider the set of rules $\{\srule_1, \srule_2, \srule_3, \srule_4\}$, where $\srule_{1} = \rulesyntax{\event_1}{\within{\event_2}{5}}$, $\srule_{2} = \rulesyntax{\event_3}{\within{\neg \event_2}{4}}$, $\srule_{3} = \rulesyntax{\event_4}{\within{\event_3}{3}}$, and $\srule_{4} = \rulesyntax{\event_1}{\within{\neg \event_3}{1}}$. The rule $\srule_1$ is situational conflicting in the situation ($\fos^1$, *) where $\fos^1 = ({\srule_1, \srule_2, \srule_3, \srule_4}, \measureassign{1}, 1)$ because, according to $\srule_1$ and $\srule_2$, $\event_{2}$ must occur within the interval $(4, 5]$. Additionally, based on $\srule_3$ and $\srule_4$, the event $\event_{3}$ must occur at a time $t \in (1, 3]$. For all possible values of $t$, according to $\srule_{2}$, $\event_{2}$ cannot occur within the interval $(t, t+4]$, which covers the interval $(4, 5]$ and thus conflicts with $\srule_1$. However, in the situation $\fos^1$, the obligation of $\srule_1$ is not blocked at time point $1$. We refer to the situational conflicts caused by forced obligations ``after'' the situation as``transitive situation conflicts''. Identifying situations that lead to transitive situation conflicts is not easily expressed as satisfiability (e.g., we need to cover the entire range of $t$ in the example) and is left as future work.
\end{remark}

We present the \fol encoding in Fig.\ref{tab:sctranslate} to describe the situation for a rule to be situational conflicting. Given a set of rules $\ruleset$ and a rule $\srule \in \ruleset$, every satisfying solution to the \fol formula $\sctranslate{\ruleset, \srule}$ represents a situation where $\srule$ is situational conflicting. The top-level encoding $\sctranslate{\srule, \ruleset}$ is presented in part (1) of Tab.\ref{tab:sctranslate}, which describes the existence of a situation ($\fos^k_0$, $\Vec{\measures_k}$) where $\srule$ is triggered at the last state of $\fos^k_0$ ($\obj{\measures}$). The situation should be non-violating ($\timedtranslate{\ruleset, \obj{\measures}.\textit{time}}$) and should block the response of $\srule$ ($\blocked{\dobg} {\obj{\measures}}{\obj{\measures}}$). The \fol encoding for non-violating and obligation blocking is presented in parts (2) and (3) of Tab.~\ref{tab:sctranslate}, respectively.

Note that eagerly expanding the definition of obligation blocking blows up the size of encoding exponentially (with respect to the number of obligations in $\ruleset$) due to the transitive dependencies between blocked obligations and forced obligations. To avoid the blow-up, we lazily expand the definition of obligation blocking by introducing an \textit{internal} class of relational object $\Class{\textit{block}{\obg}}$ for every obligation $\obg$ in $\ruleset$ to indicate if and when $\obg$ is blocked. The axiom $\textit{axiomBlock}{\obg}$ is added to describe the definition of a blocked obligation (Def.~\ref{def:triggered}) and is lazily applied to relational objects of $\Class{\textit{block}_{\obg}}$ in a given domain.

\begin{figure*}
    \scalebox{0.7}{
    \begin{tabular}{lcl}
                \centering   $\sctranslate{\rulesyntax{\event \wedge \prop}{\dobg}, \ruleset}$ & \multicolumn{2}{c}{\makecell{$\rightarrow \exists \obj{\event}:\Class{\event}\exists \obj{\measures}:\Class{\measures}(\obj{\measures}.\textit{time} = \obj{\event}.\textit{time} \wedge T^*(\prop, \obj{\measures}) \wedge \blocked{\dobg} {\obj{\measures}}{\obj{\measures}}$ \\
                $ \wedge \timedtranslate{\ruleset, \obj{\measures}.\textit{time}}) \wedge \textit{axiom}_{mc} \wedge \textit{axiomBlock}_{\obg}$ for every obligations $\obg$ in  $\ruleset$ }}  \\
                \hline
                \hline
                $\timedtranslate{\rulesyntax{\event \wedge \prop}{\dobg}, \textit{end\_time}}$ & \multicolumn{2}{c}{\makecell{$\rightarrow$ $\forall \obj{\event}: \Class{\event}(\obj{\event}.\textit{time} \le end\_time \Rightarrow \exists \obj{\measures}:\Class{\measures}$ \\$(\obj{\measure}.\textit{time} = \obj{\event}.\textit{time} \wedge (T^{*}(\prop, \obj{\measures}) \Rightarrow \timedtranslatestar{\dobg, \obj{\measures}, \textit{end\_time}})))$}}  \\
                \hline
                
                $\timedtranslatestar{\obg_1 \otherwise \ldots \obg_n, \obj{\measures}, \textit{end\_time}}$& \multicolumn{2}{c}{$\rightarrow$ $\timedtranslatestar{\obg_1, \obj{\measures}, \textit{end\_time}} \vee \exists \obj{\measures}_v:\Class{\measures} (\textbf{violation}(\obg_1, \obj{\measures}, \obj{\measures}_v) \vee 
    \timedtranslatestar{\obg_n, \obj{\measures}_v, \textit{end\_time}})$} \\
    \hline
                $\timedtranslatestar{\within{\event}{\term},  \obj{\measures}, \textit{end\_time}}$&
                \multicolumn{2}{c}{$\rightarrow$ $\exists \obj{\event}:\Class{\event}(\obj{\measures}.\textit{time} \le \obj{\event}.\textit{time} \le \obj{\measures}.\textit{time} + T^{*}(\term, \obj{\measures}) \; \vee \; \obj{\measures}.\textit{time} + T^{*}(\term, \obj{\measures}) > \textit{end\_time} $}\\
                \hline

                $\timedtranslatestar{\within{\neg \event}{\term},  \obj{\measures}, \textit{end\_time}}$&
                \multicolumn{2}{c}{$\rightarrow$ $\neg (\exists \obj{\event}:\Class{\event}(\obj{\measures}.\textit{time} \le \obj{\event}.\textit{time} \le \textsc{Min}(\obj{\measures}.\textit{time} + T^{*}(\term, \obj{\measures}), ,\textit{end\_time})) $}\\
                
                \hline\hline

                $\blocked{(\prop \Rightarrow \within{\event}{\term}) \otherwise \dobg}{\obj{\measures}_i}{\obj{\measures}_c}$ & 
                \multicolumn{2}{c}{\makecell{$\rightarrow$ $\blocked{(\prop \Rightarrow \within{\event}{\term})} {\obj{\measures}_i}{\obj{\measures}_c} \wedge \exists \obj{\measures}_v:\Class{\measures}$
                \\$(
                \obj{\measures}_v.\textit{time} = \obj{\measures}_i.time + T*(\term, \obj{\measures}_i)
                )
                \wedge\blocked{ \dobg}{\obj{\measures}_v}{\obj{\measures}_c} $ 
                }
                } \\
                \hline

                $\blocked{\prop \Rightarrow \obg}{\obj{\measures}_i}{\obj{\measures}_c}$ & 
                \multicolumn{2}{c}{$\rightarrow T^*(\prop, \obj{\measures}_i) \wedge \blocked{ \obg}{\obj{\measures}_i}{\obj{\measures}_c}$} \\
                \hline
                
                $\blocked{\obg}{\obj{\measures}_i}{\obj{\measures}_c}$ & 
                \multicolumn{2}{c}{$\rightarrow \exists \obj{\textit{block}_{\obg}}:\Class{\textit{block}_{\obg}}(\obj{\textit{block}_{\obg}}.\textit{i} = \obj{\measures}_i.\textit{time} \wedge \obj{\textit{block}_{\obg}}.\textit{c} = \obj{\measures}_c.\textit{time}) $} \\
                    \hline
                $\textit{axiomBlock}_{\obg}$ & \multicolumn{2}{c}{\makecell{$\rightarrow$ $\forall \obj{\textit{block}_{\obg}}:\Class{\textit{block}_{\obg}}\exists \obj{\measures}_i,\obj{\measures}_c:\Class{\measures} $ \\ $(\obj{\measures}_i.\textit{time} = \obj{\textit{block}_{\obg}}.i \wedge \obj{\measures}_c.\textit{time} = \obj{\textit{block}_{\obg}}.c \wedge  \intblocked{\obg} {\obj{\measures}_i} {\obj{\measures_c}})$}}\\
                \hline
                $\intblocked{\within{\event}{\term}} {\obj{\measures}_i} {\obj{\measures_c}}$ & \multicolumn{2}{c}{\makecell{$ \rightarrow \act{\within{\event}{\term}}{ \obj{\measures}_i}{\obj{\measures_c}} \wedge \bigvee_{\obg \in \obgbyhead{\neg \event}} (\exists \obj{\measures}_1:\Class{\measures}(\obj{\measures}_1.\textit{time} \le \obj{\measures}.time) $ \\ $\wedge \forced{\obg}{\obj{\measures}_1}{\measures_c} \wedge (\obj{\measures}.\textit{time} + 
                T^{*}(\term, \obj{\measures}) \ge \obj{\measures}_1.\textit{time} + 
                T^{*}(\term_1, \obj{\measures}))$ where $\term_1$ is $\obg$'s time limit }}\\
                \hline
                $\intblocked{\within{\neg \event}{\term}} {\obj{\measures}_i} {\obj{\measures_c}}$ & \multicolumn{2}{c}{\makecell{$ \rightarrow \act{\within{\neg \event}{\term}}{ \obj{\measures}_i}{\obj{\measures_c}} \wedge \bigvee_{\obg \in \obgbyhead{\event}} (\exists \obj{\measures}_1:\Class{\measures}(\obj{\measures}_1.\textit{time} \le \obj{\measures}.time) $ \\ $\wedge \forced{\obg}{\obj{\measures}_1}{\measures_c} \wedge (\obj{\measures}.\textit{time} + 
                T^{*}(\term, \obj{\measures}) \le \obj{\measures}_1.\textit{time} + 
                T^{*}(\term_1, \obj{\measures}))$ where $\term_1$ is $\obg$'s time limit }}\\
\hline
                
                $\act{\obg}{ \obj{\measures}_i}{\obj{\measures}_c}$ & \multicolumn{2}{c}{$\rightarrow$ $\triggered{\obg}{ \obj{\measures}_i} \wedge \neg \violated{\obg}{ \obj{\measures}_i}{\obj{\measures}_c}  \wedge \neg \fullfilled{\obg} {\obj{\measures}_i}{ \obj{\measures}_c} $}\\
\hline
                $\fullfilled{\within{\event}{\term},  \obj{\measures}_i, \obj{\measures}_c}$&
                \multicolumn{2}{c}{$\rightarrow$ $T^{*} (\within{\event}{\textsc{Min}(\term, \obj{\measures}_{c}.\textit{time})), \obj{\measures}}$  }\\
\hline
                $\fullfilled{\within{\neg \event}{\term}}{  \obj{\measures}_i}{ \obj{\measures}_c}$&
                \multicolumn{2}{c}{$\rightarrow$ $\timedtranslatestar{\within{\neg \event}{\term},  \obj{\measures}, \obj{\measures}_{c}.\textit{time}}$}\\
\hline

                $\violated{\obg}{ \obj{\measures}_i} {\obj{\measures}_c}$ &
                \multicolumn{2}{c}{$\rightarrow$ $\fullfilled{\noncompliant{\obg}, \obj{\measures}_i, \obj{\measures}_{c}.\textit{time}}$  }\\
\hline
               
                $\triggered{\obg}{\obj{\measures}}$ & \multicolumn{2}{c}{\makecell{$\rightarrow$ let $\textsc{trigger\_rule}(\obg) = \rulesyntax{\event \wedge \prop}{\dobg}$ where $(\prop_m \Rightarrow \obg) = \dobg[m]$ \\ if $m=1$ then $\exists \obj{\event}:\Class{\event}(\obj{\event}.\textit{time} = \obj{\measures}.\textit{time} \wedge T^*(\prop \wedge \prop_m, \obj{\measures}))$ \\
                else $\exists \obj{\measures}_i:\Class{\measures}(
                \violated{\dobg[m-1]}{ \obj{\measures}_i}{ \obj{\measures}} \vee \blocked{\dobg[m-1]}{\obj{\measures}_i}{\obj{\measures}})
                )
                $}}\\
\hline

                $\forced{\prop \Rightarrow \obg}{\obj{\measures}_i}{\obj{\measures}_c}$ where $\cobg = \dobg[m]$ & \multicolumn{2}{c}{\makecell{$\rightarrow$ $T^*(\prop, \obj{\measures}_i) \wedge \act{\obg}{\obj{\measures}_i}{\obj{\measures}_c} \wedge \neg (\blocked{\obg}{\obj{\measures}_i}{\obj{\measures}_c}) \wedge  $ \\
                $\exists \obj{\measures}_v:\Class{\measures}(\textbf{Violation}(\obg, \obj{\measures}_i,
                \obj{\measures}_v) \wedge \blocked{\dobg[m+1:]}{\obj{\measures}_v}{\obj{\measures}_c} )$ }}\\
                
                \hline
                \hline
                $\obgbyhead{h}$ & \multicolumn{2}{c}{$= \{ \obg  \mid (\within{h}{\term}) \text{ in the rule set} \}$} \\
               $\textsc{trigger\_rule}(\obg) = \rulesyntax{\event}{\dobg}$ & \multicolumn{2}{c}{if and only if $\obg \in \dobg$} \\ 
                \hline
                \hline
                
                \end{tabular}  
        }
\caption{\small The \fol encoding $\sctranslate{\srule, \ruleset}$  that 
describes a situation where the rule $\srule \in \ruleset$ to be situational conflicting. The table contains three parts: (1) the top-level encoding that describes the existence of a situation where $\srule$ is triggered at the last state ($\obj{\measures}$); (2) the \fol constraint for describing the situation is non-violating (i.e., $\timedtranslate{\ruleset, \obj{\measures}.\textit{time}})$); and (3) the \fol encoding for describing blocked obligations (i.e., $\blocked{\dobg} {\obj{\measures}_i}{\obj{\measures}_c}$) as well as another status of obligations where ${\obj{\measures}}_i$ is the state when the obligation is triggered and ${\obj{\measures}}_c$ is the last state of the situation.} 
\label{tab:sctranslate}
\end{figure*}

\begin{theorem}[Encoding of Situational Conflict]\label{thm:condconfencoding}
Let a rule set $\ruleset$ be given. For every rule $\srule \in \ruleset$, if the \fol formula $\sctranslate{\srule, \ruleset}$ is satisfiable, then $\srule$ is situationally conflicting.
\end{theorem}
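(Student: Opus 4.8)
The plan is to extract, from any satisfying solution $\sigma = (\domain, v)$ of $\sctranslate{\srule, \ruleset}$, a concrete $\srule$-triggering situation $(\fos^k_0, \vec{\measures}_k)$ that meets the sufficient condition of Lemma~\ref{lemma:sc}, and then invoke that lemma. First I would reuse the trace construction of Theorem~\ref{thm:transcorrectness}: map each event object $\obj{\event}$ with $v(\obj{\event}.\textit{ext}) = \top$ and each measure object $\obj{\measures}$ to a state indexed by its $\textit{time}$ attribute, ordering states by strictly increasing time. Let $\obj{\measures}$ be the witness supplied by the outermost existential of $\sctranslate{\srule,\ruleset}$ and let $k$ be the index of the state it yields; then $\fos^k_0$ is the prefix up to $k$ and $\vec{\measures}_k$ collects the (possibly partial) measure assignments the solution fixes beyond $k$.

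Second, I would discharge the two hypotheses of Lemma~\ref{lemma:sc}. That $(\fos^k_0, \vec{\measures}_k)$ is $\srule$-triggering (Def.~\ref{def:situation}) is immediate from the top-level conjuncts $\obj{\measures}.\textit{time} = \obj{\event}.\textit{time}$ and $T^{*}(\prop, \obj{\measures})$, which place the trigger event $\event$ in $\eventocc{k}$ and make $\prop$ hold at $k$, using the correctness of $T^{*}$ on propositions from Theorem~\ref{thm:transcorrectness}. Feasibility, $\fos^k_0 \vdash^k \ruleset$, follows from the conjunct $\timedtranslate{\ruleset, \obj{\measures}.\textit{time}}$: here I would first prove a bounded analogue of Theorem~\ref{thm:transcorrectness}, stating that $\timedtranslate{\ruleset, t}$ holds in $\sigma$ iff the constructed trace satisfies the bounded semantics $\vdash^k$ of Appendix~\ref{ap:extendedsem} up to the state with time $t$. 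This is a structural induction mirroring the one for $T$, with the deadline guard inside $\timedtranslatestar{\cdot}$ realising the ``$j \le k$'' cut-off of the bounded semantics.

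The crux is the third obligation: from $\blocked{\dobg}{\obj{\measures}}{\obj{\measures}}$ I must conclude that the obligation chain $\dobg$ of $\srule$ is semantically blocked at time point $k$ in the sense of Def.~\ref{def:triggered}, so that Lemma~\ref{lemma:sc} applies. I would establish, by well-founded induction, a single soundness correspondence for the whole family of status predicates simultaneously: whenever $\sigma$ satisfies $\triggered{\obg}{\cdot}$, $\act{\obg}{\obj{\measures}_i}{\obj{\measures}_c}$, $\fullfilled{\obg}{\obj{\measures}_i}{\obj{\measures}_c}$, $\violated{\obg}{\obj{\measures}_i}{\obj{\measures}_c}$, $\forced{\cobg}{\obj{\measures}_i}{\obj{\measures}_c}$ or $\blocked{\obg}{\obj{\measures}_i}{\obj{\measures}_c}$, the corresponding semantic status of Def.~\ref{def:triggered} holds in the constructed trace. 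The delicate point is the mutual recursion between \emph{Blocked} and \emph{Forced} noted after Def.~\ref{def:triggered}: a blocked obligation requires a forced conflicting obligation, while a forced chain requires its tail to be blocked. Since the encoding breaks this cycle lazily through the internal class $\Class{\textit{block}_{\obg}}$ and the guarded axiom $\textit{axiomBlock}_{\obg}$, my induction measure must combine (i) the suffix length of the obligation chain, which strictly decreases when $\forced{\cobg}{\cdot}{\cdot}$ unfolds a chain to its tail via $\intblocked{\cdot}{\cdot}{\cdot}$, and (ii) the finite number of $\Class{\textit{block}_{\obg}}$ objects actually instantiated in $\domain$, which bounds how often the axioms can fire. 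I would then argue that any finite model of the lazily instantiated axioms witnesses a finite, acyclic unfolding, so the deadline-coverage conditions (2)--(3) of the \emph{Blocked} clause of Def.~\ref{def:triggered} are genuinely met.

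Finally, with the situation shown $\srule$-triggering, feasible, and with $\dobg$ blocked at $k$, Lemma~\ref{lemma:sc} yields that $\srule$ is situationally conflicting with respect to $(\fos^k_0, \vec{\measures}_k)$, matching Def.~\ref{def:conditionconflictfull}, which completes the argument. I expect the main obstacle to be precisely the soundness of this lazy block/forced encoding: ruling out spurious ``self-supporting'' cycles of $\Class{\textit{block}_{\obg}}$ objects that would certify blocking vacuously, and confirming that the inequalities in $\intblocked{\cdot}{\cdot}{\cdot}$ truly express that a positive window is covered by the continuous union of negative windows (and conversely), which is where the continuity of \dsl time windows is essential.
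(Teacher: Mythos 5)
Your proposal follows essentially the same route as the paper's own proof, which is given only as a sketch: construct the situation from the satisfying solution using the construction of Theorem~\ref{thm:transcorrectness}, show that the encoding of Tab.~\ref{tab:sctranslate} conforms to the semantics of non-violation and of the obligation statuses in Def.~\ref{def:triggered}, and conclude via the sufficient condition of Lemma~\ref{lemma:sc}. The paper states the conformance step as a one-line claim, so your bounded-semantics analogue and the well-founded induction over the mutually recursive blocked/forced predicates are precisely the details its sketch leaves implicit, not a departure from its approach.
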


\begin{sop}
If $(\domain, v)$ is a satisfying solution to $\sctranslate{\srule, \ruleset}$, then we use the same  method in the proof of Thm.~\ref{thm:transcorrectness} to construct a situation $\fos$ from $(\domain, v)$. We then show that the encoding in Tab.~\ref{tab:sctranslate} conforms with the semantics of non-violation and the status of obligations (in Def.~\ref{def:triggered}). Finally, we show that the constructed $\fos$ satisfies the sufficient condition for situational conflict (Lemma.~\ref{lemma:sc}), and thus $\srule$ is situational conflicting w.r.t $\fos$. \qed
\end{sop}

\section{Derivation Tree and proof reduction}
\label{ap:treereduction}
Given a refutation proof, one can construct a \textit{derivation graph} where every lemma is a node and its dependencies are the incoming edges to the node. The roots of the derivation graph are the input formulas and axioms (e.g., $\textit{axiom}_{mc}$) and the (only) leaf of the graph is the derived $\bot$. 

% \begin{example}\label{example:proof2}
%     Continue from Example.~\ref{example:proof1}, the derivation graph for the proof of UNSAT in Fig.~\ref{fig:simpleproofexample} is shown in Fig.~\ref{fig:proofexample}. 
% \end{example}

Using the derivation graph, one can check the soundness of the proof and reduce it by traversing the graph backwards from the leaf (step 8 in Fig.~\ref{fig:simpleproofexample}). While visiting a node, we first check if the lemma represented by the node can be soundly derived using the derivation rule with the lemmas in its dependency, and then reduce the dependency if not every lemma is necessary. Only the nodes representing the lemmas in the reduced dependencies are scheduled to be visited in the future. For instance, after checking the derivation step 7, its dependencies, 5 and 8, are scheduled to be visited next. If every scheduled node is visited without any failure, the proof is successfully verified, and the visited portion of the graph constitutes \emph{the reduced proof}. In a reduced proof, every derived lemma is used to derive $\bot$. 

We use two special derivation rules, \textbf{Input} and \textbf{Implication}.  
The rule \textbf{Input} adds an input \fol formula as a fact to the proof. 
The rule \textbf{Implication} derives new QF lemmas via logical implication, and it can be verified using an SMT solver by solving the formula $deps \wedge \neg C$ where $C$ is the derived lemmas and $deps$ are lemmas in the implication step's dependencies. 
The derivation is \emph{valid} if and only if the formula is UNSAT, and the UNSAT core returned by the SMT solver becomes the reduced dependencies.  
% {\bf MC:  I do not follow the logical order of presentation here}

\begin{example}[Reduced Dependencies]\label{example:verifyimp}
    Let $L1: A > B$, $L2:  B > C$ and $L3:  C > 5$ be  three (derived) lemmas. Suppose a lemma $L4: A > C$ is derived using the \textbf{implication} rule given the dependencies $L1$, $L2$, $L3$. The rule can be verified by checking the satisfiability of $L1 \wedge L2 \wedge L3 \wedge \neg L4$. The result is UNSAT with an UNSAT core $L1$, $L2$ and $\neg L4$. Therefore, the reduced dependencies are $L1$ and $L2$.
\end{example}

\section{Condition for involved atomic elements}\label{ap:iae}

\begin{definition}[Involved atomic element]
Let a proof $L$ be given. We denote $\textit{Imp}(L)$ as the set of QF lemmas derived via or listed as dependencies for the derivation rule \textbf{Impl}.
An atomic proposition $\aprop$ is \emph{involved} if $\textit{Imp}(L)$ contains the quantifier-free (QF) formula $T^*(\aprop, \obj{\measures})$ for some relational object of class $\Class{\measures}$ where $T^*$ is the translation function for \dsl element defined in Tab.~\ref{tab:translate}. A triggering event $\event$ for \quoted{$\rulesyntax{\event \wedge \prop}{\dobg}$} is involved if $\textit{Imp}(L)$ contains a QF formula $\neg (\obj{\event}.\textit{ext}) \vee \neg T^*(\prop, \obj{\measures}) \vee F$ where $F$ is a QF formula, and both $\obj{\event}$ and $\obj{\measures}$ are relational objects of class $\Class{\event}$ and $\Class{\measures}$, respectively. Similarly, a triggering event $\event$ for \quoted{$\factsyntax{\event \wedge \prop}{\dobg}$} is involved if $Imp(L)$ contains a formula $\obj{\event}.\textit{ext} \wedge T^*(\prop, \obj{\measures}) \wedge F$. An obligation head $\event$ for $\within{\event}{\term}$ is involved if $Imp(L)$ contains a QF lemma $\obj{\event}.\textit{ext} \wedge \obj{\event}.\textit{time} \ge \obj{\measures}.\textit{time} \wedge F$ 
for some object $\obj{\event}$ and $\obj{\measures}$. An obligation deadline $\term$ for $\within{\event}{\term}$ is involved if $Imp(L)$ contains a quantifier-free lemma $\obj{\event}.\textit{time} \le \obj{\measures}.\textit{time} + T^*(\term, \obj{\measures}) \wedge F$ 
for some object $\obj{\event}$ and $\obj{\measures}$.
    
\end{definition}

\section{Additional Evaluation Results} \label{ap:eval}

To ensure correctness, we augmented the proof of absence of vacuous conflicts produced by N-Tool by integrating feedback from AutoCheck, an existing tool for analyzing vacuous conflicts. The results can be seen in Tbl.~\ref{tab:autoCheckVacous}.

 \begin{table}[h]
     \centering
     \begin{tabular}{c c c}
        \toprule
          case studies &  T-Tool & AutoCheck \\ \toprule
          ALMI &  0 & 0 \\
          \cellcolor{gray!10} ASPEN & \cellcolor{gray!10}0 & \cellcolor{gray!10}0 \\
          AutoCar &  0 & 0 \\
          BSN \cellcolor{gray!10}&  \cellcolor{gray!10}0 & \cellcolor{gray!10}0 \\
          DressAssist &  0 & 0 \\
          \cellcolor{gray!10}CSI-Cobot &  \cellcolor{gray!10}0 & \cellcolor{gray!10}0 \\
          DAISY &  0 & 0 \\
          \cellcolor{gray!10}DPA &  \cellcolor{gray!10}0 & \cellcolor{gray!10}0 \\
          SafeSCAD &  0 & 0 \\
        \bottomrule
     \end{tabular}
     \caption{Vacuous conflicts identified by N-Tool compared to AutoCheck}
     \label{tab:autoCheckVacous}
 \end{table}

\end{document}